\newdimen\CdotAxis
\newcommand*{\CdotAux}[3]{%
  {%
    \settoheight\CdotAxis{$#2\vcenter{}$}%
    \sbox0{%
      \raisebox\CdotAxis{%
        \scalebox{#1}{%
          \raisebox{-1.1pt}{%
            $\mathsurround=0pt #2#3$%
          }%
        }%
      }%
    }%
    % Remove depth that arises from scaling.
    \dp0=0pt %
    % Decrease scaled height.
    \sbox2{$#2\bullet$}%
    \ifdim\ht2<\ht0 %
      \ht0=\ht2 %
    \fi
    % Use the same width as the original \cdot.
    \sbox2{$\mathsurround=0pt #2#3$}%
    \hbox to \wd2{\hss\usebox{0}\hss}%
  }%
}
\newcommand{\comment}[1]{}
\newcommand{\D}                       { {\mathsf{com}} }
\newcommand{\LC}     {{\mathsf{G}}}
\newcommand{\Language}                 {\mathcal{L}}
\newcommand{\proj}                     { {\mathsf{\uppi}} }
\newcommand{\inj}                   {{{\upiota}}}
\newcommand{\emp}[1]    {{\mathsf{P}_{#1}}}
\newcommand{\Hyp}[3]                   {{\mathtt{H}_{#1}^{\forall {#3} {#2}}}}
\newcommand{\wit}                 {{\mathsf{wit}_{\emp{}}}}
\newcommand{\Wit}[3]               {{\mathtt{W}_{#1}^{\exists {#3} {#2} }}}
\newcommand{\pair}[2]{\langle #1,#2\rangle}
\newcommand{\nf}{\mathsf{NF}}
\newcommand{\Ecrom}[3]{#2 \parallel_{#1} #3}
\newcommand{\abort}    {{\mathcal{A}}}
\newcommand{\efq}[2]   {{\mathsf{efq}_{#1}(#2)}}
\newcommand{\sml}[1] {{\scriptscriptstyle #1}}
\newcommand{\sq}[1]  {{\boldsymbol{#1}}}
\newcommand{\nor}[1] {{\mathcal{#1}}}
\newcommand{\DER}{\vdash}
\newcommand{\THEN}{\Rightarrow}
\newcommand{\IMPL}{\rightarrow}
\newcommand{\ET}{\wedge}
\newcommand{\OR}{\vee}
\newcommand{\FAL}{\bot}
\newcommand{\SET}{\bigwedge}
\newcommand{\COM}{(\mathsf{com})}
\newcommand{\ECONG}{\mathrm{E}}
\newcommand{\ICONG}{\mathrm{I}}
\newcommand{\COMB}{\mathsf{com}}
\newcommand{\COML}{\mathsf{com}_{l}}
\newcommand{\COMR}{\mathsf{com}_{r}}
\newcommand{\lamg}{\bblambda_{\mathrm{G}}}
\newcommand{\lamgtitle}{\bblambda_{\mathrm{G}}}
\newcommand{\NJ}{\mathbf{NJ}}
\newcommand{\NGod}{\mathbf{NG}}
\theoremstyle{plain}% default
\newtheorem{theorem}{Theorem}[section]
\newtheorem{lemma}[theorem]{Lemma}
\newtheorem*{theorem*}{Theorem}
\newtheorem{proposition}[theorem]{Proposition}
\theoremstyle{definition}
\newtheorem{definition}{Definition}[section]
\newtheorem{example}{Example}[section]
\theoremstyle{remark}
\newtheorem{remark}{Remark}[section]
\def\NJ{{\bf NJ}}
\newcommand{\seq}{\Rightarrow}
\def\hh{\ |\ } % hypersequent symbol
\def\Deduce#1{\hbox{$\hphantom{#1}$\kern\inferLabelSkip\DeduceSym\kern\inferLab
elSkip$#1$}}
\newcommand{\logic}[1]{\ensuremath{\mathbf #1}} 
\def\LC{\logic{G}}
\begin{document}
\IEEEoverridecommandlockouts

% paper title
% Titles are generally capitalized except for words such as a, an, and, as,
% at, but, by, for, in, nor, of, on, or, the, to and up, which are usually
% not capitalized unless they are the first or last word of the title.
% Linebreaks \\ can be used within to get better formatting as desired.
% Do not put math or special symbols in the title.
\title{G\"odel Logic:
from Natural Deduction to Parallel Computation}

% \markboth{G\"odel Logic:
% from Natural Deduction to Parallel Computation}{F. Aschieri,
% A. Ciabattoni and F. A. Genco}

% author names and affiliations
% use a multiple column layout for up to three different
% affiliations

\author{\IEEEauthorblockN{Federico Aschieri
}
\IEEEauthorblockA{Institute of Discrete Mathematics and Geometry\\TU Wien, Austria\\
% Email: federico.aschieri@tuwien.ac.at
}
\and
\IEEEauthorblockN{Agata Ciabattoni
% \thanks{Supported by FWF START project Y
% 544-N2.}
}
\IEEEauthorblockA{Theory and Logic Group\\TU Wien, Austria\\
% Email: agata@logic.at
}
\and
\IEEEauthorblockN{Francesco A. Genco
% \thanks{Funded by FWF project W1255-N23.}
}
\IEEEauthorblockA{Theory and Logic Group\\TU Wien, Austria\\
% Email: genco@logic.at
}
\thanks{Supported by FWF: grants M 1930--N35, Y544-N2,
and W1255-N23.}
}

% Telephone: (800) 555--1212\\
% Fax: (888) 555--1212

% conference papers do not typically use \thanks and this command
% is locked out in conference mode. If really needed, such as for
% the acknowledgment of grants, issue a \IEEEoverridecommandlockouts
% after \documentclass

% for over three affiliations, or if they all won't fit within the width
% of the page, use this alternative format:
% 
%\author{\IEEEauthorblockN{Michael Shell\IEEEauthorrefmark{1},
%Homer Simpson\IEEEauthorrefmark{2},
%James Kirk\IEEEauthorrefmark{3}, 
%Montgomery Scott\IEEEauthorrefmark{3} and
%Eldon Tyrell\IEEEauthorrefmark{4}}
%\IEEEauthorblockA{\IEEEauthorrefmark{1}School of Electrical and Computer Engineering\\
%Georgia Institute of Technology,
%Atlanta, Georgia 30332--0250\\ Email: see http://www.michaelshell.org/contact.html}
%\IEEEauthorblockA{\IEEEauthorrefmark{2}Twentieth Century Fox, Springfield, USA\\
%Email: homer@thesimpsons.com}
%\IEEEauthorblockA{\IEEEauthorrefmark{3}Starfleet Academy, San Francisco, California 96678-2391\\
%Telephone: (800) 555--1212, Fax: (888) 555--1212}
%\IEEEauthorblockA{\IEEEauthorrefmark{4}Tyrell Inc., 123 Replicant Street, Los Angeles, California 90210--4321}}

% use for special paper notices
%\IEEEspecialpapernotice{(Invited Paper)}

% make the title area
\maketitle

% As a general rule, do not put math, special symbols or citations
% in the abstract
\begin{abstract}
Propositional G\"{o}del logic $\LC$ extends intuitionistic logic with the non-constructive principle of linearity \mbox{$(A \IMPL B ) \OR (B \IMPL A)$}. 
We introduce a Curry--Howard correspondence for 
$\LC$ and show that a simple natural deduction 
calculus can be used as a typing system. 
The resulting functional language extends the simply typed $\lambda$-calculus via a synchronous communication mechanism between parallel processes, which increases its expressive power.
The normalization proof employs original
termination arguments and proof transformations 
implementing forms of code mobility. 
Our results provide a computational interpretation of $\LC$, thus 
proving  
A.\ Avron's 1991 thesis.
\end{abstract}

% no keywords

% \keywords
% natural deduction, G\"{o}del logic, hypersequent calculi,
% Curry--Howard, normalization, parallel lambda calculus.

% \category{F.4.1}{Mathematical Logic}{Lambda calculus and related systems}
% \category{F.4.1}{Mathematical Logic}{Proof theory}

% For peer review papers, you can put extra information on the cover
% page as needed:
% \ifCLASSOPTIONpeerreview
% \begin{center} \bfseries EDICS Category: 3-BBND \end{center}
% \fi %
% For peerreview papers, this IEEEtran command inserts a page break and
% creates the second title. It will be ignored for other modes.
\IEEEpeerreviewmaketitle

\section{Introduction}
 Logical proofs  are static. Computations are dynamic.
It is a striking discovery that the two coincide: formulas correspond
to types in a programming language, logical proofs to programs of the corresponding types and removing detours from proofs to
 evaluation of programs.
This correspondence, known as Curry--Howard isomorphism, 
was first discovered for constructive proofs, and in particular for intuitionistic natural deduction and 
typed $\lambda$-calculus \cite{Howard} and
later extended to classical proofs, despite their use of non-constructive principles,
such as the excluded middle \cite{deGrooteex, AschieriZH} or reductio ad absurdum \cite{Griffin, Parigot}. 

Nowadays various different logics (linear~\cite{CP2010}, modal~\cite{MCHP} ...) have been related to many different notions of  computation; the list is long, and we refer the reader to \cite{Wadler}.

\subsection*{G\"odel logic, Avron's conjecture and previous attempts}
Twenty-five years have gone by since Avron conjectured
in \cite{Avron91} that G\"odel logic $\LC$
\cite{goedel} -- one of the most useful and interesting logics
intermediate between intuitionistic and classical logic --
might provide a basis for parallel $\lambda$-calculi. 
Despite the  interest of the conjecture and despite various attempts, 
no Curry--Howard correspondence has so far been
provided for $\LC$. The main obstacle has been the lack of an
adequate natural deduction calculus. 
Well designed natural deduction inferences can indeed be  
naturally 
interpreted as program instructions, in particular as typed
$\lambda$-terms. Normalization~\cite{Prawitz}, which corresponds to
the execution of the resulting programs, can then be used to obtain
proofs only containing formulas that are
subformulas 
of some of the hypotheses or of the conclusion. However the
problem  of finding a natural deduction for $\LC$ with this
property, called analyticity, looked hopeless for decades.

All approaches explored so far to
provide a precise formalization of
$\LC$ as a logic for parallelism, either sacrificed
analyticity~\cite{  Aschieri2016}
 or tried to devise forms
of natural deduction whose structures mirror 
{\em hypersequents} -- which are sequents 
operating in parallel \cite{Avron96}. 
Hypersequents were indeed successfully
used in \cite{Avron91} to define an analytic calculus for $\LC$ and were intuitively 
connected to parallel computations: the key rule introduced by
Avron to capture the linearity axiom -- called {\em communication} -- enables
sequents to exchange their information and hence to ``communicate''.
The first analytic natural
  deduction calculus proposed for $\LC$ \cite{HyperAgata} uses indeed parallel intuitionistic derivations  
  joined together by the hypersequent separator. 
Normalization is obtained
 there only by translation into Avron's calculus: no
reduction rules for deductions and no
corresponding $\lambda$-calculus were provided. The former task
was carried out in \cite{BP2015}, that contains
a propositional hyper natural deduction with a normalization
procedure. The definition of a corresponding $\lambda$-calculus and
Curry--Howard correspondence are left as an open problem, which might
have a complex solution due to the elaborated structure of hyper
deductions.  Another attempt along the ``hyper line'' has been made in
\cite{Hirai}. However, not only the proposed proof system is not shown
to be analytic, but the associated $\lambda$-calculus is not a
Curry--Howard isomorphism: the computation rules of the 
$\lambda$-calculus are not related to proof transformations, i.e.\ {\em
Subject Reduction} does not hold.
\subsection*{$\lamg$: Our Curry--Howard Interpretation of G\"odel Logic}
We introduce a natural deduction and a Curry--Howard correspondence for propositional $\LC$. 
We add to the $\lambda$-calculus an operator that, from the programming viewpoint, represents
parallel computations and communications between them; from the logical viewpoint, the linearity axiom; and from the proof theory viewpoint, the
hypersequent separator among sequents.
We call the resulting calculus $\lamg$: parallel $\lambda$-calculus
for $\LC$. $\lamg$ relates to the natural deduction $\NGod$ for $\LC$
as typed $\lambda$-calculus relates to the natural deduction $\NJ$ for
intuitionistic logic $\mathbf{IL}$:
\tikzstyle{crc}=[circle, minimum size=7mm, inner sep=0pt, draw]
\begin{center}
\begin{tikzpicture}[node distance=1.5cm,auto,>=latex', scale=0.3]

\node [crc, align=center] (1) at (-10,3.5) {$\mathbf{IL}$};

\node[crc, align=center] (2) at (0,3.5) {$\NJ$};

\node [crc,align=center] (3) at (10,3.5) {$\lambda$};

\node [crc, align=center] (4) at (-10,0) {$\LC$};

\node[crc, align=center] (5) at (0,0) {$\NGod$};

\node [crc,align=center] (6) at (10,0) {$\lamg$};

\path[<->] (1) edge [thick, align=center] node {}(2);

\path[<->] (2) edge [double, thick, align=center] node
{}(3);

\path[<->] (4) edge [thick, align=center] node {\emph{Soundness
and}\\\emph{Completeness}}(5);

    \path[<->] (5) edge [double, thick, align=center] node {\emph{Curry--Howard}\\ \emph{correspondence}}(6);
  \end{tikzpicture}
\end{center}
We prove: the perfect match between computation steps and proof reductions in the Subject Reduction Theorem; the Normalization Theorem, by providing a  terminating reduction strategy for $\lamg$; the Subformula Property, as corollary.
The expressive power of $\lamg$ is illustrated through examples of programs and
connections with the $\pi$-calculus~\cite{Milner, sangiorgiwalker2003}.

The natural deduction calculus $\NGod$ that we use as type system for
$\lamg$ is particularly simple: it extends $\NJ$ with the $\COM$ rule
(its typed version is displayed below), which was first considered in
\cite{L1982} to define a natural deduction calculus for $\LC$, but
with no normalization procedure.  The calculus $\NGod$ follows the
basic principle of natural deduction that \emph{new axioms require new
computational reductions}; this contrasts with the basic principle of
sequent calculus employed in the ``hyper approach'', that \emph{new
axioms require new deduction structures}.  Hence we keep the calculus
simple and deal with the complexity of the hypersequent structure at
the operational side.  Consequently, the programs corresponding to
$\NGod$ proofs maintain the syntactical simplicity of
$\lambda$-calculus.  The normalization procedure for $\NGod$ extends
Prawitz's method with ideas inspired by hypersequent cut-elimination,
by normalization in classical logic \cite{AschieriZH} and by the
embedding in \cite{CG2016} between hypersequents and systems of rules
\cite{Negri:2014}; the latter shows that $\COM$ reformulates Avron's
communication rule.

The inference rules of $\NGod$ are decorated with $\lamg$-terms, so that we can directly read proofs as typed programs. The decoration of the 
$\NJ$ inferences is standard and the typed version of $\COM$ is 
\[
\infer[\D]{u \parallel_{a} v: C}{\infer*{u:C}{[a^{\scriptscriptstyle A\rightarrow B}: A\rightarrow B]} && \infer*{v:C}{[a^{\scriptscriptstyle B\rightarrow A}: B\rightarrow A]}}\]
 Inspired by~\cite{Aschieri2016}, we use the variable $a$ to represent a {\em private} communication
channel between the processes $u$ and $v$.
The computational reductions associated to 
$\parallel_{a}$ -- \emph{cross
reductions} -- 
enjoy a natural interpretation in terms of 
higher-order process passing, a feature which is not directly rendered through
communication by reference~\cite{perez2015} and is also present 
in higher-order $\pi$-calculus~\cite{sangiorgiwalker2003}.
Nonetheless cross reductions handle more subtle migration issues. 
In particular, a cross reduction can be activated whenever a
communication channel $a$ is ready to transfer information between two
parallel processes:
\[ \mathcal{C}[a\, u]\parallel_{a} \mathcal{D}[a\, v]
\]
Here $\mathcal{C}$ is a process containing a fragment of code $u$, and
$\mathcal{D}$ is a process containing a fragment of code
$v$. Moreover, $\mathcal{C}$ has to send $u$ through the channel $a$
to $\mathcal{D}$, which in turn needs to send $v$ through $a$ to
$\mathcal{C}$.
In general we cannot simply
send the programs $u$ and $v$: some
resources in the computational environment that are used by  $u$ and $v$
may become inaccessible from the new locations~\cite{codemobil}. 
Cross reductions solve the problem by exchanging the location of $u$ and $v$ and creating a new communication channel for
their resources. Technically, the channel takes care of \emph{closures} -- the contexts containing the definitions of the
variables used in a function's body \cite{landin}. Several 
programming languages such as JavaScript, Ruby or Swift provide 
mechanisms to support and handle closures.
In our case, they are the basis of a
\emph{process migration mechanism} handling the bindings between code fragments and their computational environments.
Cross reductions also improve the efficiency of programs by facilitating
partial evaluation of open processes (see Example~\ref{ex:code_mobility}).

\section{Preliminaries on G\"odel logic}
Also known as G\"odel--Dummett logic~\cite{dummett}, 
G\"odel logic $\LC$
naturally turns up in a number of different contexts; among them, 
due to the natural interpretation of its connectives as functions over the real interval $[0, 1]$,
$\LC$ is one of the best known `fuzzy logics', e.g.~\cite{MGO}.

Although  propositional $\LC$ is obtained by adding
the linearity axiom $(lin) $ $(A \IMPL B) \OR (B \IMPL A)$ 
to any proof calculus for intuitionistic logic,
analytic calculi for $\LC$ have only been defined in formalisms {\em extending} the sequent calculus.
Among them, arguably, the hypersequent calculus in \cite{Avron91} is the most successful 
one, see, e.g.,~\cite{MGO}. 
In general a hypersequent calculus is defined by incorporating
a sequent calculus (Gentzen's {\em LJ}, in case of $\LC$) as a
sub-calculus and allowing sequents to live in the context of finite
multisets of sequents. 
\begin{definition}
A \textbf{hypersequent} is a multiset of sequents, written as
$  \Gamma_1 \seq \Pi_1 \hh \dots \hh \Gamma_n \seq \Pi_n$
where, for all $i = 1, \dots n,$ $\Gamma_i \seq \Pi_i$ is an ordinary
sequent.  
\end{definition}
The symbol ``$|$'' is a meta-level disjunction; this is reflected by the 
presence in the calculus of the external structural rules of 
weakening and contraction, operating on whole sequents, rather than on formulas.  
The hypersequent design opens the possibility of defining new rules that allow the
``exchange of information'' between different sequents. It is this type
of rules which increases the expressive power of hypersequent calculi
compared to sequent calculi.  The additional rule employed in
Avron's calculus for $\LC$ \cite{Avron91} is the so called {\em
communication rule}, below presented in a slightly reformulated version
(as usual $G$ stands for a possibly empty hypersequent):

{\small
\[
\infer{G\hh \Gamma_1, A \seq C \hh  
\Gamma_2, B \seq D }{G \hh \Gamma_1, B \seq C & 
G\hh \Gamma_2, A \seq D}\]
}

\vspace{-0.5cm}
\section{Natural Deduction}\label{section-ND}
The very first step in the design of a Curry--Howard correspondence is
to lay a solid logical foundation. No architectural mistake is allowed
at this stage: the natural deduction must be structurally simple and
the reduction rules as elementary as possible.
We present such a natural deduction system $\NGod$ for G\"odel logic. $\NGod$
extends Gentzen's propositional natural deduction $\NJ$ (see~\cite{Prawitz}) with a rule accounting for axiom $(lin)$. 
We describe the reduction rules for transforming every $\NGod$ deduction into an analytic one and present the ideas 
behind the Normalization Theorem, which is proved in the $\lambda$-calculus framework in Section \ref{section-normalization}.

$\NGod$ is the natural deduction version of the sequent calculus with systems
of rules in~\cite{Negri:2014}; the latter embeds (into) 
Avron's hypersequent calculus for $\LC$.
Indeed~\cite{CG2016} introduces a mapping from (and into) derivations in Avron's calculus into (and from) derivations
in the {\em LJ} sequent calculus for intuitionistic logic with the addition of the system of rules
{\small
\[ \infer[(com_{end})]{\Gamma \THEN \Pi}{\infer*{\Gamma \THEN
\Pi}{\infer[(com_{1})]{A , \Gamma_{1} \THEN C}{B, \Gamma _{1}
\THEN C}} & \infer*{\Gamma \THEN \Pi}{\infer[(com_{2})]{B,
\Gamma _{2} \THEN D}{A, \Gamma _{2} \THEN D}}}
\]
}
where $(com_{1}) , (com_{2})$ can only be applied (possibly many times) 
above respectively the left and right premise of $(com_{end})$. 
The above system, that reformulates Avron's {\em communication} rule, 
immediately translates into the natural deduction rule below, whose addition to $\NJ$ leads to
a natural deduction calculus for $\LC$
\begin{small}
\[ \infer[\D]{C}{ \infer*{C} {\infer[\COML]{ A } {
\deduce{ B } {\vdots}} } & \infer*{ C } {\infer[\COMR]{ B } {
\deduce{ A } {\vdots} } } }
\]
\end{small}
Not all the branches of a derivation containing the above rule are $\NJ$ derivations.
To avoid that, and to keep the proof of the Subformula Property (Theorem~\ref{theorem-subformula}) 
as simple as possible, we use the equivalent rule below, first considered in \cite{L1982}.
\begin{definition}[$\NGod$]
\label{def:NG}
The natural deduction calculus $\NGod$ extends $\NJ$ with the $\COM$ rule:
\[\infer[\D]{C}{\infer*{C}{[A \IMPL B]} & \infer*{C}{[B \IMPL A]}}\]
\end{definition}

Let $\DER_{\NGod}$ and $\DER_{\LC}$ indicate the derivability relations in
$\NGod$ and in $\NJ + (\textit{lin})$,
respectively.
\begin{theorem}[Soundness and Completeness] \label{th:sound}
For any set $\Pi$ of formulas and formula $A$,
 $\Pi \DER_{\NGod} A$ if and only if $\Pi \DER_{\LC} A$. 
\end{theorem}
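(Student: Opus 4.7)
The plan is to prove the two directions separately by exhibiting simple mutual derivations between the $\COM$ rule of $\NGod$ and the linearity axiom $(lin)$, since everything else in $\NGod$ is already in $\NJ$.

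For the soundness direction ($\Pi \DER_{\NGod} A$ implies $\Pi \DER_{\LC} A$), I would argue by induction on the height of the $\NGod$ derivation. All $\NJ$ rules are obviously available in $\NJ + (\textit{lin})$, so the only non-trivial case is an application of $\COM$ ending the derivation:
\[\infer[\D]{C}{\infer*{C}{[A \IMPL B]} & \infer*{C}{[B \IMPL A]}}\]
By the inductive hypothesis, each subderivation can be reproduced in $\NJ + (\textit{lin})$. Then I simulate $\COM$ by taking the instance $(A \IMPL B) \OR (B \IMPL A)$ of $(lin)$ and applying $\OR$-elimination with the two subderivations as minor premises, which yields $C$ in $\NJ + (\textit{lin})$.

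For the completeness direction ($\Pi \DER_{\LC} A$ implies $\Pi \DER_{\NGod} A$), it suffices to show that every instance of $(lin)$ is derivable in $\NGod$, since $\NGod$ already contains $\NJ$. The derivation is immediate: apply $\COM$ with conclusion $(A \IMPL B) \OR (B \IMPL A)$, discharging the assumption $A \IMPL B$ in the left branch by $\OR$-introduction and the assumption $B \IMPL A$ in the right branch by $\OR$-introduction, as in
\[\infer[\D]{(A \IMPL B) \OR (B \IMPL A)}{\infer{(A \IMPL B) \OR (B \IMPL A)}{[A \IMPL B]} & \infer{(A \IMPL B) \OR (B \IMPL A)}{[B \IMPL A]}}\]

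There is no real obstacle here: the theorem is essentially a one-line interderivability check between $\COM$ and $(lin)$, and the result is exactly what motivates the choice of $\COM$ as the rule capturing linearity in natural deduction form. The only thing to be mildly careful about is the management of open assumptions so that the inductive argument for soundness properly accounts for the assumptions discharged by $\COM$ via the two $\OR$-elimination minor premises.
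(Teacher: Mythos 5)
Your proposal is correct and takes essentially the same route as the paper: the soundness direction simulates each $\COM$ application by an $\OR$-elimination whose major premiss is an instance of $(\textit{lin})$, and the completeness direction derives $(\textit{lin})$ in $\NGod$ by exactly the $\COM$ derivation you display, with the two branches closed by $\OR$-introduction on the discharged hypotheses. Nothing is missing beyond the bookkeeping of discharged assumptions you already note.
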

\begin{proof} ($\Rightarrow)$ Applications of $\COM$ can be
simulated by $\OR$ eliminations having as major premiss an instance of
$(\textit{lin})$.
($\Leftarrow$) 
Easily follows by the following derivation: 
  \[ \infer[\COMB^{1}]{(A \IMPL B) \OR (B \IMPL A)}{\infer
      {(A \IMPL
        B) \OR (B \IMPL A)}{{[A \IMPL
          B]^{1}}} &&& \infer
      {(A \IMPL B)
        \OR (B \IMPL A)}{{[B \IMPL
          A]^{1}}}}
\]\end{proof}
\noindent {\em Notation}.
To shorten derivations henceforth we will use
\[
\AxiomC{$A$} \RightLabel{$\COML$} \UnaryInfC{$B$} \DisplayProof \qquad
\AxiomC{$B$} \RightLabel{$\COMR$} \UnaryInfC{$A$} \DisplayProof
\quad \text{ as abbreviations for }\]
\[
\AxiomC{$[A\rightarrow B]$} \AxiomC{$A$} \BinaryInfC{$B$}
\DisplayProof \qquad \AxiomC{$[B\rightarrow A]$} \AxiomC{$B$}
\BinaryInfC{$A$} \DisplayProof
\]
respectively, and call them \textbf{communication inferences}.

As usual, we will use $\neg A$ and $\top$ as shorthand for $A\rightarrow\bot$ and $\bot \IMPL \bot$.
Moreover, we exploit the equivalence of $A \lor B$ and 
$((A\rightarrow B)\rightarrow B)\  \land\ ((B\rightarrow A)\rightarrow
A)$ in $\LC$ (see \cite{dummett}) and treat $\OR$ as a defined connective.

\subsection{Reduction Rules and Normalization}

A normal deduction in $\NGod$ should have two essential features:
every  intuitionistic Prawitz-style reduction should have been carried out
and the Subformula Property should hold.  
Due to the $\COM$ rule, the former is not always enough to guarantee the latter. 
Here we present the main ideas behind the normalization procedure for
$\NGod$ and the needed reduction rules.
The computational interpretation of the rules
will be carried out through the $\lamg$ calculus in Section \ref{section-system}.

The main steps of the normalization procedure are as follows: 
\begin{itemize}
\item We permute down all applications of $\COM$.
\end{itemize}
The resulting deduction -- we call it in {\em parallel form} --
consists of purely intuitionistic subderivations joined together 
by consecutive $\COM$ inferences occurring immediately above the root.
This transformation is a key tool in the embedding 
 between hypersequents and systems of rules \cite{CG2016}.
The needed reductions are instances of
Prawitz-style permutations for $\vee$ elimination.
Their list can be obtained by translating into
natural deduction the permutations in Fig.~\ref{fig:red}.

Once obtained a parallel form, we interleave the following two steps.     
\begin{itemize}
\item We apply the standard intuitionistic reductions (\cite{Prawitz}) 
to the parallel branches of the derivation.
\end{itemize}
This way we normalize each single intuitionistic derivation, and this can be done
in parallel.
The resulting derivation, however, need not satisfy yet the Subformula Property.
Intuitively, the problem is that communications may discharge hypotheses that have nothing to do with their conclusion.
\begin{itemize}
\item We apply specific reductions to replace the $\COM$ applications that violate the Subformula
Property. 
\end{itemize}
These reductions -- called \emph{cross reductions} -- account for the hypersequent cut-elimination.
\label{sec:cross_explained}
They allow to get rid of the new detours that appear in configurations like the one below on the left.
To remove these detours, a first idea would be to simultaneously move the
deduction $\mathcal{D}_1$ to the right
and $\mathcal{D}_2$ to the left 
thus obtaining the derivation below right:
\[
\infer[\D]{C}{\infer*{C}{\infer[\COML]{B}{\deduce{A}{\mathcal{D}_{1}}}}&\infer*{C}{\infer[\COMR
    ]{A}{\deduce{B}{\mathcal{D}_{2}}}}}
\quad \quad \quad
\quad \quad \quad
\infer{C}{\infer*{C}{\deduce{B}{\mathcal{D}_{2}}}&\infer*{C}{\deduce{A}{\mathcal{D}_{1}}}}\]
In fact, in the context of Krivine's realizability, Danos and Krivine [9] studied the linearity axiom as a theorem of classical logic and discovered that its realizers implement a \emph{restricted} version of this transformation. Their transformation
does not lead however to the subformula property for NG.
The unrestricted transformation above, on the other hand, cannot work;
indeed $\mathcal{D}_1$ might contain the hypothesis
$A\rightarrow B$ and hence it cannot be moved on the right. Even
worse, $\mathcal{D}_1$ may depend on hypotheses that are locally
opened, but discharged below $B$ but above $C$. Again, it is not
possible to move $\mathcal{D}_1$ on the right as naively thought,
otherwise new global hypotheses would be created. 

We overcome these barriers by our cross reductions. 
Let us highlight $\Gamma$ and $\Delta$, the hypotheses of ${\cal D}_{1}$ and
${\cal D}_{2}$ that are respectively discharged below $B$ and $A$ but above the application of $\COM$.
Assume moreover, that $A\rightarrow B$ does not occur in ${\cal
D}_{1}$ and $B\rightarrow A$ does not occur in ${\cal D}_{2}$ as
hypotheses discharged by $\COM$.
 A cross reduction transforms the deduction below left into 
the deduction below right
(if $\COM$ in the original proof discharges in each branch exactly one occurrence of the hypotheses,
and $\Gamma$ and $\Delta$ are formulas)
\[\infer[\D]{C}{\infer*{C}{\infer[\COML]{B}{\deduce{A}{\deduce{\mathcal{D}_{1}}{\Gamma}}}}&\infer*{C}{\infer[\COMR
]{A}{\deduce{B}{\deduce{\mathcal{D}_{2}}{\Delta}}}}}
\quad \quad \quad \quad
\infer[\COMB] {C}{
\infer*{ C} {\deduce{A} {\deduce{\vspace{3pt}{\cal D}_{1}}{\vspace{3pt}\infer[\COML
]{\Gamma} {\Delta} } } } &
 \infer*{C}
{\deduce{B} { \deduce{\vspace{3pt}{\cal D}_{2}}{\vspace{3pt}\infer[\COMR]{\Delta}{\Gamma} } } }
}
\]
and into the following deduction, in the general case
\[ \vcenter{ \infer[\COMB ^{3}] {C} { \infer[\COMB ^{1}] {C}
{\infer*{C} {\infer[\COML^{1}]{ B } {\deduce{A} {\deduce{\vspace{3pt}{\cal D}_{1}}{\vspace{3pt}\Gamma}}
} } & \infer*{ C} {\deduce{A} {\deduce{\vspace{3pt}{\cal D}_{1}}{\vspace{3pt}\infer=[\COML
^{3}]{\Gamma} {\Delta} } } }} && \infer[\COMB ^{2}] {C} { \infer*{C}
{\deduce{B} { \deduce{\vspace{3pt}{\cal D}_{2}}{\vspace{3pt}\infer=[\COMR ^{3}]{\Delta}{\Gamma} } } } &
\infer*{C } {\infer[\COMR ^{2}]{A } { \deduce{B} { \deduce{\vspace{3pt}{\cal D}_{2}}{\vspace{3pt}\Delta} } } } } }}
  \]
where the double bar notation stands for an application of $\COM$ between {\em sets of
 hypotheses} $\Gamma$ and $\Delta$, which means to prove from $\Gamma$
 the conjunction of the formulas of $\Gamma$, then to prove  the
 conjunction of the formulas of $\Delta$ by means of a communication
 inference and finally obtain each formula of $\Delta$ by a series of
 $\ET$ eliminations, and vice versa.

Mindless applications of the cross reductions might lead 
to dangerous loops, see e.g. Example~ \ref{example-nonterm}. To avoid them we will allow cross reductions to
be performed only when the proof is not analytic.
Thanks to this and to other restrictions, we will prove termination and thus the
Normalization Theorem.

\section{The $\lamgtitle$-Calculus}
\label{section-system}
We introduce $\lamg$, our parallel  $\lambda$-calculus for $\LC$. $\lamg$ extends the standard Curry--Howard correspondence~\cite{Wadler}
for intuitionistic natural deduction with a parallel operator that interprets the inference for the linearity axiom. 
We describe
$\lamg$-terms and their computational behavior,
proving as main result of the section the Subject Reduction Theorem, stating that the
reduction rules preserve the type.

{\footnotesize
\smallskip
\hrule
\begin{description}
\comment{We also assume that the term formation rules are applied in such a way that in each term $t$, if $t$ contains $\Wit{a}{P}{{\alpha}}$ or $\Hyp{a}{P}{{\alpha}}$ and $t$ contains $\Wit{a}{Q}{{\alpha}}$ or $\Wit{a}{Q}{{\alpha}}$,  then $\mathsf{P}=\mathsf{Q}$.}
\comment{or $\wit \beta$ (for some individual variable $\beta$) and
$A_1,\ldots, A_n$ formulas of $\Language$.}
\item[Axioms]\hspace{10 pt}
 $\begin{array}{c}    x^A: A\qquad 
\end{array}\ \ \ \ $

\item[Conjunction] 
\hspace{30 pt}
$\vcenter{\infer{  \langle u,t\rangle: A \wedge B}{u:A & t:B}} \qquad  \vcenter{\infer{u\,\pi_0: A}{u: A\wedge B}} \qquad \vcenter{\infer{u\,\pi_1: B}{u: A\wedge B}}$
\\\\

\item[Implication] 
\hspace{40 pt}
$
\vcenter{\infer{\lambda x^{A} u: A\rightarrow B}{\infer*{u:B}{[x^{A}: A]}}}
\qquad 
\vcenter{\infer{tu:B}{ t: A\IMPL B & u:A}}
$
\\\\

\comment{

\item[Disjunction Introduction]  
$\begin{array}{c} u: A\\ \hline  \inj_{0}(u): A\vee B \end{array}\ \ \ \ $ $\begin{array}{c}   u: B\\ \hline \inj_{1}(u): A\vee B \end{array}$\\\\

\item[Disjunction Elimination] 
\AxiomC{$u: A\lor B$}
\AxiomC{$[x^{A}: A]$}
\noLine
\UnaryInfC{$\vdots$}
\noLine
\UnaryInfC{$w_{1}: C$}
\AxiomC{$[y^{B}: B]$}
\noLine
\UnaryInfC{$\vdots$}
\noLine
\UnaryInfC{$w_{2}: C$}
\TrinaryInfC{$u\, [x^{A}.w_{1}, y^{B}.w_{2}]: C$}
\DisplayProof
\\
\item[Universal Quantification] 
$\begin{array}{c} u:\forall \alpha\, A\\ \hline   u m: A[m/\alpha]
\end{array} $
$\begin{array}{c}   u: A\\ \hline  \lambda \alpha\, u:
\forall \alpha A
\end{array}$\\

where $m$ is any term of  the language $\Language$ and $\alpha$ does not occur
free in the type $B$ of any free variable  $x^{B}$ of $u$.\\

\item[Existential Quantification] 
$\begin{array}{c}  u: A[m/\alpha]\\ \hline  (
m,u):
\exists
\alpha A
\end{array}\ \ \ $
\AxiomC{$\exists \alpha\, A$}
\AxiomC{$[x^{A}: A]$}
\noLine
\UnaryInfC{$\vdots$}
\noLine
\UnaryInfC{$t: C$}
\BinaryInfC{$u\, [(\alpha, x^{A}). t]: C$}
\DisplayProof
\\
where $\alpha$ is not free in $C$ nor in the type $B$ of any free variable of $t$.\\
}

\item[Linearity Axiom]\hspace{50 pt}
$\vcenter{ \infer{u \parallel_{a} v:
C}{\infer*{u:C}{[a^{\scriptscriptstyle A\rightarrow B}: A\rightarrow
B]}&\infer*{v:C}{[a^{\scriptscriptstyle B\rightarrow A}: B\rightarrow
A]}} }$
\\

\item[Ex Falso Quodlibet] 
\hspace{50 pt}$\vcenter{\infer{\Gamma\vdash  \efq{P}{u}:
P}{\Gamma \vdash u: \bot}}\qquad \text{with $P$ atomic, $P \neq \bot$.}$
\end{description}}
\hrule
\smallskip

 The table above 
 defines a type assignment for
 $\lamg$-terms, called \textbf{proof
terms} and denoted by $t, u, v \dots$, which is isomorphic to $\NGod$.
The typing rules for axioms, implication, conjunction and ex-falso-quodlibet are standard and give rise to the simply typed $\lambda$-calculus, while parallelism is introduced by the rule for the linearity axiom. 

Proof terms may contain variables $x_{0}^{A}, x_{1}^{A}, x_{2}^{A},
\ldots$ of type $A$ for every formula $A$; these variables are 
denoted as $x^{A},$ $ y^{A}, $ $ z^{A} , \ldots,$ $ a^{A}, b^{A}, c^{A}$ and
whenever the type is not important simply as $x, y, z, \ldots, a,
b$. For clarity, the variables introduced by the $\COM$ rule will be
often denoted with letters $a, b, c, \ldots$, but they are not in a
syntactic category apart. 
A variable $x^{A}$ that occurs in a term of the form $\lambda x^{A} u$ is called \textbf{$\lambda$-variable} and a variable $a$ that occurs in a term $u\parallel_{a} v$ is called \textbf{communication variable} and represents a \emph{private} communication channel between the parallel processes $u$ and $v$.

The free and bound variables of a proof term are defined as usual and
for the new term $\Ecrom{a}{u}{v}$, all the free occurrences of $a$ in
$u$ and $v$ are bound in $\Ecrom{a}{u}{v}$. In the following we
assume the standard renaming rules and alpha equivalences that are
used to avoid capture of variables in the reduction rules.

\textbf{Notation}. The connective $\rightarrow$ associates to the right and 
 \comment{so that 
 \begin{gather*}
   A_{1}\rightarrow A_{2}\rightarrow\ldots \rightarrow A_{n}
\\
   =
\\
   A_{1}\rightarrow (A_{2}\rightarrow (\ldots (A_{n-1}\rightarrow
   A_{n})\ldots))
 \end{gather*}}
by $\langle t_{1}, t_{2}, \ldots, t_{n}\rangle $ we
denote the term
 $\langle t_{1}, \langle t_{2}, \ldots \langle t_{n-1},
t_{n}\rangle\ldots \rangle\rangle$ 
and by $\proj_{i}$, for $i=0,
\ldots, n$, the sequence of projections $\pi_{1}\ldots \pi_{1}
\pi_{0}$ selecting the $(i+1)$th element of the sequence. Therefore, for every formula sequence $A_{1} , \dots ,A_{n}$ the expression $A_{1} \ET \dots \ET A_{n}$ denotes  $( A_{1} \ET ( A_{2} \ET \dots ( A_{n-1} \ET
A_{n})\dots ))$ or $\top$ if $n=0$.

Often, when $\Gamma= x_{1}: A_{1}, \ldots, x_{n}: A_{n}$ and the list
$x_{1}, \ldots, x_{n}$ includes all the free variables of a proof term
$t: A$, we shall write $\Gamma\vdash t: A$. From the logical point of
view,  
$t$ represents a natural deduction of
$A$ from the hypotheses $A_{1}, \ldots, A_{n}$. We shall write
$\LC\vdash t: A$ whenever $\vdash t: A$, and the notation means
provability of $A$ in propositional G\"odel logic. If the symbol
$\parallel$ does not occur in it, then $t$ is a \textbf{simply typed
$\lambda$-term} representing an intuitionistic deduction.

We define as usual the notion of context $\mathcal{C}[\ ]$ as the part of a proof term that surrounds a hole, represented by some fixed variable. In the expression $\mathcal{C}[u]$ we denote a particular occurrence of a subterm $u$ in the whole term $\mathcal{C}[u]$. We shall just need those particularly simple contexts which happen to be simply typed $\lambda$-terms.

\begin{definition}[Simple Contexts]\label{defi-simplec}
A \textbf{simple context} $\mathcal{C}[\ ]$ is a simply typed
$\lambda$-term with some fixed variable $[]$ occurring exactly
once. For any proof term $u$ of the same type of $[]$,
$\mathcal{C}[u]$ denotes the term obtained replacing $[]$ with $u$ in
$\mathcal{C}[\ ]$, \emph{without renaming of any bound
variable}.
\end{definition}

As an example, the expression $\mathcal{C}[\ ]:= \lambda x\, z\, ([])$ is a simple context and
 the term $\lambda x\, z\, (x\, z)$ can be written as $\mathcal{C}[xz]$.

\comment{
\begin{definition}[Parallel Contexts]\label{defi-parallelc}
Omitting parentheses, a \textbf{parallel context} $\mathcal{C}[\ ]$ is an expression of the form 
$$u_{1}\parallel_{a_{1}} u_{2}\parallel_{a_{2}}\ldots u_{i} \parallel_{a_{i}} [] \parallel_{a_{i+1}}u_{i+1}\parallel_{a_{i+2}}\ldots \parallel_{a_{n}} u_{n}$$
where $[]$ is a placeholder and $u_{1}, u_{2}, \ldots,  u_{n}$ are proof terms.
For any proof term $u$, $\mathcal{C}[u]$ denotes the replacement in $\mathcal{C}[\ ]$ of the placeholder $[]$ with $u$:
$$u_{1}\parallel_{a_{1}} u_{2}\parallel_{a_{2}}\ldots u_{i} \parallel_{a_{i}} u \parallel_{a_{i+1}}u_{i+1}\parallel_{a_{i+2}}\ldots \parallel_{a_{n}} u_{n}$$
\end{definition}}
We define below the notion of stack, corresponding to Krivine stack
\cite{Krivine} and known as \emph{continuation} because it embodies a
series of tasks that wait to be carried out. A stack represents, from
the logical perspective, a series of elimination rules; from the
$\lambda$-calculus perspective, a series of either operations
or arguments.
\begin{definition}[Stack]\label{definition-stack}
A \textbf{stack} is a sequence \mbox{$\sigma = \sigma_{1}\sigma_{2}\ldots \sigma_{n} $}
such that for every $	1\leq i\leq n$, exactly one of the following holds:
either $\sigma_{i}=t$, with $t$ proof term or $\sigma_{i}=\pi_{j}$, with $j\in\{0,1\}$.
We will denote the \emph{empty sequence} with $\epsilon$ and with
$\xi, \xi', \ldots$ the stacks of length $1$. If $t$ is a proof term,
as usual  
$t\, \sigma$ denotes the term $(((t\,
\sigma_{1})\,\sigma_{2})\ldots \sigma_{n})$.
 \end{definition}
 
We define now the notion of \emph{strong subformula}, 
which is essential for defining the reduction rules of the
$\lamg$-calculus and for proving Normalization. The technical
motivations will become clear in Sections~\ref{section-subformula}
and~\ref{section-normalization}, but the intuition is that the new types created
by cross reductions must be 
always strong subformulas of already existing types.
To define the concept of strong subformula we also need the following definition. 
\begin{definition}[Prime Formulas and Factors \cite{Krivine1}]
  A formula is said to be \textbf{prime} if it is not a
  conjunction. Every formula is 
  a conjunction of prime
  formulas, called \textbf{prime factors}. 
\end{definition}

\begin{definition}[Strong Subformula]\label{definition-strongsubf}
$B$ is said to be a \textbf{strong subformula} of a formula $A$, if $B$ is a proper subformula of some prime proper subformula  of  $A$. 
\end{definition}

Note that in the present context, prime formulas are either atomic formulas or arrow formulas, so a strong subformula of $A$ must be actually a proper subformula of an arrow proper subformula of $A$. 
The following characterization of the strong subformula relation will be often used.  
\begin{proposition}[Characterization of Strong Subformulas]\label{proposition-strongsubf}
Suppose $B$ is any {strong subformula} of $A$. Then:
\begin{itemize}
\item If $A=A_{1}\land \ldots \land A_{n}$, with $n>0$ and $A_{1}, \ldots, A_{n}$ are prime, then $B$ is a proper subformula of one among $A_{1}, \ldots, A_{n}$. 
\item If $A=C\rightarrow D$, then $B$ is a proper subformula of a prime factor of $C$ or $D$. 
\end{itemize}
\end{proposition}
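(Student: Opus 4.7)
The plan is to proceed by case analysis on the outermost structure of $A$, directly unfolding the definition of strong subformula. By assumption, there exists a prime proper subformula $P$ of $A$ such that $B$ is a proper subformula of $P$; the whole argument consists in locating where this $P$ can sit inside $A$.

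First I would treat the case $A = A_1 \land \ldots \land A_n$ with each $A_i$ prime. Every proper subformula of $A$ is either a subformula of some single $A_i$ or else an intermediate conjunction of the form $A_i \land \ldots \land A_j$ with $i < j$; the latter are by definition never prime. Hence the prime proper subformula $P$ must lie inside some $A_i$, meaning either $P = A_i$ or $P$ is a proper subformula of $A_i$. In either subcase, transitivity of the proper-subformula relation together with the hypothesis that $B$ is a proper subformula of $P$ gives that $B$ is a proper subformula of $A_i$, which is the desired conclusion.

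Next I would treat the case $A = C \rightarrow D$. Here $A$ itself is prime, and its proper subformulas are exactly the subformulas of $C$ together with the subformulas of $D$; by symmetry assume $P$ is a subformula of $C$. Writing $C = C_1 \land \ldots \land C_k$ for the prime factorization of $C$ (with $k = 1$ and $C_1 = C$ when $C$ itself is prime), the same bookkeeping as in the conjunction case, applied this time to $C$, shows that the prime formula $P$ either equals some $C_i$ or is a proper subformula of some $C_i$. Either way, $B$ is a proper subformula of $C_i$, a prime factor of $C$.

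The only delicate point is the bookkeeping around the word \emph{proper} when invoking transitivity: when $P$ coincides with one of the prime factors, properness of the inclusion $B \subsetneq C_i$ (respectively $B \subsetneq A_i$) comes from the hypothesis $B \neq P$; when $P$ lies strictly inside a prime factor, properness comes instead from the strict containment $P \subsetneq C_i$. This is straightforward to verify but is the only place where a slip could occur. Everything else is routine unfolding of the definitions of prime formula, prime factor, and strong subformula.
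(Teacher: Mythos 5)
Your proof is correct and follows essentially the same route as the paper's: in both cases you locate the prime proper subformula $P$ witnessing strong-subformulahood inside a prime conjunct (resp.\ a prime factor of $C$ or $D$), observing that intermediate conjunctions cannot be prime, and then conclude by transitivity that $B$ is a proper subformula of that conjunct or factor. The explicit bookkeeping about properness in your final paragraph is exactly what the paper's proof does implicitly, so there is nothing to add.
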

\begin{proof}\mbox{}
%%%% APPENDIX REFERENCE
See the Appendix.\end{proof}

 \begin{definition}[Multiple Substitution]\label{defi-multsubst}
Let $u$ be a proof term, $\sq{x}=x_{0}^{A_{0}}, \ldots, x_{n}^{A_{n}}$ a sequence of  variables and $v: A_{0}\land \ldots \land A_{n}$. The substitution 
$u^{v/ \sq x}:=u[v\,\proj_{0}/x_{0}^{A_{0}} \ldots \,v\,\proj_{n}/x_{n}^{A_{n}}]$
replaces each variable $x_{i}^{A_{i}}$ of
any term $u$ with the $i$th projection of $v$.
\end{definition}

We now seek a measure for determining how complex the communication
channel $a$ of a term $u\parallel_{a} v$ is. Logic will be our
guide.  
First, it makes sense to consider the types $B, C$ such
that $a$ occurs with type $B\rightarrow C$ in $u$ and thus with type
$C\rightarrow B$ in $v$. Moreover, assume $u\parallel_{a} v$ has type
$A$ and its free variables are $x_{1}^{A_{1}}, \ldots,
x_{n}^{A_{n}}$. The Subformula Property tells us that, no matter what
our notion of computation will turn out to be, when the computation is
done, no object of type more complex than the types of the inputs and
the output should appear.  Hence, if the prime factors of the types
$B$ and $C$ are not subformulas of $A_{1}, \ldots, A_{n}, A$, then
these prime factors should be taken into account in the complexity
measure we are looking for. The actual definition is the following.

\begin{definition}[Communication Complexity]\label{definition-comcomplexity}
Let $u\parallel_{a} v: A$ a proof term with free variables $x_{1}^{A_{1}}, \ldots, x_{n}^{A_{n}}$.
Assume that  $a^{B\rightarrow C}$ occurs in $u$ and thus $a^{C\rightarrow B}$ in $v$. 
\begin{itemize}
\item The pair $B, C$ is called the \textbf{communication kind}  of $a$. 
\item  The \textbf{communication complexity} of  $a$ is the maximum  among $0$ and the numbers of symbols of the prime factors of $B$ or $C$ that are neither proper subformulas of $A$ nor strong subformulas of 
one among $A_{1}, \ldots, A_{n}$.
\end{itemize}
\end{definition}

We  
explain now the basic reduction rules for the proof terms of $\lamg$, which are given in Figure \ref{fig:red}.
As usual, we also have the reduction scheme: $\mathcal{E}[t]\mapsto \mathcal{E}[u]$, whenever $t\mapsto u$ and for any context $\mathcal{E}$. With $\mapsto^{*}$ we shall denote the reflexive and transitive closure of the one-step reduction $\mapsto$.

 \textbf{Intuitionistic Reductions}. These are the very familiar
computational rules for the simply typed $\lambda$-calculus,
representing the operations of applying a function 
and taking a component of a pair \cite{Girard}. From the logical point
of view, they are the standard Prawitz reductions \cite{Prawitz} for
$\NJ$.

  \textbf{Cross Reductions}. The reduction rules for $\COM$ 
  model a communication mechanism between parallel processes. In order to apply a cross reduction to a term 
 $$ \mathcal{C}[a\, u]\parallel_{a} \mathcal{D}[a\, v]$$
several conditions have to be met. These conditions are both natural \emph{and} needed for the termination of computations.\\
\emph{First}, we require 
the communication complexity of $a$ 
to be greater than $0$; again,
this is 
a warning that the Subformula Property does not hold. Here we
use a logical property as a \emph{computational criterion} for
answering the question: when should computation stop? An answer is  crucial here, because, as shown in Example \ref{example-nonterm}, unrestricted cross reductions do not always terminate.
In $\lambda$-calculi the Subformula
Property fares pretty well as a stopping criterion. In a sense, it
detects all the \emph{essential} operations that really have to be
done.  
For example, in simply typed
$\lambda$-calculus, a closed term that has the Subformula Property
must be a \emph{value}, that is, of the form $\lambda x\, u$,
or $\langle u, v \rangle$.
Indeed a closed term which is a
not a value, must be of the form $h\, \sigma$, for some stack $\sigma$
(see Definition \ref{definition-stack}), where $h$ is a redex
$(\lambda y\, u)t$ or $\langle u, v\rangle\, \pi_{i}$; but $(\lambda
y\, u)$ and $\langle u, v\rangle$ would have a more complex type than
the type of the whole term, contradicting the Subformula Property.  
\\
\emph{Second}, we require 
$\mathcal{C}[a\, u], \mathcal{D}[a\, v]$
to be normal simply typed $\lambda$-terms. Simply typed
$\lambda$-terms, because they are easier to execute in parallel; normal, because we want their computations to go on until they are really stuck and communication is  unavoidable.
\emph{Third}, we require 
the variable $a$ 
to be as rightmost as
possible and that is
 needed for logical soundness: how could otherwise
the term $u$ be moved to the right, e.g., if it contains $a$?

  Assuming that all the conditions above are satisfied, we can now start to explain the cross reduction
\begin{small}
\[\mathcal{C}[a\, u]\parallel_{a} \mathcal{D}[a\, v] \mapsto  (  \mathcal{D}[u^{b\langle \sq{z}\rangle / \sq{y}}] \parallel_{a} \mathcal{C}[a\, u] ) \parallel_{b} (\mathcal{C}[v^{b\langle \sq{y}\rangle / \sq{z}}]\parallel_{a} \mathcal{D}[a\, v])\]
\end{small}
Here, the communication channel $a$ has been activated, because the processes $\mathcal{C}$ and $\mathcal{D}$ are synchronized and ready to transfer respectively $u$ and $v$. The parallel operator $\parallel_{a}$ let the two occurrences of $a$ communicate: the term $u$ travels to the right in order to replace $a\, v$ and $v$ travels to the left in order to replace $a\, u$.  
If $u$ and $v$ were data, like numbers or constants, everything would be simple and they could be sent as they are; but in general, this is not possible.
The problem is that the free variables $\sq{y}$ of $u$ which are bound
in $\mathcal{C}[a\, u]$ by some $\lambda$ cannot be permitted to
become free; otherwise, the connection between the binders $\lambda\,
\sq{y}$ and the occurrences of the variables $\sq{y}$ would be lost
and they could be no more replaced by actual values when the inputs
for the $\lambda\, \sq{y}$ are available. Symmetrically, the variables
$\sq{z}$  cannot become free. For example, we could have $u=u'\,
\sq{y}$ and $v=v'\,\sq{z}$ and\begin{small}
  \[\mathcal{C}[a\, u]=w_{1}\, (\lambda
    \sq{y}\, a\, (u'\, \sq{y}))\qquad\ \mathcal{D}[a\, v]=w_{2}\,
    (\lambda \sq{z}\, a\, (v'\, \sq{z}))\]
\end{small}and the transformation \begin{small} 
    $w_{1}\, (\lambda \sq{y}\, a\, (u'\, \sq{y}))\parallel_{a} w_{2}\,
    (\lambda \sq{z}\, a\, (v'\, \sq{z}))$
$\mapsto$ $w_{1}\, (\lambda
    \sq{y}\, v'\, \sq{z} )\parallel_{a} w_{2}\, (\lambda \sq{z}\, u'\,
    \sq{y})$
\end{small}
would just be wrong: the term $v'\, \sq{z}$ will never get back actual values for the variables $\sq{z}$ when they will become available.

\noindent These issues are typical of \emph{process migration},
and can be solved by the concepts of
\emph{code mobility}~\cite{codemobil} and \emph{closure}~\cite{landin}. Informally, code mobility is defined as the
capability to dynamically change the bindings between code fragments
and the locations where they are executed. 
Indeed,  in order to be executed, a piece of code needs a computational environment and its resources, 
like data, program counters or global variables.
In our case the
contexts $\mathcal{C}[\ ]$ and $\mathcal{D}[\ ]$ are the computational
environments or \emph{closures} of the processes $u$ and $v$ and the variables $\sq{y},
\sq{z}$ are the resources they need. Now, moving a process outside its
environment always requires extreme care: the bindings between a
process and the environment resources must be preserved. This is the
task of the \emph{migration mechanisms}, which allow a migrating
process to resume correctly its execution in the new location. Our
migration mechanism creates a new communication channel $b$ between
the programs that have been exchanged. Here we see the code fragments
$u$ and $v$, with their original bindings to the global variables
$\sq{y}$ and $\sq{z}$.
\begin{center}
\vspace{-10pt}
\includegraphics[width=0.25\textwidth]{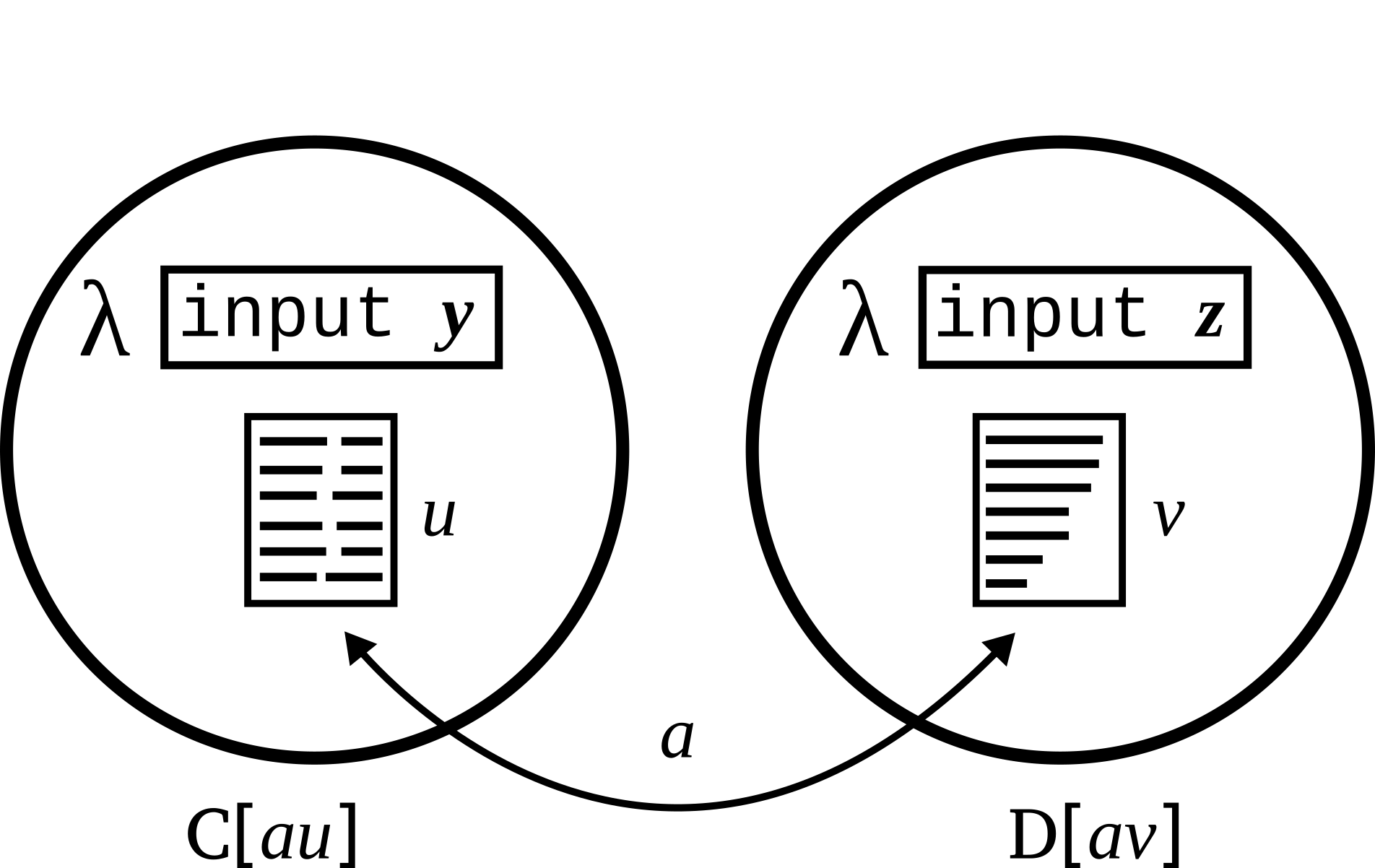}
\end{center}
The change of variables $u^{b\langle \sq{z}\rangle / \sq{y}}$ and $ v^{b\langle \sq{y}\rangle / \sq{z}}$ has the effect of reconnecting $u$ and $v$ to their old inputs:
\begin{center}
\vspace{-10pt}
\includegraphics[width=0.25\textwidth]{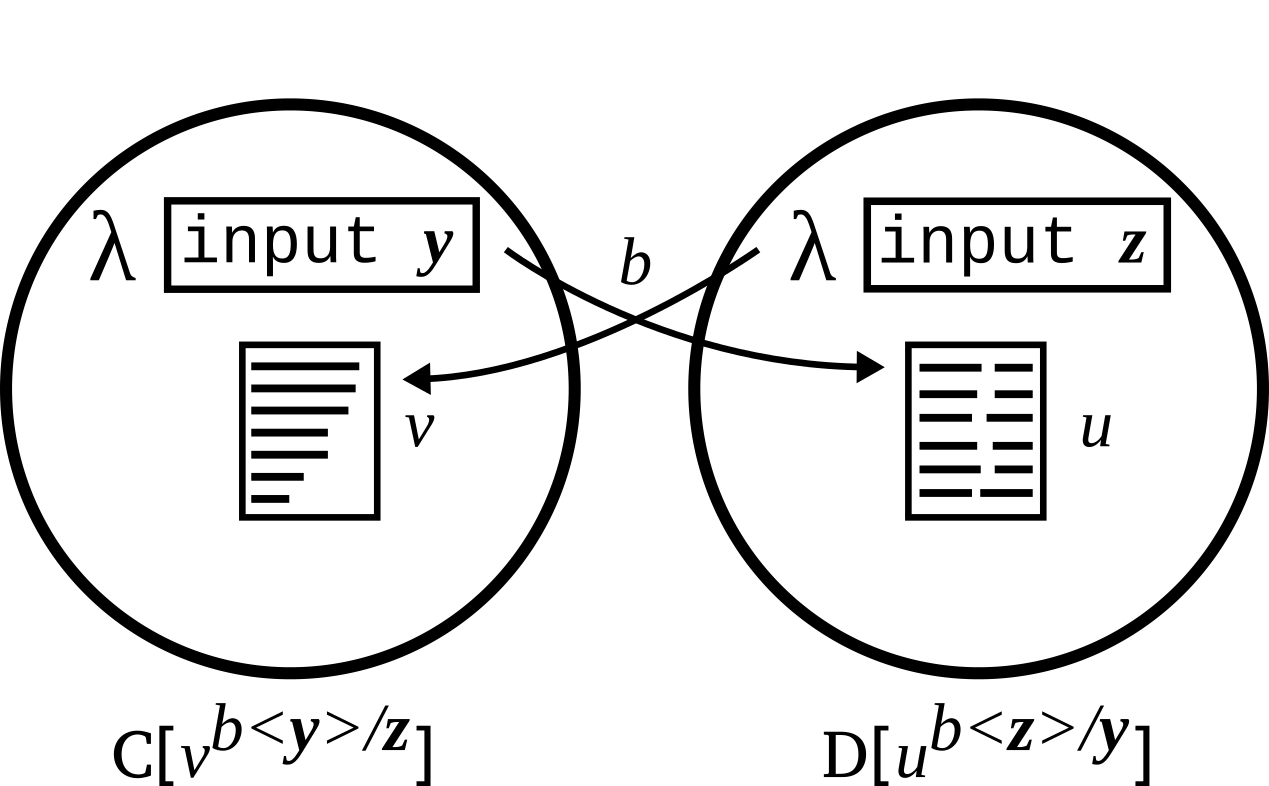}
\end{center}
In this way, when they will become available, the data $\sq{y}$ will be sent to $u$  and the data $\sq{z}$ will be sent to $v$ through the channel $b$. 
Note that in the result of the cross reduction
the processes $\mathcal{C}[a\, u]$ and $\mathcal{D}[a\, v]$ are
\emph{cloned}, because their code fragments can be needed again.
Thus $a$ behaves as a \emph{replicated input} and \emph{replicated output channel}. 
E.g., in \cite{CP2010},
replicated input is coded by the bang operator of linear logic:
$$x\langle y\rangle. Q\, |\, !x(z).P \mapsto Q\, |\, P[y/z]\, |\, !x(z).P$$ 
With symmetrical message passing and a ``!" also in front of $x\langle y\rangle. Q$, one would obtain a version of our cross
reduction. Finally, as detailed in Ex.~\ref{ex:pi_calc}, whenever $u$
and $v$ are closed terms the cross reduction is simpler and only
maintains the first two of the four processes produced in the general
case.

\begin{example}[\textbf{$\parallel_{a}$ in $\lamgtitle$ and $\mid$ in
the $\pi$-calculus}]
\label{Pi} A private channel $u\parallel_{a} v$ is rendered in the
$\pi$-calculus~\cite{Milner, sangiorgiwalker2003} by the restriction
operator $\nu$, as $\nu a\, (u\ |\ v)$.  
Recall that the $\pi$-calculus term $u \mid v$ represents two processes that run
in parallel. The corresponding $\lamg$ term $\langle e, u \rangle \parallel_{e} \langle e, v
\rangle$ is defined using a fresh channel $e$ with communication kind $A, A$.
As no cross reduction outside $u$ and $v$ can be applied, the whole term reduces neither to $\langle e, u \rangle$ nor to $\langle e, v\rangle$, so that $u$ and $v$ can run in parallel.
\end{example}

\begin{example}\label{example-nonterm}
Let $y$ and $z$ be bound variables  occurring in the normal terms $\mathcal{C}[a\,y]$ and $\mathcal{D}[a\, z]$. 
Without the condition on the communication complexity $c$ of $a$,
a loop could be generated:\begin{small}
\[\mathcal{C}[a\, y]\parallel_{a} \mathcal{D}[a\, z] \mapsto (\mathcal{D}[y^{b\langle {z}\rangle / {y}}]
\parallel_{a} \mathcal{C}[a\, y] ) \parallel_{b} (\mathcal{C}[z^{b\langle {y}\rangle / {z}}]\parallel_{a}
\mathcal{D}[a\, z])\]
\[=(\mathcal{D}[b\,{z}] \parallel_{a} \mathcal{C}[a\,
y]) \parallel_{b} (\mathcal{C}[b\, {y}]\parallel_{a} \mathcal{D}[a\,
z])\mapsto^{*}   \mathcal{D}[b\,{z}] \parallel_{b} \mathcal{C}[b\,
{y}]\]\end{small}In Sec.~\ref{section-normalization} we show that if $c > 0$, this
reduction sequence would terminate. What is then happening here?
Intuitively, $\mathcal{C}[a\,y]$ and $\mathcal{D}[a\, z]$ are normal
simply typed $\lambda$-terms, which forces $c=0$. 
\end{example}

 \textbf{Permutation Reductions}. They regulate the interaction
between parallel operators and the other computational constructs. The
first four reductions are the Prawitz-style permutation rules
\cite{Prawitz} between parallel operators and eliminations.  
We also add two other groups of reductions: three permutations between
parallel operators and introductions, two permutations between
parallel operators themselves. The first group will be needed to
rewrite any proof term into a parallel composition of simply typed
$\lambda$-terms (Proposition \ref{proposition-parallelform}).
The second group is needed to address the \emph{scope extrusion} issue of private channels \cite{Milner}. We point out that a parallel operator $\parallel_{a}$ is allowed to commute with other parallel operators only when it is strictly
necessary, that is, when the communication complexity of $a$ is
greater than $0$ and thus signaling a violation of the Subformula
Property. 
\begin{example}[\textbf{Scope extrusion (and $\pi$-calculus)}]
\label{SE}
As example of scope extrusion, let us consider the term
\[(v\parallel_{a} \mathcal{C}[b\, a] )\parallel_{b} w 
\]
Here the process $\mathcal{C}[b\, a]$ wishes to send the channel $a$ to $w$ along the channel $b$, but this is not possible being the channel $a$ private. This issue is solved in the $\pi$-calculus using the congruence $\nu a (P\, |\, Q)\, |\, R \equiv \nu a (P\, |\, Q\, |\, R) $,
 provided that  $a$ does not occur in $R$, condition that can always be satisfied  by $\alpha$-conversion. G\"odel logic offers and actually forces a different solution, which is not just permuting $w$ inward but also duplicating it:
\[( v \parallel_{a} \mathcal{C}[b\, a])\parallel_{b} w \mapsto (v \parallel_{b} w)\parallel_{a} (\mathcal{C}[b\, a]\parallel_{b} w)
\]
After this reduction $\mathcal{C}[b\, a]$ can send $a$ to $w$. If $a$
does not occur in $v$, we have a further simplification step:
$$(v \parallel_{b} w)\parallel_{a} (\mathcal{C}[b\, a]\parallel_{b}
w)\mapsto v \parallel_{a} (\mathcal{C}[b\, a]\parallel_{b} w)$$
obtaining associativity of composition as in $\pi$-calculus. However,
if $b$ occurs in $v$, this last reduction step is not possible and we
keep both copies of $w$. It is indeed natural to allow both $v$ and
$\mathcal{C}[b\,a]$ to communicate with $w$.
\end{example}

Everything works as expected: the reductions steps in
Fig.~\ref{fig:red} preserve the type at the level of proof terms,
i.e., they correspond to logically sound proof transformations. Indeed

\begin{theorem}[Subject Reduction]\label{subjectred}
If $t : A$ and $t \mapsto u$, then $u : A$ and all the free variables of $u$ appear among those of $t$.
\end{theorem}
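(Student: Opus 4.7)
The plan is a case analysis on the base reduction rules of Figure~\ref{fig:red}, combined with a routine induction on the shape of $\mathcal{E}$ for the context scheme $\mathcal{E}[t]\mapsto\mathcal{E}[u]$. Throughout I would invoke the standard substitution lemma --- if $\Gamma,x^A\vdash u:B$ and $\Gamma\vdash t:A$ then $\Gamma\vdash u[t/x]:B$, with the free variables of $u[t/x]$ contained in those of $u$ and $t$ --- which is proved by a direct induction on $u$ and, by composition, also covers the multiple substitution of Definition~\ref{defi-multsubst}. The free-variable clause of the statement I would verify by inspection in each base case.

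For the intuitionistic reductions, typing preservation is immediate from the substitution lemma together with the trivial projection rules. The permutation reductions --- commutations of $\parallel_a$ with eliminations, with introductions, and with a second $\parallel_b$ --- are local rearrangements of the typing derivation: in each case I would read the left-hand side as the composition of subderivations that jointly yield the root type $C$, and then rebuild the same type by reorganising the $\COM$ inferences and the surrounding constructor so that each subderivation still concludes its own type. The side conditions on communication complexity that gate these rules play no role in the typing argument, as they are computational restrictions rather than logical ones.

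The genuinely delicate case is the cross reduction
\[
\mathcal{C}[a\, u]\parallel_{a}\mathcal{D}[a\, v]\;\mapsto\;(\mathcal{D}[u^{b\langle\sq{z}\rangle/\sq{y}}]\parallel_{a}\mathcal{C}[a\, u])\parallel_{b}(\mathcal{C}[v^{b\langle\sq{y}\rangle/\sq{z}}]\parallel_{a}\mathcal{D}[a\, v]).
\]
Let $A,B$ be the communication kind of $a$, so that $u:A$, $a\,u:B$, $v:B$ and $a\,v:A$, and call $C$ the type of the whole term. Let $\sq{y}=y_1^{Y_1},\ldots,y_n^{Y_n}$ and $\sq{z}=z_1^{Z_1},\ldots,z_m^{Z_m}$ be the $\lambda$-bound variables of $\mathcal{C}[a\,u]$ and $\mathcal{D}[a\,v]$ that occur free in $u$ and $v$ respectively. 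I would then introduce the fresh channel $b$ with communication kind $Z_1\wedge\cdots\wedge Z_m,\,Y_1\wedge\cdots\wedge Y_n$, so that in the left branch of the outer $\parallel_b$ the term $b\langle\sq{z}\rangle$ has type $Y_1\wedge\cdots\wedge Y_n$ in the scope of the $\lambda$-binders of $\sq{z}$ inside $\mathcal{D}$, and the substitution lemma yields $u^{b\langle\sq{z}\rangle/\sq{y}}:A$. Plugging this into $\mathcal{D}[-]$, whose hole expects type $A$, restores a subterm of type $C$; the symmetric argument gives $\mathcal{C}[v^{b\langle\sq{y}\rangle/\sq{z}}]$ of type $C$. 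The clones $\mathcal{C}[a\,u]$ and $\mathcal{D}[a\,v]$ are typed exactly as in the original term. Assembling the four pieces with the two $\parallel_a$ and the outer $\parallel_b$, each typed by the $\COM$ rule at type $C$, reconstructs the required derivation. Free variables do not grow because $b$ is fresh and bound by $\parallel_b$, and the $\sq{y},\sq{z}$ always occur under their respective $\lambda$-binders on the side where they were originally bound.

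The main obstacle I anticipate lies in the degenerate shapes: when $\sq{y}$ or $\sq{z}$ is empty the tuple collapses to the unit of conjunction and the communication kind of $b$ involves $\top$, forcing a small adaptation of the multiple substitution; and when the original $\COM$ discharges no hypothesis in one branch, the cross reduction simplifies but must still be type-checked separately. Once this bookkeeping is in place, the congruence step $\mathcal{E}[t]\mapsto\mathcal{E}[u]$ goes through by a routine induction on $\mathcal{E}$, since every constructor of $\lamg$ preserves the type of each of its immediate subterms and free-variable sets grow monotonically.
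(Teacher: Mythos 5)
Your proposal is correct and follows essentially the same route as the paper's own proof: a case analysis on the basic reductions (substitution lemma for the intuitionistic ones, local rearrangement of the typing derivation for the permutations), with the cross reduction handled by giving the fresh channel $b$ the communication kind consisting of the conjunctions of the types of the captured variables $\sq{z}$ and $\sq{y}$, so that $u^{b\langle \sq{z}\rangle/\sq{y}}$ and $v^{b\langle \sq{y}\rangle/\sq{z}}$ type-check and the capture by the original binders inside $\mathcal{C}$ and $\mathcal{D}$ keeps the free variables from growing. The paper spells this out as explicit $\NGod$ derivations (with the $\ICONG/\ECONG$ bookkeeping for the conjunctions), but the content is the same, including the separately checked simplification cases $u\parallel_a v\mapsto u,v$ and the $\top$ convention for empty variable sequences.
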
 
\begin{proof} 
It is enough to prove the theorem for basic reductions: if $ t : A$
and $t \mapsto u$, then $u : A$. The proof that the intuitionistic
reductions and the permutation rules preserve the type is completely
standard 
%%% APPENDIX REFERENCE
(full proof in the Appendix). 
Cross reductions require straightforward considerations as well. Indeed 
suppose
\begin{small}
  \begin{gather*}
    \mathcal{C}[a^{\scriptscriptstyle A\rightarrow B}\,
    u]\parallel_{a} \mathcal{D}[a^{\scriptscriptstyle B\rightarrow
      A}\, v]\ \\ \mapsto
    \\
    ( \mathcal{D}[u^{b^{\scriptscriptstyle D\rightarrow C}\langle
      \sq{z}\rangle / \sq{y}}]
    \parallel_{a} \mathcal{C}[a^{\scriptscriptstyle A\rightarrow B}\,
    u] ) \parallel_{b} (\mathcal{C}[v^{b^{\scriptscriptstyle
        C\rightarrow D}\langle \sq{y}\rangle / \sq{z}}]\parallel_{a}
    \mathcal{D}[a^{\scriptscriptstyle B\rightarrow A}\, v])
  \end{gather*}
\end{small}Since  
$
   \langle \sq{y}\rangle: C := C_{0}\land \ldots\land C_{n} \; \mbox{and} \;
   \langle \sq{z}\rangle: D := D_{0}\land \ldots\land D_{m}
$, $b^{\scriptscriptstyle
        D\rightarrow C}\langle \sq{z}\rangle$ and $b^{\scriptscriptstyle
        C\rightarrow D}\langle \sq{y}\rangle$ are correct
      terms. Therefore $u^{b^{\scriptscriptstyle D\rightarrow C}\langle
      \sq{z}\rangle / \sq{y}}$ and  $v^{b^{\scriptscriptstyle
        C\rightarrow D}\langle \sq{y}\rangle / \sq{z}}$, by Definition ~\ref{defi-multsubst},
    are correct as well.
The assumptions are that
$\sq{y}=y_{0}^{C_{0}}, \ldots, y_{n}^{C_{n}}$ is the sequence of the
free variables of $u$ which are bound in $\mathcal{C}[a^{\scriptscriptstyle A\rightarrow B}u]$,
$\sq{z}=z_{0}^{D_{0}}, \ldots, z_{m}^{D_{m}}$ is the sequence of the
free variables of $v$ which are bound in $\mathcal{D}[a^{\scriptscriptstyle B\rightarrow A}v]$, $a$ does
not occur neither in $u$ nor in $v$ and $b$ is fresh.\
Therefore, by construction all the variables $\sq{z}$ are bound in
$\mathcal{D}[u^{b^{\scriptscriptstyle D\rightarrow C}\langle
\sq{z}\rangle / \sq{y}}]$ and all the variables $\sq{y}$ are bound in
$\mathcal{C}[v^{b^{\scriptscriptstyle C\rightarrow D}\langle
\sq{y}\rangle / \sq{z}}]$. Hence, no new free variable is created.\end{proof}

 \begin{figure*}[!htb]
\footnotesize{
\hrule
\smallskip
\begin{description}

\item \textbf{Intuitionistic Reductions} \hspace{55pt}
$(\lambda x^{\scriptscriptstyle A}\, u)t\mapsto
u[t/x^{\scriptscriptstyle A}] \quad \mbox{and} \quad
%\[(\lambda \alpha\, u)m\mapsto u[m/\alpha]\]
 \pair{u_0}{u_1}\,\pi_{i}\mapsto u_i, \mbox{ for $i=0,1$}$
\bigskip

\item \textbf{Permutation Reductions} \hspace{57pt} $(\Ecrom{a}{u}{v}) w \mapsto \Ecrom{a}{uw}{vw},\mbox{ if $a$ does
      not occur free in $w$} $
\[w(\Ecrom{a}{u}{v})  \mapsto \Ecrom{a}{wu}{wv},\mbox{ if $a$ does not occur free in $w$} \]
\[\efq{P}{\Ecrom{a}{w_{1}}{w_{2}}}  \mapsto \Ecrom{a}{\efq{P}{w_{1}}}{\efq{P}{w_{2}}}\]
\[(\Ecrom{a}{u}{v})\,\pi_{i}  \mapsto \Ecrom{a}{u\,\pi_{i}}{v\,\pi_{i}} \]

\[\lambda x^{\scriptscriptstyle A}\,(\Ecrom{a}{u}{v})  \mapsto
  \Ecrom{a}{\lambda x^{\scriptscriptstyle A}\,u}{\lambda
    x^{\scriptscriptstyle A}\, v} \]
\[\langle u \parallel_{a} v,\, w\rangle \mapsto \langle u, w\rangle \parallel_{a} \langle v, w\rangle, \mbox{ if $a$ does not occur free in $w$}\]
\[\langle w, \,u \parallel_{a} v\rangle \mapsto \langle w, u\rangle \parallel_{a} \langle w, v\rangle, \mbox{ if $a$ does not occur free in $w$}\]

\[(u\parallel_{a} v)\parallel_{b} w \mapsto (u\parallel_{b} w)\parallel_{a} (v\parallel_{b} w),\mbox{ if the communication complexity of $b$ is greater than  $0$}
\]
\[w \parallel_{b} (u\parallel_{a} v) \mapsto (w\parallel_{b} u)\parallel_{a} (w\parallel_{b} v),\mbox{ if the communication complexity of $b$ is greater than  $0$}
\]
\vspace{0.5pt}

\item \textbf{Cross Reductions}  \hspace{30pt}
$u\parallel_{a}v \mapsto u, \mbox{ if $a$ does not occur in $u$} \quad \mbox{and} \quad
u\parallel_{a}v \mapsto v, \mbox{ if $a$ does not occur in $v$}$
\vspace{3pt}
\[\mathcal{C}[a^{\scriptscriptstyle A \IMPL B}\, u]\parallel_{a} \mathcal{D}[a^{\scriptscriptstyle B \IMPL A}\, v]\ \mapsto\
(\mathcal{D}[u^{b^{\scriptscriptstyle C \IMPL D}\langle \sq{z}\rangle
/ \sq{y}}]  \parallel_{a} \mathcal{C}[a^{\scriptscriptstyle A \IMPL B}\, u] ) \, \parallel_{b}\, (\mathcal{C}[v^{b^{\scriptscriptstyle D \IMPL C}\langle \sq{y}\rangle /
\sq{z}}]\parallel_{a} \mathcal{D}[a^{\scriptscriptstyle B \IMPL A}\, v])\]
where $\mathcal{C}[a\, u], \mathcal{D}[a\, v]$ are normal simply typed $\lambda$-terms and $\mathcal{C}, \mathcal{D}$ simple contexts;
$\sq{y}$  is the sequence of the free variables of $u$ which are bound in $\mathcal{C}[a\,u]$;
$\sq{z}$ is the sequence of the free variables of $v$ which are bound
in $\mathcal{D}[a\,v]$; $C$ and $D$ are the conjunctions of the types
of the variables in $\sq{z}$ and $\sq{y}$, respectively; the displayed occurrences of $a$ are the rightmost both in $\mathcal{C}[a\,u]$ and in $\mathcal{D}[a\,v ]$; $b$ is fresh;
 and the communication complexity of $a$ is greater than $0$

\comment{
\item[Reduction Rules for $\D$]
\[\left(u_{0}\parallel_{a_{1}}\ldots \parallel_{a_{i}} a\, t\,\sigma \parallel_{a_{i+1}}\ldots \parallel_{a_{n}} u_{n}\right)\, \parallel_{a}\, v\ \mapsto\  \left(u_{0}\parallel_{a_{1}}\ldots \parallel_{a_{i}} v[(\lambda y\, t)/a] \parallel_{a_{i+1}}\ldots \parallel_{a_{n}} u_{n}\right)\, \parallel_{a}\, v\]
\[v\, \parallel_{a}\, \left(u_{0}\parallel_{a_{1}}\ldots \parallel_{a_{i}} a\, t\,\sigma \parallel_{a_{i+1}}\ldots \parallel_{a_{n}} u_{n}\right)\ \mapsto\ v\, \parallel_{a}\, \left(u_{0}\parallel_{a_{1}}\ldots \parallel_{a_{i}} v[(\lambda y\, t)/a] \parallel_{a_{i+1}}\ldots \parallel_{a_{n}} u_{n}\right)\]
\[\mbox{ whenever $a\, t\, \sigma$ is the leftmost head redex of the whole term  and}\]
\[\mbox{${y}$ is a variable not occurring in $t$. 
}\]}
\end{description}}
\hrule
\caption{Basic Reduction Rules for $\lamg$
}\label{fig:red}
\end{figure*}

 \begin{definition}[Normal Forms and Normalizable Terms]\mbox{}
 \begin{itemize}
\item  A \textbf{redex} is a term $u$ such that $u\mapsto v$ for some $v$ and basic reduction of Figure \ref{fig:red}. A term $t$ is called a \textbf{normal form} or, simply, \textbf{normal}, if there is no $t'$ such that $t\mapsto t'$. We define $\nf$ to be the set of normal $\lamg$-terms. 
\item  
A sequence, finite or infinite, of proof terms
$u_1,u_2,\ldots,u_n,\ldots$ is said to be a reduction of $t$, if
$t=u_1$, and for all  $i$, $u_i \mapsto
u_{i+1}$.
 A proof term $u$ of $\lamg$ is  \textbf{normalizable} if there is a finite reduction of $u$ whose last term is a normal form. 
\end{itemize}
\end{definition}

\begin{definition}[Parallel Form]\label{definition-head}
A term $t$ is a \textbf{parallel form}
  whenever, removing the parentheses, it can be written as
$$t = t_{1}\parallel_{a_{1}} t_{2}\parallel_{a_{2}}\ldots \parallel_{a_{n}} t_{n+1}$$
where each $t_{i}$, for $1\leq i\leq n+1$, is a simply typed $\lambda$-term.
  \end{definition}

\comment{
We conclude the treatment of reduction rules with another example, that also suggests that some other reduction rules, considered in \cite{AschieriZH}, might be useful. The reductions are: 

\begin{description}
\item[Optional Reduction Rules for $\D$]
\[u\parallel_{a}v \mapsto u \mbox{ if $a$ does not occur in $u$}\]
\[u\parallel_{a}v \mapsto v \mbox{ if $a$ does not occur in $u$}\]
\end{description}
Whereas in $\cite{AschieriZH}$ the first reduction is essential, in this setting both the first and the second are useless for all the main results of the paper. However, they can be used to show that the type $A\lor B$ is redundant in the logic $\LC$. This follows of course from the fact that $A\lor B$ is definable by means of $\rightarrow$ and $\land$, as noticed by Dummett \cite{dummett}. We can define:
$$(A\lor B)^{*}:=  (A\rightarrow B)\rightarrow B \land (B\rightarrow A)\rightarrow A$$
Then, if $u: A$, we define 
$$(\iota_{0}(u))^{*}:= \langle \lambda y^{{\scriptscriptstyle A\rightarrow B}}\, y\, u, \lambda z^{{\scriptscriptstyle B\rightarrow A}} u\rangle$$
and if $u: B$, we define
$$(\iota_{1}(u))^{*}:= \langle \lambda z^{{\scriptscriptstyle A\rightarrow B}} u, \lambda y^{{\scriptscriptstyle B\rightarrow A}}\, y\, u\rangle$$
where $z$ is a dummy variable not occurring in $u$. 
Finally,  we define
$$(t\, [x_{0}^{\sml{A}}.u_{0}, x_{1}^{\sml{B}}.u_{1}])^{*}:=  (\lambda x_{1}^{\sml{B}}\, u_{1})(t\, \pi_{0}\, a^{{\scriptscriptstyle A\rightarrow B}})\parallel_{a} (\lambda x_{0}^{\sml{A}}\, u_{1})(t\, \pi_{1}\, a^{\sml{B\rightarrow A}})$$
Then
$$((\iota_{0}(u))^{*}  [x_{0}^{\sml{A}}.u_{0}, x_{1}^{\sml{B}}.u_{1}])^{*}$$
}

\comment{
As usual in $\lambda$-calculus, a value represents the result of the computation: a function for arrow and universal types, a pair for product types, a Boolean for sum types and a witness for existential types and in our case also the abort operator.

\begin{definition}[Values, Neutrality]\label{definition-value}\mbox{}
\begin{itemize}
\item A proof term  is a \textbf{value} if it is of the form $\lambda x\, u$ or $\lambda \alpha\, u$ or $\pair{u}{t}$ or $\inj_{i}(u)$ or $(m, u)$ or $\efq{}{u}$ or $\abort$.
\item A proof term is \textbf{neutral} if it is neither a value nor of the form $u\parallel_{a} v$.
\end{itemize}
\end{definition}
}

\section{ 
The Subformula Property}\label{section-subformula}

We show that normal $\lamg$-terms satisfy the important
Subformula Property (Theorem \ref{theorem-subformula}). This, in turn,
implies that our Curry--Howard correspondence for $\lamg$ is meaningful
from the logical perspective and produces analytic $\NGod$ proofs.

We start by establishing an elementary property of simply typed $\lambda$-terms, which will turn out to be crucial for our normalization
proof. It ensures that every bound hypothesis appearing in a normal
intuitionistic proof is a strong subformula of one the premises or a
proper subformula of the conclusion. This property sheds light on the
complexity of cross reductions, because it implies that the new
formulas introduced by these operations are always smaller
than the local premises.

\begin{proposition}[Bound Hypothesis Property]\label{proposition-boundhyp}
Suppose
$$x_{1}^{A_{1}}, \ldots, x_{n}^{A_{n}}\vdash t: A$$
$t\in\nf$ is a simply typed $\lambda$-term and  $z: B$ a variable occurring bound in $t$. 
Then one of the following holds:
\begin{enumerate}
\item $B$ is a proper subformula of a prime factor of $A$.
\item $B$ is a strong subformula of one among $A_{1},\ldots, A_{n}$.

\end{enumerate}
\end{proposition}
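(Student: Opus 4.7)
The plan is to proceed by structural induction on the normal simply typed $\lambda$-term $t$, using the standard grammar for normal forms in this fragment: $t$ is either a variable, an abstraction $\lambda x^{C}.u$, a pair $\langle u,v\rangle$, an application $\efq{P}{u}$ of ex falso, or a neutral term $x_{j}^{A_{j}}\,\sigma_{1}\cdots\sigma_{k}$ in which each $\sigma_{l}$ is either a projection or a normal proof term. The variable case is vacuous. For a pair $\langle u,v\rangle : B_{1}\wedge B_{2}$ the inductive hypothesis transfers straightforwardly, since every prime factor of a component $B_{i}$ is a prime factor of $A$, so case 1 is preserved, and case 2 is immediate. For $\efq{P}{u}$ with $u:\bot$ the induction hypothesis applied to $u$ can deliver only case 2, since $\bot$ is atomic and has no proper subformulas.

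The abstraction case is the first genuinely interesting one. For $t=\lambda x^{C}.u : C\to D$ the newly bound variable $x^{C}$ itself satisfies case 1, because $A=C\to D$ is prime and is therefore its own unique prime factor, of which $C$ is a proper subformula. For any other bound variable $z$, which must lie in $u$, I would invoke the induction hypothesis on $u:D$ in the context extended by $x^{C}:C$. If it returns case 1, then $B$ is a proper subformula of a prime factor of $D$, hence of $D$, hence of $A$, and this is case 1 for $t$. If it returns case 2 via one of the original $A_{i}$, that is case 2 of the conclusion; if it returns case 2 via the new hypothesis $C$, then by Proposition \ref{proposition-strongsubf} $B$ is a proper subformula of a prime proper subformula of $C$, and since $C$ itself is a proper subformula of $A$, transitivity puts $B$ into case 1.

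The neutral case is dual. In $t=x_{j}^{A_{j}}\,\sigma_{1}\cdots\sigma_{k}$ the head $x_{j}$ is free, so every bound $z$ sits inside some term argument $t'$ of type $C$ appearing in the stack. A short induction on the stack shows that the partial application preceding $t'$ has an arrow type $C\to E$ that is a subformula of $A_{j}$, so $C$ is a proper subformula of $A_{j}$. The induction hypothesis for $t'$, in the same context, yields either case 2 directly, or case 1: $B$ is a proper subformula of a prime factor $P$ of $C$. Since $P$ is a prime subformula of $C$ and $C$ is a proper subformula of $A_{j}$, $P$ is a prime proper subformula of $A_{j}$, so by Proposition \ref{proposition-strongsubf} $B$ is a strong subformula of $A_{j}$, which is case 2 of the conclusion. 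The only real subtlety is the asymmetry between the two disjuncts: in the abstraction case strong-subformula information attached to the freshly bound hypothesis $C$ must be downgraded to proper-subformula information about the conclusion $A$, while in the neutral case proper-subformula information about an argument's type $C$ must be upgraded to strong-subformula information about the head hypothesis $A_{j}$. The two disjuncts are calibrated exactly so that both transitions go through inside a single inductive loop.
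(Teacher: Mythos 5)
Your proof is correct and follows essentially the same route as the paper's: structural induction on the normal form, with the bound variable of an abstraction landing in case 1 because the arrow type is its own prime factor, strong-subformula information about the freshly bound hypothesis being downgraded to case 1 via transitivity, and proper-subformula information about a stack argument's type being upgraded to a strong subformula of the head hypothesis. The only cosmetic difference is that you invoke Proposition~\ref{proposition-strongsubf} where Definition~\ref{definition-strongsubf} already suffices.
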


\begin{proof}
By induction on $t$. 
%%% APPENDIX REFERENCE
See the Appendix.
\end{proof}

The next proposition says that each occurrence of any hypothesis
of a normal intuitionistic proof must be followed by an
elimination rule, whenever the hypothesis is neither $\bot$ nor a
subformula of the conclusion nor a proper subformula of some other
premise.

\begin{proposition}\label{proposition-app}
Let $t\in \nf$ be a simply typed $\lambda$-term and 
$$x_{1}^{A_{1}}, \ldots, x_{n}^{A_{n}}, z^{B}\vdash t: A$$
One of the following holds:

\begin{enumerate}

\item Every occurrence of $z^{B}$ in $t$ is of the form $z^{B}\, \xi$ for some proof term or projection $\xi$.

\item $B=\bot$ or $B$ is a subformula of  $A$ or a  proper subformula of one among the formulas $A_{1}, \ldots, A_{n}$.
\end{enumerate}
\end{proposition}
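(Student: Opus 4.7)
The plan is to proceed by structural induction on $t$, after strengthening the statement so that the context is arbitrary. Specifically, I prove: for any context $\Gamma, z^B$ with $\Gamma, z^B \vdash t : A$ and $t \in \nf$ a simply typed $\lambda$-term, either (1) every occurrence of $z^B$ in $t$ is of the form $z^B\, \xi$, or (2) $B = \bot$, or $B$ is a subformula of $A$, or $B$ is a proper subformula of the type of some variable in $\Gamma$. The original statement is recovered by specialization to $\Gamma = x_1^{A_1}, \ldots, x_n^{A_n}$.

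The base case $t = y$ is immediate: if $y = z$ then $B = A$ and condition (2) holds, otherwise condition (1) is vacuous. For $t = \lambda y^C. u$ with $A = C \IMPL D$, I apply the inductive hypothesis to $u : D$ in $\Gamma, y^C$; since both $C$ and $D$ are subformulas of $A$, every outcome of the IH collapses into condition (1) or (2) for $t$. The pair case $t = \langle u_0, u_1 \rangle$ and the ex-falso case $t = \efq{P}{u}$ are analogous, using respectively that the components of a conjunction are subformulas of it, and that a subformula of $\bot$ is $\bot$.

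The substantive case is the neutral form $t = y\, \sigma_1 \cdots \sigma_k$. A direct inspection along the stack shows that whenever some $\sigma_i$ is a proof term of type $T_i$, then $T_i$ is a \emph{proper} subformula of the type of $y$, since each application step strictly shrinks the type while projection steps only discard components. I then split on whether $y = z$. If $y \ne z$, say $y$ has type $E$ in $\Gamma$, applying the IH to each proof-term $\sigma_i$ (same context, smaller term) produces either condition (1) for $\sigma_i$ or condition (2); the only delicate branch of (2) is ``$B$ is a subformula of $T_i$'', which, combined with $T_i$ being a proper subformula of $E$, yields $B$ as a proper subformula of $E \in \Gamma$, i.e., condition (2) for $t$. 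If instead $y = z$, the head occurrence $z\, \sigma_1$ already has the required form, and since now the type of $y$ is $B$ itself and each $T_i$ is a proper subformula of $B$, the branch ``$B$ is a subformula of $T_i$'' would produce $B$ as a proper subformula of itself and is therefore impossible; the other branches yield condition (1) for each $\sigma_i$ or condition (2) directly, and combining across $\sigma_1, \ldots, \sigma_k$ closes this case.

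The main obstacle is precisely the neutral case: once one observes that the types of proof-term arguments in any stack are proper subformulas of the head's type, the awkward ``$B$ subformula of $T_i$'' branch of the IH is either outright impossible (when $y = z$) or translates cleanly into condition (2) via the type of $y \in \Gamma$ (when $y \ne z$); the remaining cases then reduce to routine subformula bookkeeping using that $C$ and $D$ are subformulas of $C \IMPL D$ and that $A_0, A_1$ are subformulas of $A_0 \wedge A_1$.
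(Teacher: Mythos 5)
Your proof is correct and follows essentially the same route as the paper's: structural induction on the normal term, with the intro/efq cases absorbed by routine subformula bookkeeping and the neutral case $y\,\sigma_{1}\cdots\sigma_{k}$ resolved by observing that each proof-term argument's type is a proper subformula of the head variable's type, splitting on whether $y=z$. Your ``strengthening'' to an arbitrary context is already built into the paper's statement (the $A_{i}$ and $n$ are arbitrary), so the two arguments coincide.
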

\begin{proof}
  Easy structural induction on the term. 
%%% APPENDIX REFERENCE
See the Appendix.\end{proof}

\begin{proposition}[Parallel Normal Form Property]
\label{proposition-parallelform} 
If $t\in
\nf$ is a $\lamg$-term, then it is in parallel form.
\end{proposition}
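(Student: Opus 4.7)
The plan is to proceed by structural induction on the proof term $t$. Recall that the statement requires that, after erasing parentheses, $t$ can be displayed as $t_{1}\parallel_{a_{1}}\cdots\parallel_{a_{n}} t_{n+1}$ with every $t_{i}$ free of the $\parallel$ operator, i.e.\ a simply typed $\lambda$-term.

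In the base case $t$ is a variable, hence already simply typed and trivially in parallel form with $n=0$. For the inductive step I would split on the outermost constructor of $t$. If this constructor is $\parallel_{a}$, so $t = u\parallel_{a} v$, then the induction hypothesis applied to $u$ and $v$ produces two parallel decompositions; after erasing the parentheses around them, their juxtaposition around $\parallel_{a}$ is again a sequence of simply typed $\lambda$-terms separated by $\parallel$, i.e.\ a parallel form for $t$.

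Otherwise the outermost constructor of $t$ is one of $\lambda x^{A}(-)$, application $(-)(-)$, projection $(-)\pi_{i}$, pairing $\langle -,-\rangle$, or $\efq{P}{-}$. For any immediate subterm $u$ of $t$, the induction hypothesis gives $u = u_{1}\parallel_{b_{1}}\cdots\parallel_{b_{n}} u_{n+1}$ with each $u_{i}$ simply typed. I claim $n=0$. Otherwise the outermost $\parallel_{b_{1}}$ inside $u$ could be extruded through the surrounding constructor by the corresponding permutation reduction of Figure~\ref{fig:red}, contradicting $t\in\nf$. This yields that every immediate subterm of $t$ is simply typed, hence $t$ itself is simply typed and thus a parallel form with a single component.

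The main obstacle is a bookkeeping one: I need to check, case by case on the non-$\parallel$ constructor, that a permutation rule pushing $\parallel$ outward is genuinely available in Figure~\ref{fig:red}, and that its freshness side conditions (the ``$a$ does not occur free in $w$'' clauses for application, pairing, and $\parallel$-commutation) can always be met. The latter is handled by the standing alpha-conversion convention stated before Figure~\ref{fig:red}: since $b_{1}$ is bound inside $u$, it can be renamed away from the other immediate subterms of $t$, so each claimed permutation truly is a redex and the contradiction with normality goes through.
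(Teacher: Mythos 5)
Your proof is correct and follows essentially the same route as the paper's: structural induction on $t$, with the key observation that in every non-$\parallel$ case an outermost $\parallel$ in an immediate subterm would trigger a permutation reduction of Figure~\ref{fig:red}, contradicting $t\in\nf$, so the induction hypothesis forces each such subterm to be a simply typed $\lambda$-term. Your extra remark about discharging the ``$a$ does not occur free in $w$'' side conditions via the standing $\alpha$-renaming convention is a sound piece of bookkeeping that the paper's proof leaves implicit.
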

\begin{proof}
      Easy structural induction on $t$ using the permutation
      reductions. 
%% APPENDIX REFERENCE
See the Appendix.\end{proof}

We finally prove the Subformula Property: a normal proof does not
contain concepts that do not already appear in the premises or in the
conclusion.

\begin{theorem}[Subformula Property]\label{theorem-subformula}
Suppose
$$x_{1}^{A_{1}}, \ldots, x_{n}^{A_{n}}\vdash t: A \quad \mbox{and} \quad
t\in \nf. \quad \mbox{Then}:$$ 
\begin{enumerate}
\item
For each communication variable $a$ occurring bound in  $t$ and with communication kind $B, C$, the prime factors of $B$ and $C$ are proper subformulas of  
$A_{1}, \ldots, A_{n}, A$. 
\item The type of any subterm of $t$ which is not a bound communication variable is either a subformula or a conjunction of subformulas of the formulas $A_{1}, \ldots, A_{n}, A$. 
\end{enumerate}

\end{theorem}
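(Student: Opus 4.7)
My plan is to prove parts~(1) and (2) simultaneously by structural induction on $t$, using the Parallel Normal Form Property (Proposition~\ref{proposition-parallelform}), the Bound Hypothesis Property (Proposition~\ref{proposition-boundhyp}), and Proposition~\ref{proposition-app}. The base case and the simply-typed constructor cases amount to standard bottom-up type tracking from the induction hypothesis applied to the immediate subterms.

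The crucial case is $t = u \parallel_a v : A$ with free variables $x_{1}^{A_{1}}, \ldots, x_{n}^{A_{n}}$. The heart of the argument is to show that the communication complexity of $a$ is~$0$. Suppose toward contradiction that it is positive. Since $t$ is normal, the trivial simplifications $u \parallel_a v \mapsto u$ and $\mapsto v$ do not apply, so $a$ occurs in both $u$ and $v$. By the Parallel Normal Form Property applied to $u$ and $v$ (each is normal as a subterm of a normal term), each is a parallel composition of simply typed $\lambda$-terms; however, any top-level parallel operator inside $u$ or $v$ would be removable by the permutation reduction for parallel operators, which is enabled precisely because the outer $a$ has positive complexity. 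Hence by normality $u$ and $v$ must themselves be simply typed $\lambda$-terms. Now Proposition~\ref{proposition-app}, applied to $u$ with distinguished free variable $a^{B \IMPL C}$, yields two alternatives: either every occurrence of $a$ in $u$ is of the form $a\,\xi$, or the type $B \IMPL C$ is $\bot$, a subformula of $A$, or a proper subformula of one of the $A_{i}$. In the second alternative, a short case analysis using the Characterization of Strong Subformulas (Proposition~\ref{proposition-strongsubf}) shows that every prime factor of $B$ and of $C$ is either a proper subformula of $A$ or a strong subformula of some $A_{i}$, forcing the complexity to $0$ and contradicting the assumption. The same analysis applies to $v$. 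We are left with the case where $a$ occurs in applied position in both $u$ and $v$: choosing the rightmost such occurrences produces simple contexts $\mathcal{C},\mathcal{D}$ around $a\,u'$ and $a\,v'$ (since $u$ and $v$ are simply typed $\lambda$-terms), so all side conditions of the cross reduction are satisfied and it fires, again contradicting normality.

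Once the complexity of $a$ is known to be $0$, part~(1) for $a$ follows directly from the definition. For the remaining communication variables inside $u$ and $v$, and for part~(2), I invoke the induction hypothesis on $u$ (normal, with free variables $x_{1}^{A_{1}}, \ldots, x_{n}^{A_{n}}, a^{B \IMPL C}$) and symmetrically on $v$. This yields subformula bounds relative to the premise list $A_{1}, \ldots, A_{n}, B \IMPL C, A$, which I strengthen by absorbing $B \IMPL C$: a prime factor that is a proper subformula of $B \IMPL C$ is either a prime factor of $B$ or $C$ (hence a proper subformula of $A_{1}, \ldots, A_{n}, A$ by the complexity-zero conclusion) or a proper subformula of such a prime factor (hence still properly below $A_{1}, \ldots, A_{n}, A$ by transitivity of the proper-subformula relation). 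The only delicate point is a non-communication subterm whose type is exactly $B \IMPL C$: by analysing its syntactic position inside the normal simply typed $\lambda$-term $u$, I argue it is either the whole of $u$ (so its type coincides with $A$) or it sits inside a constructor or eliminator whose enclosing type is already controlled by the induction hypothesis, forcing $B \IMPL C$ to appear as a subformula of a premise already in $A_{1}, \ldots, A_{n}, A$.

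The main obstacle is the case analysis inside the second alternative of Proposition~\ref{proposition-app}, which requires threading carefully through Definition~\ref{definition-strongsubf} and Proposition~\ref{proposition-strongsubf} to verify, for each shape an $A_{i}$ may take (atomic, arrow, or conjunctive), that the prime factors of $B$ and $C$ qualify as strong subformulas of $A_{i}$. A secondary point of care is the bookkeeping when removing the auxiliary formulas $B \IMPL C$ and $C \IMPL B$ from the premise lists while lifting the inductive conclusions from $u$ and $v$ back up to $t$.
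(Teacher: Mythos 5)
Your proposal is correct and follows essentially the same route as the paper's proof: reduce to parallel form, show in the crucial case $t = u \parallel_{a} v$ that the communication complexity of $a$ must be $0$ --- since otherwise $u$ and $v$ are simply typed $\lambda$-terms, Proposition~\ref{proposition-app} forces every occurrence of $a$ into applied position, and a cross reduction (or one of the simplifications $u \parallel_{a} v \mapsto u$, $\mapsto v$) would fire against normality --- and then conclude part~1 from the definition of communication complexity and part~2 by the induction hypothesis on $u$ and $v$ with the extra premises $B \IMPL C$, $C \IMPL B$ absorbed using the complexity-zero information. The only quibble concerns your ``delicate point'': once the complexity of $a$ is known to be $0$, the permutations between parallel operators are disabled, so at that stage $u$ and $v$ need not be simply typed $\lambda$-terms (they may themselves be parallel compositions), hence your positional analysis of a subterm of type exactly $B \IMPL C$ should be carried out on the simply typed components of their parallel forms (or folded into the induction) rather than on $u$ itself --- a detail the paper's own proof silently absorbs in its final sentence.
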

\begin{proof} 
We proceed by induction on $t$. 
By Proposition \ref{proposition-parallelform}
$t = t_{1}\parallel_{a_{1}} t_{2}\parallel_{a_{2}}\ldots \parallel_{a_{n}} t_{n+1}$
and each $t_{i}$, for $1\leq i\leq n+1$, is a simply typed
$\lambda$-term. We only show the case $t= u_{1}\parallel_{b} u_{2}$. 
Let $C, D$ be the communication
kind of $b$, we first show that the communication complexity of $b$ is $0$.
  We reason by contradiction and assume that it is greater than $0$. 
 $u_{1}$ and $u_{2}$ are either
simply typed $\lambda$-terms or of the form $v\parallel_{c} w$. The
second case is not possible, otherwise a permutation reduction could
be applied to $t\in \nf$. Thus $u_{1}$ and $u_{2}$ are simply typed
$\lambda$-terms. Since the communication complexity of $b$ is greater
than $0$, the types $C\rightarrow D$ and $D\rightarrow C$ are not
subformulas of $A_{1}, \ldots, A_{n}, A$. By Prop.~\ref{proposition-app}, every occurrence of $b^{C\rightarrow D}$ in
$u_{1}$ is of the form $b^{C\rightarrow D} v$ and every occurrence of
$b^{D\rightarrow C}$ in $u_{2}$ is of the form $b^{D\rightarrow C} w$.
Hence, we can write
$$u_{1}=\mathcal{C}[b^{C\rightarrow D} v] \qquad u_{2}=\mathcal{D}[b^{D\rightarrow C} w]$$
where $\mathcal{C}, \mathcal{D}$ are simple contexts and $b$ is rightmost. Hence a cross reduction of $t$ can be performed,
which contradicts the fact that $t\in\nf$.  Since we have established
that the communication complexity of $b$ is $0$, the prime factors of
$C$ and $D$ must be proper subformulas of $A_{1}, \ldots, A_{n},
A$. Now, by induction hypothesis applied to $u_{1}: A$ and $u_{2}: A$,
for each communication variable $a^{F\rightarrow G}$ occurring bound
in $t$, the prime factors of $F$ and $G$ are proper subformulas of the
formulas $ A_{1},  \ldots, A_{n}, A, C\rightarrow D, D\rightarrow C$
and thus of the formulas $A_{1}, \ldots, A_{n}, A$; moreover, the
type of any subterm of $u_{1}$ or $u_{2}$ which is not a communication
variable is either a subformula or a conjunction of subformulas of the
formulas $ A_{1}, \ldots, A_{n}, C\rightarrow D, D\rightarrow C$ and
thus of $A_{1},  \ldots, A_{n}, A$.\end{proof}

\begin{remark} Our statement of the Subformula Property is slightly
different from the usual one. However the latter can be easily
recovered using the communication rule
$(com_{end})$ of Section~\ref{section-ND} and additional reduction
rules.
As the resulting derivations would be isomorphic but more complicated,
we prefer the current statement.
 \end{remark}

\section{The Normalization Theorem}\label{section-normalization}
 
Our goal is to prove the Normalization Theorem for $\lamg$:
every proof term of $\lamg$ reduces in a finite number of steps to a
normal form.  By Subject Reduction, this implies that $\NGod$ proofs
normalize. We shall define a reduction strategy for terms of $\lamg$:
a recipe for selecting, in any given term, the subterm to which apply
one of our basic reductions. We remark that the permutations between
communications have been adopted to simplify the normalization proof,
but at the same time, they undermine strong normalization, because
they enable silly loops, like in cut-elimination for sequent
calculi. Further restrictions of the permutations might be enough to
prove strong normalization, but we leave this as an open problem.

The idea behind our normalization strategy is to employ a suitable
complexity measure for terms $u\parallel_{a} v$ and, each time a
reduction has to be performed, to choose the term of maximal
complexity. Since cross reductions can be applied as long as there is
a violation of the Subformula Property, the natural approach is to
define the complexity measure as a function of some fixed set of
formulas, representing the formulas that can be safely used without
violating the Subformula Property.

\begin{definition}[Complexity of Parallel
  Terms]\label{defi-acomplexity}
Let $\mathcal{A}$ be a finite set of formulas. The $\mathcal{A}$-\textbf{complexity} of the term $u\parallel_{a} v$ 
is the  sequence $(c, d, l, o)$ of natural numbers, where:
\begin{enumerate}
\item 
  if the communication kind of $a$ is $B, C$, then $c$ is the maximum
among $0$ and the number of symbols of the prime factors of $B$ or
$C$ that are not subformulas of some formula in $\mathcal{A}$;
\item $d$ is the number of occurrences of $\parallel$ in
$u$ and $v$;
\item $l$ is the sum of the lengths of the intuitionistic reductions
of $u$ and $v$ to reach intuitionistic normal form;
\item $o$ is the number of occurrences of $a$ in $u$ and $v$.
\end{enumerate}
The $\mathcal{A}$-\textbf{communication-complexity} of  $u\parallel_{a} v$  is $c$.

% {\color{red}
% \begin{lemma}\label{lem:one_step} Let $t: A$ be a term in parallel
% form. Assume that the free variables of $t$ are $x_{1}^{A_{1}},\ldots,
% x_{n}^{A_{n}}$ and let $\mathcal{A}$ be the set of the proper
% subformulas of $A$ and the strong subformulas of the formulas $A_{1},
% \ldots, A_{n}$. Let $c$ be the maximal among the
% $\mathcal{A}$-complexities of the subterms of $t$.  Suppose moreover
% that no subterm $u\parallel_{a} v$ of $\mathcal{A}$-complexity $c$
% contains a subterm of the same $\mathcal{A}$-complexity. Then there exists
% $t'$ such that $t \mapsto^* t' $ and the maximal among the
% $\mathcal{A}$-complexities of the subterms of $t'$ is strictly smaller
% than $c$.
% % We prove the theorem by
% % lexicographic induction on the triple
% % $$(|\mathcal{A}|, (k, r), s)$$
% % where $(k,r)$ is in turn lexicographically ordered, $|\mathcal{A}|$ is
% % the cardinality of $\mathcal{A}$, $k$ is the number of subterms of $t$
% % having maximal $\mathcal{A}$-complexity $r$ and $s$ is the size of
% % $t$. If $t$ is a simply typed $\lambda$-term, it has a normal form
% % \cite{Girard} and we are done; so we assume $t$ is not. There are two
% % main cases. \\
% \end{lemma}
% \begin{proof}
% Double induction on the number of communications of maximal
% $\mathcal{A}$-complexity and  ,   
% \end{proof}}

\end{definition}
For clarity, we define the recursive normalization algorithm that
represents the constructive content of the proofs of
Prop.~\ref{proposition-normpar} and \ref{proposition-normcomp}, which
are used to prove the Normalization Theorem.  Essentially, our master
reduction strategy consists in iterating the basic reduction relation
$\succ$ defined below, whose goal is to permute the smallest redex
$u\parallel_{a} v$ of maximal complexity until $u$ and $v$ are simply
typed $\lambda$-terms, then normalize them and finally apply the cross
reductions.

% {\color{red}\begin{definition}[Side Reduction Strategy]\label{defi-redstrategy}
% Let $t: A$ be a term with free variables $x_{1}^{A_{1}},\ldots,
% x_{n}^{A_{n}}$ and $\mathcal{A}$ be the set of the proper subformulas
% of $A$ and the strong subformulas of the formulas $A_{1}, \ldots,
% A_{n}$.  Let $u\parallel_{a} v$ the \emph{smallest subterm} of $t$, if
% any, among those of \emph{maximal} $\mathcal{A}$-complexity and let
% $c$ its $\mathcal{A}$-complexity.  We write\[t\succ
% t'\]whenever $t'$ has been obtained from $t$ by applying to
% $u\parallel_{a} v$ \textbf{REFERENCE TO LEMMA~\ref{lem:one_step}}
% \end{definition}
% }

\begin{definition}[Side Reduction Strategy]\label{defi-redstrategy}
Let $t: A$ be a term with free variables $x_{1}^{A_{1}},\ldots,
x_{n}^{A_{n}}$ and $\mathcal{A}$ be the set of the proper subformulas
of $A$ and the strong subformulas of the formulas $A_{1}, \ldots,
A_{n}$.  Let $u\parallel_{a} v$ be the \emph{smallest subterm} of $t$, if
any, among those of \emph{maximal} $\mathcal{A}$-complexity and let
$(c, d, l, o)$ be its $\mathcal{A}$-complexity.  We write\[t\succ
t'\]whenever $t'$ has been obtained from $t$ by applying to
$u\parallel_{a} v$:
\begin{enumerate}
\item a permutation reduction
$$(u_{1}\parallel_{b} u_{2})\parallel_{a} v \mapsto (u_{1}\parallel_{a} v)\parallel_{b} (u_{2}\parallel_{a} v)$$ $$u \parallel_{a} (v_{1}\parallel_{b} v_{2}) \mapsto (u\parallel_{a} v_{1})\parallel_{b} (u\parallel_{a} v_{2})$$
 if $d>0$ and $u=u_{1}\parallel_{b}u_{2}$ or
$v=v_{1}\parallel_{b}v_{2}$;
\item a sequence of intuitionistic reductions normalizing both $u$ and
$v$, if $d=0$ and $l>0$;
\item a cross reduction if $d=l=0$ and $c>0$, immediately followed by
intuitionistic reductions normalizing the newly generated simply typed
$\lambda$-terms and, if possible, by applications of the cross
reductions $u_{1} \parallel_{b} v_{1} \mapsto u_{1}$ and
$u_{1} \parallel_{b} v_{1} \mapsto v_{1}$ to the whole term.
\item a cross reduction $u \parallel_{a} v \mapsto u$ and
$u \parallel_{a} v \mapsto v$ if $d=l=c=0$.
\end{enumerate}
\end{definition}

\begin{definition}[Master Reduction Strategy]\label{defi-mastredstrategy}
We define a normalization algorithm $\nor{N}(t)$ taking as input a
typed term $t$ and producing a term $t'$ such that $t\mapsto^{*}
t'$.
Assume that the free variables of $t$ are
$x_{1}^{A_{1}},\ldots, x_{n}^{A_{n}}$ and let $\mathcal{A}$ be the set
of the proper subformulas of $A$ and the strong subformulas of the
formulas $A_{1}, \ldots, A_{n}$. The algorithm performs the following operations.
\begin{enumerate}
\item If $t$ is not in parallel form, then, using permutation
reductions, $t$ is reduced to a $t'$ which is in parallel form and
$\nor{N}(t')$ is recursively executed.
  
\item  \label{dot:simply} If $t$ is a simply typed $\lambda$-term, it is normalized and returned. If $t=u\parallel_{a} v$ is not a redex, then let $\nor{N}(u)=u'$ and  $\nor{N}(v)=v'$. If $u'\parallel_{a} v'$ is normal, it is returned. Otherwise, $\nor{N}(u'\parallel_{a} v')$ is recursively executed.

\item If $t$ is a redex, 
we select the \emph{smallest}
subterm $w$ of $t$ having maximal
$\mathcal{A}$-communication-complexity $r$. A sequence of terms is produced 
\[w\succ w_{1}\succ w_{2}\succ \ldots \succ w_{n}\]
such that $w_{n}$ has $\mathcal{A}$-communication-complexity strictly
smaller than $r$. We substitute $w_n$ for $w$ in $t$ obtaining $t'$
and recursively execute $\nor{N}(t')$.

\end{enumerate}  
\noindent We observe that in the step~\ref{dot:simply} of the algorithm $\nor{N}$, by
construction $u\parallel_{a} v$ is not a redex.  After $u$ and $v$ are
normalized respectively to $u'$ and $v'$, it can 
still be the case that $u'\parallel_{a} v'$ is not normal, because
some 
free variables of $u$ and $v$ may disappear
during the
normalization,
causing a new violation of the Subformula
Property that transforms $u'\parallel_{a} v'$ into a redex, even
though $u\parallel_{a} v$ was not.
 \end{definition}

\comment{
\begin{lemma}
Let $t: A$ be a term with free variables among
$$x_{1}^{A_{1}},\ldots, x_{n}^{A_{n}}$$
and $\mathcal{A}=\{A_{1}, \ldots, A_{n}, A\}$. Then $\succ$ is normalizing, that is, there are no infinite reduction chains
$$t\succ t_{0}\succ t_{1}\succ \ldots \succ t_{n}\succ\ldots$$
\end{lemma}
}

The first step of the normalization proof  % algorithm $\nor{N}$
consists in showing that any term can be reduced to a parallel form.

\begin{proposition}\label{proposition-normpar}
Let $t: A$ be any term. Then $t\mapsto^{*} t'$, where $t'$ is a parallel form. 
\end{proposition}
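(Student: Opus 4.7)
The plan is to proceed by structural induction on the term $t$, using the permutation reductions of Figure~\ref{fig:red} to push every occurrence of the parallel operator $\parallel_a$ outward through every other constructor of the calculus. The base case, when $t$ is a variable, is immediate, since a variable is a simply typed $\lambda$-term and hence trivially a parallel form (with $n=0$ in Definition~\ref{definition-head}).

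For the inductive step, in each case I invoke the induction hypothesis on the immediate subterms and then push the outer constructor inside. Consider for instance $t = u v$: by IH, $u \mapsto^{*} u'$ and $v \mapsto^{*} v'$ where $u' = u_{1} \parallel_{a_{1}} \cdots \parallel_{a_{n}} u_{n+1}$ and $v' = v_{1} \parallel_{b_{1}} \cdots \parallel_{b_{m}} v_{m+1}$, with all $u_{i}, v_{j}$ simply typed. After alpha-renaming the bound communication variables $a_{i}$ and $b_{j}$ to names fresh in $v'$ and $u'$ respectively, repeated applications of the permutation $(x \parallel_{a} y) w \mapsto x w \parallel_{a} y w$ distribute $v'$ across the parallel structure of $u'$, producing branches of the shape $u_{i} \, v'$; then repeated applications of $w (x \parallel_{b} y) \mapsto w x \parallel_{b} w y$ distribute each $u_{i}$ across the parallel structure of $v'$, producing leaves $u_{i} v_{j}$, each of which is a simply typed $\lambda$-term. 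The resulting term, after removal of parentheses, is thus a parallel form. The closure of $\mapsto^{*}$ under arbitrary contexts (the reduction scheme $\mathcal{E}[t] \mapsto \mathcal{E}[u]$) licenses applying these reductions deep inside $t$.

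The remaining constructors are analogous: for $\lambda x.\, u$, $u\, \pi_{i}$, $\langle u, v\rangle$ and $\efq{P}{u}$, one applies the induction hypothesis to the immediate subterm(s) and then uses the corresponding permutation rule (of the form ``constructor past $\parallel_a$'') to move the outer construct into each parallel branch. The case $t = u \parallel_{a} v$ is immediate: reducing $u$ and $v$ separately to parallel forms $u'$ and $v'$ yields $u \parallel_{a} v \mapsto^{*} u' \parallel_{a} v'$, and the $\parallel$-separated leaves of the latter are exactly the simply typed leaves of $u'$ together with those of $v'$.

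The only subtlety to watch is that every permutation used here carries a side condition of the shape ``$a$ does not occur free in $w$'', which is always satisfied after a preliminary alpha-conversion of the bound communication variables to names fresh in the surrounding context. Notably, the construction never requires commuting two parallel operators past each other, so the restriction ``the communication complexity of $b$ is greater than $0$'' on the parallel-parallel permutations is not invoked in this proof; that restriction will instead play a role in the later complexity-measure argument for full normalization. I do not expect any serious obstacle: the proof is a routine induction, with the only care being bookkeeping of bound names during the permutations.
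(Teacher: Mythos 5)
Your proof is correct and follows essentially the same route as the paper's: structural induction on $t$, applying the induction hypothesis to the immediate subterms and then using the permutation reductions to pull the parallel operators to the top, with the freshness side conditions discharged by the standing alpha-conversion convention. The observation that the parallel-parallel permutations (and hence their communication-complexity restriction) are never needed here also matches the paper's argument.
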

\begin{proof}
Easy structural induction on $t$. 
%% APPENDIX REFERENCE
See the Appendix.\end{proof}

% {\color{brown}{\color{green}\textbf{QUI}} [I think we could consider a
% subterm with one A-com-com maximal linearity rule and run an induction
% on the size of the subterm. If we need a permutation, we perform it
% and we use induction hypothesis on the immediate subterms, which are
% smaller; if we need a cross reduction, we perform it and we use the
% induction hypothesis on the copies of the old linearity rule, which
% are smaller. In both cases, the rule connecting the immediate subterms
% do not have maximal A-com-com complexity, and hence we show that
% inductively we can reduce the A-com-complexity. In the proof we can
% use the lemma as a generalized cross reduction. 

% We choose always the uppermost A-com-com, but we reduce the
% A-complexity. Indeed A-complexity should be a problem only during cross
% reductions of the same rule (avoided in the main proof by the lemma)
% and only with respect to lower maximal linearity rules (avoided in the
% lemma by focusing on one subterm).]}

We now prove that any term in parallel form can be normalized with the
help of the algorithm $\nor{N}$.
\begin{lemma}\label{lem:uebermensch} Let $t: A$ be a term in parallel
form which is not a simply typed $\lambda$-term and $\mathcal{A}$ a set of formulas
containing all proper subformulas of $A$ and closed under
subformulas. Assume that $r > 0$ is the maximum among the
$\mathcal{A}$-communication-complexities of the subterms of $t$. Assume
that the free variables $x_{1}^{A_{1}},\ldots, x_{n}^{A_{n}}$ of $t$
are such that for every ${i}$, either each strong subformula of
$A_{i}$ is in $ \mathcal{A}$, %or each proper prime subformula of $A_{i}$ is in $ \mathcal{A}$ or has at most $r$ symbols.
or each proper prime subformula of $A_{i}$ which has more than $r$ symbols  is in $ \mathcal{A}$.
  Suppose
moreover that no subterm $u\parallel_{a} v$ with
$\mathcal{A}$-communication-complexity $r$ contains a subterm of the
same $\mathcal{A}$-communication-complexity. Then there exists $t'$
such that $t \succ^* t' $ and the maximum among the
$\mathcal{A}$-communication-complexities of the subterms of $t'$ is
strictly smaller than $r$.
\end{lemma}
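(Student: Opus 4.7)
The plan is to induct on a well-founded multiset measure that tracks the residual work needed to eliminate subterms of complexity $r$. For fixed $\mathcal{A}$ and $r$, let $\mathcal{M}(t)$ be the multiset of tuples $(d_w, l_w, o_w)$ ranging over subterms $w = u \parallel_{a} v$ of $t$ whose $\mathcal{A}$-communication-complexity equals $r$, ordered by the multiset extension of the lexicographic order on $\mathbb{N}^3$. The hypothesis that no such $w$ properly contains another complexity-$r$ subterm ensures that I can always choose a minimal witness, and the goal is to show that one step of the side reduction $\succ$ applied to such a witness strictly decreases $\mathcal{M}(t)$; iterating then drives the maximal $\mathcal{A}$-communication-complexity below $r$, as required.

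Next I would case-split on $(d_w, l_w, o_w)$, matching the clauses of the side reduction strategy. If $d_w > 0$, a permutation replaces $w$ by a $\parallel_b$-composition whose outer channel $b$ has complexity strictly less than $r$ (by minimality of $w$, the nested $\parallel_b$ subterm was not of maximal complexity) and whose two inner subterms of the form $u_i \parallel_a v$ both have strictly smaller $d$-component than $w$. If $d_w = 0$ and $l_w > 0$, intuitionistic normalization of the two sides preserves the kind of $a$ and creates no parallel operators, so the tuple for $w$ drops from $(0, l_w, o_w)$ to $(0, 0, o')$. If $d_w = l_w = 0$, then $c = r > 0$ forces a cross reduction; since both $\mathcal{C}[a\,u]$ and $\mathcal{D}[a\,v]$ are normal simply typed $\lambda$-terms with the displayed $a$'s rightmost, no occurrence of $a$ can lie inside $u$ or $v$, and a direct count shows that each of the two surviving $\parallel_a$ subterms contains strictly fewer occurrences of $a$ than $w$. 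In every case the multiset $\mathcal{M}(t)$ strictly decreases, and the subsequent renormalization and optional trivial cross eliminations prescribed by $\succ$ only decrease it further.

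The hard part is the cross-reduction case: I must check that the freshly introduced channel $b$ does not contribute a new complexity-$r$ subterm. The prime factors of the kind of $b$ are precisely the types of the captured variables $\sq{y}$ and $\sq{z}$. Applying the Bound Hypothesis Property to the normal simply typed $\lambda$-terms $\mathcal{C}[a\,u]$ and $\mathcal{D}[a\,v]$ forces each such type to be either a proper subformula of a prime factor of the local conclusion type or a strong subformula of some free-variable type of the enclosing context. Combined with the closure of $\mathcal{A}$ under subformulas, the inclusion of all proper subformulas of $A$ in $\mathcal{A}$, and the two-pronged assumption on the free variables $x_i^{A_i}$ of $t$, this is exactly what is needed to place every such prime factor either into $\mathcal{A}$ or at a symbol-count of at most $r$, so that the complexity of $b$ is at most $r-1$. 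Once this estimate is in hand, the multiset induction goes through routinely, and the required $t'$ is produced.
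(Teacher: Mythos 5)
Your overall architecture (case analysis following the clauses of Definition~\ref{defi-redstrategy}, a well-founded measure over the complexity-$r$ subterms, and the Bound Hypothesis Property to tame the fresh channel $b$) mirrors the paper's proof, which inducts on the lexicographic pair (maximal $\mathcal{A}$-complexity, number of subterms attaining it) rather than your multiset; that difference is inessential. The genuine gap is that you never verify that the chosen subterm $u\parallel_a v$ is a \emph{redex} for the reduction you want to fire. Both the parallel permutations and the cross reduction in Figure~\ref{fig:red} carry the side condition that the \emph{communication complexity} of the channel in the sense of Definition~\ref{definition-comcomplexity} --- computed from the type of $u\parallel_a v$ and the types of \emph{its own} free variables, including the bound communication variables $a_j^{B_j\to C_j}$ of $t$ --- is positive. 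You conflate this with the $\mathcal{A}$-communication-complexity ($c=r>0$): these are different quantities, and bridging them is exactly where the paper spends effort, arguing by contradiction from the two-pronged hypothesis on the $A_i$ and from the maximality of $r$ (which bounds the prime factors of the $B_j,C_j$). Without this step the $\succ$ step of case (a) and the cross reduction of case (c) may simply not be applicable. Relatedly, in case (c) you also need Proposition~\ref{proposition-app} to know that every occurrence of $a$ in the two normal sides is in applied position $a\,\xi$, so that the sides really have the shape $\mathcal{C}[a\,w_1\sigma]\parallel_a\mathcal{D}[a\,w_2\tau]$ demanded by the rule and so that, after the substitution and the prescribed intuitionistic renormalization, the contexts stay normal and the count of $a$'s genuinely drops; your ``direct count'' presupposes this shape.

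The second problem sits at the crux of your third paragraph: from ``every prime factor of the kind of $b$ is in $\mathcal{A}$ or has \emph{at most} $r$ symbols'' you cannot conclude that the complexity of $b$ is at most $r-1$; by Definition~\ref{defi-acomplexity} it only gives at most $r$, and a $\parallel_b$ subterm of complexity $r$ (with $d$-component $2$) would destroy your multiset decrease, since it replaces a tuple with $d=0$. What the paper actually proves, and what you need, is the \emph{strict} bound: the types of the captured variables are \emph{proper} subformulas of $A$ or \emph{strong} subformulas of the $A_i$ and of the $B_j\to C_j$ (Proposition~\ref{proposition-boundhyp} plus Proposition~\ref{proposition-strongsubf}), hence, when not in $\mathcal{A}$, strictly smaller than prime formulas of at most $r$ symbols, giving complexity $\le r-1$; the degenerate case where the kind of $b$ is $\top$ must then be discharged separately by the simplification reductions $u_1\parallel_b v_1\mapsto u_1$, as in the paper, rather than by an appeal to them ``only decreasing the measure further.'' With the redex-condition argument supplied and the strict inequality restored, your multiset induction would go through.
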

\begin{proof}
We prove the lemma by lexicographic induction on the pair
\[(\rho , k)\] 
where $k$ is the number of subterms of $t$ with
maximal $\mathcal{A}$-complexity $\rho$ among those with
$\mathcal{A}$-communication-complexity $r$. 

% If $t$ is a simply typed $\lambda$-term, it has a normal form
% \cite{Girard} and we are done; so we assume $t$ is not.

% We first show that $t \succ t'$, for a $t'$ satisfying Definition
% \ref{defi-redstrategy}. 

Let $u\parallel_a v$ be the \emph{smallest} subterm of $t$ having
$\mathcal{A}$-complexity $\rho$. Four cases can occur.

(a) $\rho =(r, d, l, o)$, with $d>0$. We first show that the term
$u\parallel_a v$ is a redex. Assume that the free variables of $u\parallel_a v$ are among
$x_{1}^{A_{1}},\ldots, x_{n}^{A_{n}}, a_{1}^{B_{1}\rightarrow
C_{1}},\ldots, a_{m}^{B_{m}\rightarrow C_{m}}$ and that the
communication kind of $a$ is $C, D$.

Suppose by contradiction that all the prime factors of $C$ and
$D$ are proper subformulas of $A$ or strong subformulas of one among
$A_1, \ldots, A_n, B_1\rightarrow C_1, \ldots, B_m\rightarrow C_m$.
Given that $r>0$ there is a prime factor $P$ of $C$ or $D$ such that
$P$ has $r$ symbols and does not belong to $ \mathcal{A}$. The
possible cases are two: (i) $P$ is a proper subformula of a prime
proper subformula $A'_i $ of $A_i$ such that $ A'_i \notin
\mathcal{A}$; (ii) $P$, by Proposition~\ref{proposition-strongsubf},
is a proper subformula of a prime factor of $B_i$ or $C_i$. Suppose (i). Since $A'_{i}\notin \mathcal{A}$, by hypothesis the number of symbols of $A_i'$ is less than or equal to $r$, so
$P$ cannot be a proper subfomula of $A_i'$, which is a
contradiction. Suppose (ii). Then there is a prime factor of $B_i$ or
$C_i$ having a number of symbols greater than $r$. Since by hypothesis $a_i^{B_i\rightarrow C_i}$ is bound in $t$,  we conclude
that there is a subterm $w_1\parallel_{a_i} w_2$ of $t$ having
$\mathcal{A}$-complexity greater than $\rho$, which is absurd.

%  ; thus $P$ is a
% strong subformula either of some formula $B_i \rightarrow C_i$ and, by
% Proposition~\ref{proposition-strongsubf}, a proper subformula of a
% prime factor of $B_i$ or $C_i$. 

Now, since $d>0$, we may assume $u=w_1\parallel_b w_2$ 
(the case $v=w_1\parallel_b w_2$ is symmetric).
The term \[(w_1\parallel_b w_2)\parallel_a v\]
is then a redex of $t$ and by replacing it with
\begin{equation}\label{eq:1}
(w_1\parallel_a v)\parallel_b  (w_2\parallel_a v)
\end{equation} we obtain from $t$ a term $t'$ such that $t\succ t'$
according to Def.~\ref{defi-redstrategy}. We must verify that we can
apply to $t'$ the main induction hypothesis. Indeed, the reduction
$t\succ t'$ duplicates all the subterms of $v$, but all of their
$\mathcal{A}$-complexities are smaller than $r$, because $u\parallel_a
v$ by choice is the smallest subterm of $t$ having maximal
$\mathcal{A}$-complexity $\rho$. The two terms $w_1\parallel_a v$ and
$w_2\parallel_a v$ have smaller $\mathcal{A}$-complexity than $\rho$,
because they have numbers of occurrences of the symbol $\parallel$
strictly smaller than in $u\parallel_{a} v$. Moreover, the terms in
$t'$ with~\eqref{eq:1} as a subterm have, by hypothesis,
$\mathcal{A}$-communication-complexity smaller than $r$ and hence
$\mathcal{A}$-complexity smaller than $\rho$. Assuming that the
communication kind of $b$ is $F, G$, the prime factors of $F$ and $G$
that are not in $\mathcal{A}$ must have fewer symbols than the prime
factors of $C$ and $D$ that are not in $\mathcal{A}$, again because
$u\parallel_a v$ by choice is the smallest subterm of $t$ having
maximal $\mathcal{A}$-complexity $\rho$; hence, the
$\mathcal{A}$-complexity of~\eqref{eq:1} is smaller than
$\rho$. Therefore the number of subterms of $t'$ with
$\mathcal{A}$-complexity $\rho$ is strictly smaller than $k$. By
induction hypothesis, $t'\succ^{*} t''$, where $t''$ satisfies the
thesis.\\

(b) $\rho=(r, d, l, o)$, with $d=0$ and $l>0$. Since $d=0$, $u$ and
$v$ are simply typed $\lambda$-terms -- and thus strongly normalizable
\cite{Girard} -- so we may assume $u\mapsto^{*} u'\in \nf$ and
$v\mapsto^{*} v'\in \nf$ by a sequence of intuitionistic reduction
rules. By replacing in $t$ the subterm $u\parallel_a v$ with
$u'\parallel_a v'$, we obtain a term $t'$ such that $t\succ t'$
according to Definition \ref{defi-redstrategy}. Moreover, the terms in
$t'$ with $u'\parallel_a v'$ as a subterm have, by hypothesis,
$\mathcal{A}$-communication-complexity smaller than $r$ and hence
$\mathcal{A}$-complexity is smaller than $\rho$. By induction
hypothesis, $t'\succ^{*} t''$, where $t''$ satisfies the thesis.\\

(c) $\rho=(r, d, l, o)$, with $d=l=0$. Since $d=0$, $u$ and $v$ are
simply typed $\lambda$-terms. Since $l=0$, $u$ and $v$ are in normal
form and thus satisfy conditions 1. and 2. of
Proposition~\ref{proposition-boundhyp}.  We need to check that
$u\parallel_a v$ is a redex, in particular that the communication
complexity of $a$ is greater than $0$. Assume that the free variables
of $u\parallel_a v$ are among $x_{1}^{A_{1}},\ldots, x_{n}^{A_{n}},
a_{1}^{B_{1}\rightarrow C_{1}},\ldots, a_{m}^{B_{m}\rightarrow C_{m}}$
and that the communication kind of $a$ is $C, D$. As we argued above,
we obtain that not all the prime factors of $C$ and $D$ are proper
subformulas of $A$ or strong subformulas of one among $A_1, \ldots,
A_n, B_1\rightarrow C_1, \ldots, B_m\rightarrow C_m$. By Definition
\ref{definition-comcomplexity}, $u\parallel_a v$ is a redex.

We now prove that every occurrence of $a$ in $u$ and $v$ is of the
form $a\, \xi$ for some term or projection $\xi$. First of all, $a$
occurs with arrow type both in $u$ and $v$. Moreover, $u: A$ and $v:
A$, since $t: A$ and $t$ is a parallel form; hence, the types
$C\rightarrow D$ and $D\rightarrow C$ cannot be subformulas of $A$,
otherwise $r=0$, and cannot be proper subformulas of one among $A_{1},
\ldots, A_{n}, B_{1}\rightarrow C_{1}, \ldots, B_{n}\rightarrow
C_{n}$, otherwise the prime factors of $C, D$ would be strong
subformulas of one among $A_1, \ldots, A_n, B_1\rightarrow C_1,
\ldots, B_m\rightarrow C_m$. Thus by Prop.~\ref{proposition-app} we
are done. Two cases can occur.

\begin{itemize}
\item $a$ does not occur in $u$ or $v$: to fix ideas, let us say it
does not occur in $u$. By performing a cross reduction, we replace in
$t$ the term $u\parallel_{a} v$ with $u$ and obtain a term $t'$ such
that $t\succ t'$ according to Def.~\ref{defi-redstrategy}. After the replacement, the
number of subterms having maximal $\mathcal{A}$-complexity $\rho$ in $t'$ is
strictly smaller than the number of such subterms in $t$. By induction
hypothesis, $t'\succ^{*} t''$, where $t''$ satisfies the thesis.

\item $a$ occurs in $u$ and in $v$. Let $u=\mathcal{C}[a\, w_1\,
\sigma]$ and $v=\mathcal{D}[a\, w_2\,\tau]$ where the displayed
occurrences of $a$ are the rightmost in $u$ and $v$ and $\sigma ,
\tau$ are the stacks of \emph{all} terms or projections $a$ is applied
to. By applying a cross reduction to  
$\mathcal{C}[a\, w_{1}\,\sigma]\parallel_{a} \mathcal{D}[a\, w_{2}\,\tau]$
we obtain the term $(\ast)$
$$ \small (
\mathcal{D}[w_1^{b\langle \sq{z}\rangle / \sq{y}}\,\tau]
\parallel_{a} 
\mathcal{C}[a\, w_1])\, \parallel_{b}\, (\mathcal{C}[w_2^{b\langle \sq{y}\rangle / \sq{z}}\,\sigma]\parallel_{a} \mathcal{D}[a\, w_2])$$
By hypothesis, $\sq{y}$ is the sequence of the free variables of $w_1$
which are bound in $\mathcal{C}[a\, w_1\,\sigma]$ and $\sq{z}$ is the
sequence of the free variables of $w_2$ which are bound in
$\mathcal{D}[a\, w_2\,\tau]$ and $a$ does not occur neither in $w_1$
nor in $w_2$. Since $u, v$ satisfy conditions 1. and 2. of 
Proposition~\ref{proposition-boundhyp}
the types $Y_{1}, \ldots, Y_{i}$ and $Z_{1}, \ldots, Z_{j}$ of
respectively the variables $\sq{y}$ and $\sq{z}$ are proper
subformulas of $A$ or strong subformulas of the formulas $A_{1},
\ldots, A_{n}, B_{1}\rightarrow C_{1}, \ldots, B_{m}\rightarrow
C_{m}$. Hence, the types among $Y_{1}, \ldots, Y_{i}, Z_{1},
\ldots, Z_{j}$ which are not in $\mathcal{A}$ are strictly smaller
than all the prime factors of the formulas $B_1, C_1, \ldots, B_m,
C_m$. Since the communication kind of $b$ is $Y_{1}\land\ldots \land
Y_{i}, Z_{1}\land \ldots\land Z_{j}$, by Definition
\ref{defi-acomplexity} either the $\mathcal{A}$-complexity of the term
$(\ast)$ above is strictly smaller than the $\mathcal{A}$-complexity
$\rho$ of $u\parallel_{a} v$, or the communication kind of $b$ is
$\top$. In the latter case we apply a cross reduction
$u_1 \parallel_{b} v_1 \mapsto u_1$ or $u_1 \parallel_{b} v_1 \mapsto
v_1$ and obtain a term with $\mathcal{A}$-complexity strictly smaller
than $\rho$.

In the former case, let $w_{1}', w_{2}'$ be simply typed $\lambda$-terms such that
$$w_1^{b\langle \sq{z}\rangle / \sq{y}}\,\tau\mapsto^{*} w_{1}'\in\nf \; \mbox{and} \;
w_2^{b\langle \sq{y}\rangle / \sq{z}}\,\sigma\mapsto^{*} w_{2}'\in \nf$$
By hypothesis, $a$ does not occur in $w_{1}, w_{2}, \sigma, \tau$ and
thus neither in $w_{1}'$ nor in $w_{2}'$. Moreover, by the assumptions
on $\sigma$ and $\tau$ and since $\mathcal{C}[a\, w_1\,\sigma]$ and
$\mathcal{D}[a\, w_2\,\tau]$ are normal simply typed $\lambda$-terms,
$\mathcal{C}[w_{2}']$ and $\mathcal{D}[ w_1']$ are normal too and
contain respectively one fewer occurrence of $a$ than the former
terms. Hence, the $\mathcal{A}$-complexity of the terms
$$ \mathcal{D}[w_1'] \parallel_{a} \mathcal{C}[a\, w_1] \quad \mbox{and} \quad
\mathcal{C}[w_2']\parallel_{a} \mathcal{D}[a\, w_2]$$
is strictly smaller than the $\mathcal{A}$-complexity $\rho$ of $u\parallel_{a} v$. Let now $t'$ be the term obtained from $t$ by replacing the term 
$\mathcal{C}[a\, w_{1}\,\sigma]\parallel_{a} \mathcal{D}[a\, w_{2}\,\tau]$
with
\begin{equation}\label{eq:very_big}
 (\mathcal{D}[w_1'] \parallel_{a} \mathcal{C}[a\,
w_1])\, \parallel_{b}\,(\mathcal{C}[w_2']\parallel_{a} \mathcal{D}[a\,
w_2])
\end{equation}
By construction $t\succ t'$.  Moreover, the terms in $t'$
with~\eqref{eq:very_big} as a subterm have, by hypothesis,
$\mathcal{A}$-communication-complexity smaller than $r$ and hence
$\mathcal{A}$-complexity is smaller than $\rho$. Hence, we can apply
the main induction hypothesis to $t'$ and obtain by induction
hypothesis, $t'\succ^{*} t''$, where $t''$ satisfies the thesis.
\end{itemize}

(d) $\rho =(r, d, l, o)$, with $d=l=o=0$. Since $o=0$, $u\parallel_a
v$ is a redex. To fix ideas, let us say $a$ does not occur in $u$.  By
performing a cross reduction, we replace $u\parallel_a v$ with $u$ so
that $ u\parallel_a v \succ u$ according to
Def.~\ref{defi-redstrategy}. Hence, by induction
hypothesis, $t'\succ^{*} t''$, where $t''$ satisfies the thesis.
\end{proof}

\begin{proposition}\label{proposition-normcomp} Let $t: A$ be any term
in parallel form. Then $t\mapsto^{*} t'$, where $t'$ is a
parallel normal form.
\end{proposition}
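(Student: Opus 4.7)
The plan is to iterate Lemma \ref{lem:uebermensch} by an outer induction on the maximum $\mathcal{A}$-communication-complexity $r$ of the subterms of $t$, where $\mathcal{A}$ denotes the closure under subformulas of the set containing the proper subformulas of $A$ together with the strong subformulas of the types of the free variables of $t$. This is exactly the set prescribed by the master reduction strategy (Definition \ref{defi-mastredstrategy}), and it is closed under subformulas and contains every proper subformula of $A$, as required by Lemma \ref{lem:uebermensch}.

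First I would dispatch the base case $r=0$. Since $t$ is in parallel form, $t = t_{1}\parallel_{a_{1}} \ldots \parallel_{a_{n}} t_{n+1}$ with each $t_{i}$ a simply typed $\lambda$-term, so each $t_{i}$ can be normalized by standard strong normalization of the simply typed $\lambda$-calculus. Once every component is intuitionistic-normal, the only applicable basic reductions left are the two cross reductions $u\parallel_{a}v \mapsto u$ and $u\parallel_{a}v \mapsto v$ (when $a$ is absent from the corresponding side): cross reductions opening new channels and the parallel--parallel permutations are disabled by the requirement of positive communication complexity, while the remaining permutations do not apply because no $\parallel$ is trapped inside any of the simply typed $\lambda$-term components. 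Since each such simplification strictly decreases the number of $\parallel$-occurrences, finitely many steps suffice to reach a parallel normal form.

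For the inductive step $r>0$, I would perform an inner induction on the number $k$ of subterms of $t$ having $\mathcal{A}$-communication-complexity exactly $r$. I pick an innermost such subterm $w$, so that no proper subterm of $w$ attains complexity $r$, and apply Lemma \ref{lem:uebermensch} to $w$ with the set $\mathcal{A}$, obtaining $w \succ^{*} w'$ where every subterm of $w'$ has $\mathcal{A}$-communication-complexity strictly less than $r$. Substituting $w'$ back for $w$ in $t$ produces a term $t^{*}$ with $t \mapsto^{*} t^{*}$; the surrounding context of $w$ is unchanged and its communication complexity is independent of the internal structure of $w'$, so the number of subterms of $t^{*}$ of complexity $r$ is strictly smaller than $k$. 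The inner induction then closes the step, after which the outer induction applies with maximum complexity strictly below $r$.

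The main obstacle is verifying the remaining hypotheses of Lemma \ref{lem:uebermensch} for $w$, in particular the condition relating the types of its free variables to $\mathcal{A}$. Original free variables of $t$ are handled directly by the definition of $\mathcal{A}$. For a communication variable $c^{B \rightarrow B'}$ bound above $w$ in $t$, the key observation is that the surrounding operator $\parallel_{c}$ is itself a subterm of $t$, hence has $\mathcal{A}$-communication-complexity at most $r$; by Definition \ref{definition-comcomplexity} this forces every prime factor of $B$ or $B'$ with more than $r$ symbols to lie in $\mathcal{A}$, and subformula-closure of $\mathcal{A}$ then upgrades this to the required condition on the proper prime subformulas of $B \rightarrow B'$. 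With this verification in hand, the two nested inductions terminate at a parallel normal form, as desired.
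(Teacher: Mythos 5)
Your inductive step is essentially the paper's own argument: you pick an innermost subterm of maximal $\mathcal{A}$-communication-complexity, verify the free-variable hypothesis of Lemma~\ref{lem:uebermensch} exactly as the paper does (via the maximality of $r$ and closure of $\mathcal{A}$ under subformulas), and replace the subterm by the output of the lemma. The genuine gap is in your base case $r=0$. There you claim that, once the simply typed components are normalized, cross reductions and the permutations between parallel operators are ``disabled by the requirement of positive communication complexity''. But the complexity that gates those reductions (Definition~\ref{definition-comcomplexity} and Figure~\ref{fig:red}) is computed from the type and the \emph{free variables of the subterm} $u\parallel_{a}v$ itself, not from the fixed set $\mathcal{A}$ built from the free variables of the whole term $t$. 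A subterm need not contain every $x_{i}^{A_{i}}$ free in $t$, and---as the paper explicitly warns after Definition~\ref{defi-mastredstrategy}---free variables can disappear while the components are being normalized, or when one side is discarded by $u\parallel_{a}v\mapsto u$. A prime factor of the kind of $a$ that is a strong subformula of some $A_{i}$ lies in $\mathcal{A}$ and hence contributes $0$ to the $\mathcal{A}$-communication-complexity; yet if $x_{i}$ is not (or no longer) free in the subterm, that factor may be neither a proper subformula of $A$ nor a strong subformula of the types of the subterm's own free variables, so the communication complexity of $a$ is positive and a cross reduction or a parallel permutation fires. Those reductions create new $\parallel$'s, so your termination measure for the base case (the number of $\parallel$-occurrences) breaks down.

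This is not a cosmetic point: it is exactly the phenomenon the paper's proof is organized around. Its induction is on the quadruple $(|\mathcal{A}|, r, k, s)$ with the cardinality of the allowed-formula set as the \emph{leading} component; in the non-redex case it recursively normalizes the two sides $u$ and $v$ (with sets $\mathcal{A}'\subseteq\mathcal{A}$, $\mathcal{A}''\subseteq\mathcal{A}$) and, if the recombined term $u'\parallel_{a}v'$ has become a redex because free variables vanished, it observes that the new set $\mathcal{B}$ of allowed formulas satisfies $\mathcal{B}\subset\mathcal{A}$ and invokes the induction hypothesis for the strictly smaller cardinality. Your outer induction on $r$ with inner induction on $k$ has no component that decreases in this situation, so the argument as written cannot be closed. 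To repair it you would need to add such a component (for instance $|\mathcal{A}|$, recomputed from the surviving free variables) and handle the $r=0$ case along the lines of the paper's first case rather than by the counting argument you give.
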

\begin{proof} Assume that the free variables of $t$ are
$x_{1}^{A_{1}},\ldots, x_{n}^{A_{n}}$ and let $\mathcal{A}$ be the set
of the proper subformulas of $A$ and the strong subformulas of the
formulas $A_{1}, \ldots, A_{n}$.  We prove the theorem by
lexicographic induction on the quadruple
\[(|\mathcal{A}|, r, k, s)\] where $|\mathcal{A}|$ is the cardinality
of $\mathcal{A}$, $r$ is the maximal
$\mathcal{A}$-communication-complexity of the subterms of $t$, $k$ is
the number of subterms of $t$ having maximal
$\mathcal{A}$-communication-complexity $r$ and $s$ is the size of
$t$. If $t$ is a simply typed $\lambda$-term, it has a normal form
\cite{Girard} and we are done; so we assume $t$ is not. There are two
main cases. \\
\medskip

\noindent \textit{First case}: $t$ \emph{is not a redex}.  Let
$t=u\parallel_{a} v$ and let $B, C$ be the communication kind of $a$.
Then, the communication complexity of $a$ is $0$ and by Def.~\ref{definition-comcomplexity} every prime factor of $B$ or $C$
belongs to $\mathcal{A}$.  Let $\mathcal{A}'$ be the set of the proper
subformulas of $A$ and the strong subformulas of the formulas $A_{1},
\ldots, A_{n}, B\rightarrow C$; let $\mathcal{A}''$ be the set of the
proper subformulas of $A$ and the strong subformulas of the formulas
$A_{1}, \ldots, A_{n}, C\rightarrow B$. By Prop.~\ref{proposition-strongsubf}, every strong subformula of $B\rightarrow
C$ or $C\rightarrow B$ is a proper subformula of a prime factor
of $B$ or $C$, and this prime factor is in
$\mathcal{A}$. Hence, $\mathcal{A}'\subseteq \mathcal{A}$ and
$\mathcal{A}''\subseteq \mathcal{A}$.

 If $\mathcal{A}'= \mathcal{A}$, then the maximal
$\mathcal{A}'$-communication-complexity of the terms of $u$ is less than or equal to $r$
and the number of terms having maximal $\mathcal{A}'$-communication-complexity is
less than or equal to $k$; since the size of $u$ is strictly smaller than
that of $t$, by induction hypothesis $u\mapsto^{*} u'$, where $u'$ is
a normal parallel form.

If $\mathcal{A}'\subset\mathcal{A}$, again by induction hypothesis
$u\mapsto^{*} u'$, where $u'$ is a normal parallel form.

The very same argument on $\mathcal{A}''$ shows that $v\mapsto^{*}
v'$, where $v'$ is a normal parallel form.

Let now $t'=u'\parallel_{a} v'$, so that $t\mapsto^{*} t'$. If $t'$ is
normal, we are done. If $t'$ is not normal, since $u'$ and $v'$ are
normal, the only possible redex remaining in $t'$ is the whole term
itself, i.e., $u'\parallel_{a} v'$: that happens only if the free
variables of $t'$ are fewer than those of $t$; w.l.o.g., assume they
are $x_{1}^{A_{1}}, \ldots, x_{i}^{A_{i}}$, with $i< n$.  Let
$\mathcal{B}$ be the set of the proper subformulas of $A$ and the
strong subformulas of the formulas $A_{1}, \ldots, A_{i}$. Since $t'$
is a redex, the communication complexity of $a$ is greater than $0$;
by Definition \ref{definition-comcomplexity}, a prime factor of $B$ or
$C$ is not in $\mathcal{B}$, so we have $\mathcal{B}\subset
\mathcal{A}$. By induction hypothesis, $t'\mapsto^{*}t''$, where $t''$
is a parallel normal form.
  
\bigskip

\noindent \textit{Second case}: $t$ \emph{is a redex}.  Let
$u\parallel_a v$ be the \emph{smallest} subterm of $t$ having
$\mathcal{A}$-communication-complexity $r$. The free variables of
$u\parallel_a v$ satisfy the hypotheses of Lemma~\ref{lem:uebermensch}
either because have type $A_i$ and $\mathcal{A}$ contains all the
strong subformulas of $A_i$, or because the prime proper subformulas
of their type have at most $r$ symbols, by
maximality of $r$. By Lemma~\ref{lem:uebermensch} $ u\parallel_a v
\succ^* w $ where the maximal among the
$\mathcal{A}$-communication-complexity of the subterms of $w$ is
strictly smaller than $r$. Let $t'$ be the term obtained replacing $w$
for $u\parallel_a v$ in $t$. We can now apply the induction hypothesis
and obtain $t' \mapsto^*t''$ with $t''$ in parallel normal form.
\end{proof}
The normalization for $\lamg$, and thus for $\NGod$, easily follows.

\begin{theorem}\label{theorem-normalization} Suppose that  $ t: A$ is a proof term  of $\LC$. Then $t\mapsto^{*} t': A$, where $t'$ is a normal parallel form.
\end{theorem}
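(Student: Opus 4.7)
The plan is to combine the two propositions that have just been established, namely Proposition \ref{proposition-normpar} and Proposition \ref{proposition-normcomp}, in a two-stage reduction. First, I would apply Proposition \ref{proposition-normpar} to $t$ to obtain a reduction sequence $t \mapsto^{*} t_{0}$ where $t_{0}$ is a parallel form. Then, since Proposition \ref{proposition-normcomp} applies to any term in parallel form, I would feed $t_{0}$ into it to obtain $t_{0} \mapsto^{*} t'$ with $t'$ a parallel normal form.

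Concatenating the two reduction sequences yields $t \mapsto^{*} t'$ by transitivity of $\mapsto^{*}$. The typing $t' : A$ follows immediately from the Subject Reduction Theorem (Theorem \ref{subjectred}), which guarantees that reductions preserve types and do not introduce new free variables. In particular, since $t : A$ and $t \mapsto^{*} t'$, we have $t' : A$ as required.

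There is essentially no obstacle in this final step, because both the permutation machinery (used to reach a parallel form in Proposition \ref{proposition-normpar}) and the intricate strategy $\succ$ using $\mathcal{A}$-complexity (used in Proposition \ref{proposition-normcomp}, via the key Lemma \ref{lem:uebermensch}) have already absorbed the technical difficulty. The only subtle point worth checking is that the notions of reduction used in the two propositions are compatible, in the sense that the reductions produced by the master strategy $\nor{N}$ and the side reduction $\succ$ are indeed instances of $\mapsto^{*}$ as given in Figure \ref{fig:red}; this is immediate from the definitions, since $\succ$ is defined in Definition \ref{defi-redstrategy} as a restriction of $\mapsto$ to specific redexes chosen by the complexity measure.

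Thus the theorem is essentially a corollary of the two preceding propositions together with Subject Reduction, and the proof amounts to chaining them.
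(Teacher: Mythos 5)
Your proof is correct and is exactly the argument the paper intends: the theorem is stated right after Proposition~\ref{proposition-normcomp} with the remark that normalization ``easily follows,'' i.e.\ by chaining Proposition~\ref{proposition-normpar} (reduction to parallel form) with Proposition~\ref{proposition-normcomp} (normalization of parallel forms), with Theorem~\ref{subjectred} giving $t':A$. Your added check that $\succ$ and the steps of $\nor{N}$ are instances of $\mapsto^{*}$ is a sensible observation but not a departure from the paper's route.
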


\section{Computing with $\lamg$}
We illustrate the expressive power of $\lamg$ by a few examples.
All the examples employ the normalization algorithm in Definition \ref{defi-mastredstrategy};
to limit its non-determinism, when we have to reduce $u\parallel_{a}v$ because $a$ does not occur neither in $u$ nor in $v$, we always use the reduction $u\parallel_{a}v \mapsto u$. 

Henceforth we use the types $\mathbb{N}$ for natural
numbers, $\mathsf{Bool}$ for the Boolean values and $\mathsf{String}$ for strings.

We start by showing that $\lamg$ is more expressive than 
simply typed $\lambda$-calculus.
\begin{example}[\textbf{Parallel or}]\label{ex:parallel_or}
Berry's sequentiality theorem (see \cite{Girard}) implies that there
is no $\lambda$-term $\mathsf{O}: \mathsf{Bool} \IMPL \mathsf{Bool}
\IMPL\mathsf{Bool} $ such that $\mathsf{O}\mathsf{F}\mathsf{F} \mapsto \mathsf{F}$, $\mathsf{O}u\mathsf{T} \mapsto \mathsf{T}$,
$\mathsf{O}\mathsf{T}u \mapsto \mathsf{T}$,   
where $u$ is an arbitrary normal term, and thus possibly a variable.
 $\mathsf{O}$ can instead be defined in
Boudol's parallel $\lambda$-calculus~\cite{Boudol89}.

The $\lamg$ term for such parallel or is
(as usual the term ``$\mathsf{if} \, u \, \mathsf{then} \, s \,
\mathsf{else} \, t$'' reduces to $s$ if
$u = \mathsf{T}$, and to $t$ if $u = \mathsf{F}$):

\begin{footnotesize}
  \[ \mathsf{O} := \lambda x^{\mathsf{Bool}} \, \lambda
    y^{\mathsf{Bool}} \, (\mathsf{if} \, x \, \mathsf{then} \,
    (\lambda z \, \lambda k \, z ) \, \mathsf{else} \, (\lambda z \,
    \lambda k \, k ) )  \mathsf{T} (ax)  \]
  \[{ } \hspace{58pt}  \parallel_{a} (\mathsf{if} \, y \, \mathsf{then} \, (\lambda
    z \, \lambda k \, z ) \, \mathsf{else} \, (\lambda z \, \lambda k
    \, k ) ) \mathsf{T} (ay)  \]
\end{footnotesize}where the communication kind of $a$ is $
\mathsf{Bool} 
, \mathsf{Bool}$. Now \begin{small}\begin{align*}
                          \mathsf{O}\, u \,\mathsf{T}     \mapsto^{*} & (\mathsf{if} \, u \, \mathsf{then} \, (\lambda z \, \lambda k \,
    z ) \, \mathsf{else} \, (\lambda z \, \lambda k \, k )
    )\mathsf{T}(au)
\\
&    \parallel_{a} (\mathsf{if} \, \mathsf{T} \, \mathsf{then} \,
    (\lambda z \, \lambda k \, z ) \, \mathsf{else} \, (\lambda z \,
    \lambda k \, k ) )\mathsf{T}(a \mathsf{T})
  \\
\mapsto^{*} &  (\mathsf{if} \, u \, \mathsf{then} \, (\lambda z
    \, \lambda k \, z ) \, \mathsf{else} \, (\lambda z \, \lambda k \,
    k ) )\mathsf{T}(au ) \parallel_{a} \mathsf{T} \,  \mapsto \,
    \mathsf{T}
  \end{align*}
\end{small}And symmetrically $\mathsf{O}\,  \mathsf{T}\, u \, \mapsto^{*} \, \mathsf{T} $.
On the other hand\begin{small}
  \begin{align*}
\mathsf{O} \, \mathsf{F} \,\mathsf{F}  \mapsto^{*} & \quad 
    (\lambda z \, \lambda k \, k ) \mathsf{T}(a  \mathsf{F})
    \parallel_{a} (\lambda z \, \lambda k \, k ) \mathsf{T}( a
    \mathsf{F})
\\
\mapsto^{*} & \quad ( a \mathsf{F})
    \parallel_{a} ( a \mathsf{F} )
\\
\mapsto^{*} & \quad
    (\mathsf{F}
    \parallel_{a} ( a  \mathsf{F})
    )\parallel _{b} (\mathsf{F} \parallel_{a}   ( a \mathsf{F})  ) \quad \mapsto^{*} \quad \mathsf{F}
  \end{align*}\end{small}

\end{example}

\begin{example}[\textbf{Data passing}]\label{ex:pi_calc}
As in the previous example, if the messages sent during a cross
reduction are closed terms, for example data, the outcome is a simple unidirectional
message passing. Indeed, the newly introduced communication is void
and is always removed: 
$$\mathcal{C}[a\, u]\parallel_{a} \mathcal{D}[a \, v] \\ \mapsto$$ 
$$(\mathcal{D}[u] \parallel_{a}
\mathcal{C}[a\, u]) \parallel_{b} (\mathcal{C}[v]\parallel_{a}
\mathcal{D}[a\, v]) \, \mapsto \, \mathcal{D}[u] \parallel_{a}
\mathcal{C}[a\, u]$$ 
If we want a process $s$ to transmit a message $m:B$ to a process $t$ 
without $ t$ passing anything back, we can use the following term ($a$ has communication kind  $(B\IMPL
F)\IMPL F, F\IMPL F$):
\begin{small}
  \[(a \lambda z^{A\rightarrow F }
 \, zm)s \parallel_{a} (a \lambda y^{F} \, y)(\lambda x^{B} \, t)
 \; \mapsto \]\[
    ( (\lambda z \, zm)(\lambda x\, t)
    \parallel_{a} (a \lambda z \, zm)s )
    \parallel_{e}
    ((\lambda y\, y)s  \parallel_{a} (a \lambda
    y \, y)(\lambda x\, t))\]
  \[ \mapsto^{*} \quad (\lambda z \, zm)(\lambda x\,
    t)\parallel_{e} (\lambda y\, y)s \quad \mapsto^{*} \quad t[m/x]\parallel_{e} s\]
\end{small} \noindent This reduction resembles indeed the unidirectional communication $\overline{a}\langle m\rangle.P \mid a(x).Q \mapsto P \mid Q[m/x]$ in the $\pi$-calculus \cite{Milner, sangiorgiwalker2003}, assuming $a$ does not occur in $P$ and $Q$.
\end{example}

In the following example, similar to that in \cite{CP2010}, we
simulates the communication needed to conclude an online sale.
\begin{example}[\textbf{Buyer and vendor}]\label{ex:sale} 
We model the following transaction: a buyer tells a vendor a product
name $\mathsf{prod}:\mathsf{String}$, the vendor computes the value
$\mathsf{price}:\mathbb{N}$ of $\mathsf{prod}$ and sends it to the
buyer, the buyer sends back the credit card number
$\mathsf{card}: \mathsf{String}$ which is used to pay.

We introduce the following  functions:
$\mathsf{cost} : \mathsf{String}\rightarrow \mathbb{N} $ with input a product name $\mathsf{prod}$ and output its cost $\mathsf{price}$; $\mathsf{pay\_for}: \mathbb{N}\rightarrow
\mathsf{String} $ with input a $\mathsf{price}$ and output a credit card
number $\mathsf{card}$; $\mathsf{use}: \mathsf{String}\rightarrow \mathbb{N}$ that obtains money
using as input a credit card number $\mathsf{card}: {\mathsf{String}} $.  The buyer and the
vendor are the contexts $\mathcal{B}$ and $\mathcal{V}$ of type
$\mathsf{Bool}$.
Notice that the terms representing buyer and vendor exchange their
position at each cross reduction. For $a$ of kind
$\mathsf{String} , \mathbb{N}$, the program is: \begin{small}\begin{align*} & \mathcal{B}[a( \mathsf{pay\_for}(a(\mathsf{prod})))] \parallel_{a} \mathcal{V}[\mathsf{use}( a(
    \mathsf{cost}(a\, 0)))]
\\
&\mapsto^{*} \mathcal{V}[\mathsf{use}( a( \mathsf{cost}(
      \mathsf{prod} ))) ] \parallel_{a}
      \mathcal{B}[a( \mathsf{pay\_for}(a(\mathsf{prod})))]
\\ &\mapsto \mathcal{V}[\mathsf{use}( a( \mathsf{price}))
]\parallel_{a} \mathcal{B}[a( \mathsf{pay\_for}(a(\mathsf{prod})))] \\
&\mapsto^{*}  \mathcal{B}[a( \mathsf{pay\_for}(\mathsf{price}))
] \parallel_{a} \mathcal{V}[\mathsf{use}( a( \mathsf{price}))] \\
&\mapsto  \; \mathcal{B}[a (\mathsf{card})] \parallel_{a}
\mathcal{V}[\mathsf{use}( a( \mathsf{price}))] \, \mapsto^{*} \,
\mathcal{V}[\mathsf{use}( \mathsf{card})] \parallel_{a} \mathcal{B}[a(
\mathsf{card})]
  \end{align*}\end{small}Finally $ \mapsto
\mathcal{V}[\mathsf{use}( \mathsf{card})]$: the buyer has
performed its duty and the vendor uses the card number to obtain the
due payment. 
\end{example}

We show that although more complicated than sending data, 
sending open processes can enhance efficiency.
\begin{example}[\textbf{Efficiency via cross
reductions}]\label{ex:code_mobility} 
Given three processes $M \parallel_{d} (
P \parallel_{a} Q)$. Assume that $Q$ wants to send a process to $P$, but one of
the process' parameters is not available because $M$ first needs many
time-consuming steps to produce it and only afterwards can send it to
$Q$. Cross reductions make it possible to fully exploit parallelism
and improve the program efficiency: $Q$ does not need to wait that
much and can send the process directly to $P$, which can begin to
partially evaluate it with no further delay. After having computed
the data, $M$ sends it to $Q$ which in turn forwards it to $P$.

For a concrete example, assume that
\begin{small}
  \begin{align*} M& \quad \mapsto^{*}\; d \, (\lambda k^{\mathbb{N}
\IMPL \mathbb{N} \IMPL \mathbb{N}} \, k \, 7 \, 0 ) \\ P & \quad =
\quad d \, 0 \, (\lambda j^{\mathbb{N}} \, \lambda x^{\mathbb{N}} \,
(ax)5s) \\ Q & \quad = \quad d \, 0 \, (\lambda y^{\mathbb{N}} \,
\lambda l^{\mathbb{N}} \, a(\lambda z^{\mathbb{N}} \, \lambda i
^{\sigma} \, h \langle g(z),y \rangle ))
  \end{align*}
\end{small}where \mbox{$h : \mathbb{N} \ET \mathbb{N} \IMPL
\mathbb{N}$}, \mbox{$g : \mathbb{N} \IMPL \mathbb{N}$}, the
communication kind of $d$ is $ (\mathbb{N} \IMPL \mathbb{N} \IMPL
\mathbb{N}) \IMPL \mathbb{N} , \mathbb{N}$, and the communication kind
of $a$ is $\mathbb{N} , \mathbb{N} \IMPL \sigma \IMPL \mathbb{N}$ with
$\sigma$ arbitrary type of high complexity.  Here $Q$ wants to send
$\lambda z^{\mathbb{N}} \, \lambda i ^{\sigma} \, h \langle g(z),y \rangle $ to $P$,
but the value $7$ of the parameter $y$ is computed and transmitted to $Q$ by $M$
only later. On the other hand, $P$ waits for the process from $Q$ in order to instantiate $z$
with $5$ and compute $h \langle g(5), 7 \rangle$.

Without a special mechanism for sending open terms, $P$ must wait for $M$ to
normalize. Afterwards $M$ passes $(\lambda k \, k \, 7 \, 0)$
through $d$ to $P$ and $Q$ with the following computation:
\begin{footnotesize}
  \begin{align*}
    M \parallel_{d} ( P \parallel_{a} Q) \mapsto^{*} & (\lambda k \, k
\, 7 \, 0) (\lambda j \, \lambda x\, (ax)5s)
                  \parallel_{a} \\ &
                  (\lambda k \, k \,  7 \, 0) (\lambda y \,
                  \lambda l \, a(\lambda z \, \lambda i \, h \langle g(z),y \rangle )) \mapsto^{*} 
\\
(a\,0)5s \parallel  _{a}  a(\lambda z \, \lambda i \, h \langle g & (z),7 \rangle )
\mapsto^{*} 
(\lambda z \, \lambda i \, h \langle g(z),7 \rangle )5s \mapsto^{*}  h \langle g(5),7 \rangle 
  \end{align*}
\end{footnotesize}
Our normalization algorithm allows instead  $Q$ to directly send $\lambda z^{\mathbb{N}} \, \lambda i
^{\sigma} \, h \langle g(z),y \rangle $ to $P$ by executing first a cross reduction:
\begin{footnotesize}\begin{align*}
         & M 
    \parallel_{d} \Big( 
d \, 0 (\lambda j \,  \lambda
 x \, (ax)5s)
\parallel_{a}
d \, 0 (\lambda y\, \lambda l \,  a(\lambda
z \, \lambda i \, h \langle g(z),y\rangle)) 
\Big) 
 \\
 \mapsto &
      M 
      \parallel_{d}
\Big(
 (d \, 0 (\lambda y\, \lambda l \,
by
) ||_{a} P )
\parallel_{b}
\big(d \, 0 (\lambda j \, \lambda
 x \, (
\lambda
z \, \lambda i  \, h \langle g(z),bx \rangle
)5s) \parallel_{a} Q\big)\Big) \\
    \mapsto&^{*}
      M 
      \parallel_{d}
\Big(
d \, 0 (\lambda y\, \lambda l \,
by
)
\parallel_{b}
d \, 0 (\lambda j \, \lambda
 x \, (
\lambda
z \, \lambda i  \, h \langle g(z),bx \rangle
)5s) \Big) 
    \end{align*}
  \end{footnotesize}  where the communication $b$ (of kind $\mathbb{N} ,
\mathbb{N}$) redirects the data $x$ and $y$. Then $P$ instantiates $z$
with $5$ and can compute for example $g(5)=9$ without having to
evaluate $h \langle g(5),7 \rangle $ all at once. When $M$ terminates
the computation, sends $7$ to the new location of the partially
evaluated processes $P$ and $Q$ via $\parallel_{b}$:\begin{small}
  \begin{align*}
\mapsto^{*} &
 M 
      \parallel_{d}
\Big(
d \, 0 (\lambda y\, 
\lambda l \,  by
)
\parallel_{b}
d \, 0 (\lambda j \,\lambda
 x \, h \langle g(5),bx \rangle 
)
 \Big)
\\
\mapsto^{*}&
\Big(
M
\parallel_{d}
d \, 0 (\lambda y\, 
\lambda l \,  by
)\Big) 
\parallel_{b}
\Big( M 
      \parallel_{d}
d \, 0 (\lambda j \,\lambda
 x \, h \langle 9,bx \rangle 
)
\Big)
\\
\mapsto^{*} &
(\lambda k \, k \, 7 \, 0) (\lambda y\, 
\lambda l \, by
)
\parallel_{b}
(\lambda k \, k \, 7 \, 0) (\lambda
 j \, \lambda
 x \,  h \langle 9,bx \rangle 
)
\\
\mapsto^{*}&
\quad b7
\parallel_{b}
h \langle 9,b\, 0 \rangle 
\quad \mapsto^{*} \quad 
h \langle 9,7 \rangle 
\end{align*}
\end{small}
\end{example}

\noindent
{\bf Final Remark}
The Curry--Howard isomorphism for  
$\lamg$ interprets G\"odel logic in terms of  
communication  between parallel processes. In addition to revealing
this connection, our results pave the way towards a more general
computational interpretation of the intermediate logics formalized by
hypersequent calculi. These logics are characterized by disjunctive
axioms of a suitable form~\cite{CGT08} -- containing all the
disjunctive tautologies of~\cite{DanosKrivine} -- and likely
correspond to other communication mechanisms between parallel
processes.

%\section*{Acknowledgment}
%
%
%{\color{red}\textbf{The authors would like to thank...}}

% trigger a \newpage just before the given reference
% number - used to balance the columns on the last page
% adjust value as needed - may need to be readjusted if
% the document is modified later
%\IEEEtriggeratref{8}
% The "triggered" command can be changed if desired:
%\IEEEtriggercmd{\enlargethispage{-5in}}

% references section

% can use a bibliography generated by BibTeX as a .bbl file
% BibTeX documentation can be easily obtained at:
% http://mirror.ctan.org/biblio/bibtex/contrib/doc/
% The IEEEtran BibTeX style support page is at:
% http://www.michaelshell.org/tex/ieeetran/bibtex/
%\bibliographystyle{IEEEtran}
% argument is your BibTeX string definitions and bibliography database(s)
%\bibliography{IEEEabrv,../bib/paper}
%
% <OR> manually copy in the resultant .bbl file
% set second argument of \begin to the number of references
% (used to reserve space for the reference number labels box)

\appendix

\section{Appendix}

% In the following theorem we avoid the use of the short notation for
% communication upper inferences and we explicitly write,
% instead, the implications discharged by the $\COM$ applications.

% \subsection{Subject Reduction} \label{app:subject_reduction}

%%%%INIZIO  1

\noindent \textbf{Propostition~\ref{proposition-strongsubf}}~(Characterization of Strong Subformulas)\textbf{.}
% \begin{proposition}[Characterization of Strong Subformulas]
%   \ref{proposition-strongsubf}
Suppose $B$ is any {strong subformula} of $A$. Then:
\begin{itemize}
\item If $A=A_{1}\land \ldots \land A_{n}$, with $n>0$ and $A_{1}, \ldots, A_{n}$ are prime, then $B$ is a proper subformula of one among $A_{1}, \ldots, A_{n}$. 
\item If $A=C\rightarrow D$, then $B$ is a proper subformula of a prime factor of $C$ or $D$. 
\end{itemize}
% \end{proposition}
\begin{proof}\mbox{}
\begin{itemize}
\item Suppose $A=A_{1}\land \ldots \land A_{n}$, with $n>0$ and $A_{1}, \ldots, A_{n}$ are prime. Any prime proper subformula of $A$ is a subformula of one among $A_{1}, \ldots, A_{n}$, so $B$ must be a proper subformula of one among $A_{1}, \ldots, A_{n}$.

\item Suppose  $A=C\rightarrow D$. Any prime proper subformula $\mathcal{X}$ of $A$ is first of all a subformula of $C$ or $D$. Assume now $C=C_{1}\land \ldots \land C_{n}$ and $D=D_{1}\land \ldots \land D_{m}$, with $C_{1}, \ldots, C_{n}, D_{1}, \ldots, D_{m}$ prime. Since  $\mathcal{X}$ is prime, it must be a subformula of one among $C_{1}, \ldots, C_{n}, D_{1}, \ldots, D_{m}$ and since $B$ is a proper subformula of $\mathcal{X}$, it must be a proper subformula of one among $C_{1}, \ldots, C_{n}, D_{1}, \ldots, D_{m}$, QED. 
\end{itemize}
\end{proof}

%%%%FINE  1

%%%% INIZIO  2
\noindent \textbf{Theorem}~\ref{subjectred}~(Subject Reduction)\textbf{.}
% \begin{theorem}[Subject
%   Reduction]
%   \ref{subjectred}
If $t : A$ and $t \mapsto u$, then
    $u : A$ and all the free variables of $u$ appear among those of
    $t$.
% \end{theorem}
\begin{proof} 
% It is enough to prove the that each basic reduction
% satisfies the statement. The theorem can then be proved by a simple
% induction on the number of basic reductions leading from $t$ to
% $u$. We consider the basic reductions and show that each one satisfies
% the statement of the theorem.
\begin{enumerate}
\item  $(\lambda x\, u)t\mapsto u[t/x]$.

A term $(\lambda x\, u)t$ corresponds to a derivation of the form
\[\infer{(\lambda x\, u)t : B}{\infer[^{1}]{\lambda x\, u: A \IMPL
B}{\infer*{u:B}{\Gamma_{0} , [x:A]^{1}}} & & \deduce{t:
A}{\deduce{{\cal P}}{\Gamma_{1}}}}\] where $\Gamma_{0}$ and $\Gamma_{1}$ are the open
hypotheses. Hence, the term
$u[t/x]$ corresponds to
\[ \infer*{u:B} { \Gamma_{0} & \deduce{t: A}{\deduce{{\cal P}}{\Gamma_{1}}} }\]
where all occurrences of $[x:A]^{1}$ are replaced by the derivation
${\cal P}$ of $t: A$. It is easy to prove, by induction on the
length of the derivation from $\Gamma_{0} , x:A$ to $B$, that this is a
derivation of $B$. The two derivations have the same open hypotheses
$\Gamma_{0}$ and $\Gamma_{1}$, and hence the two terms $(\lambda x\, u)t$ and
$u[t/x]$ have the same free variables.  Indeed, it is easy to show
that if a term corresponds to a derivation, each free variable of such
term corresponds to an open hypothesis of the derivation.

\item $ \pair{u_0}{u_1}\pi_{i}\mapsto u_i$, for $i=0,1$.

A term $\pair{u_0}{u_1}\pi_{i}$ corresponds to a derivation of the
form
\[\infer{\pair{u_0}{u_1}\pi_{i} : A_{i} }{\infer{\pair{u_0}{u_1}:
A_{0} \ET A_{1}} { \infer*{u_{0}: A_{0}}{\Gamma_{0}} & &
\infer*{u_{1}: A_{1}}{\Gamma_{1}} }}
\] where $\Gamma_{0}$ and $\Gamma_{1}$ are the open hypotheses.
Hence, the term $u_i$ corresponds to
\[ \infer*{u_{i}:A_{i}} { \Gamma_{i}}\] Clearly, all open hypotheses
of the derivation of $u_{i}:A_{i}$ also occur in the derivation of
$\pair{u_0}{u_1}\pi_{i} : A_{i}$, hence the free variables of $u_{i}$
are a subset of those of $\pair{u_0}{u_1}\pi_{i}$.

\item $(\Ecrom{a}{u}{v}) w \mapsto \Ecrom{a}{uw}{vw}$, if $a$ does not
occur free in $w$.

A term $(\Ecrom{a}{u}{v}) w$ corresponds to a derivation of the
form
\[\infer{(\Ecrom{a}{u}{v}) w : D }{\infer[^{1}]{
\Ecrom{a}{u}{v}:C \IMPL D}{\infer*{u:C \IMPL D}{\Gamma_{0} , [a:A \IMPL B]^{1} } &&
\infer*{v:C \IMPL D}{\Gamma_{1} , [a:B \IMPL A]^{1} }} & &
\infer*{w: C}{\Gamma_{2}} }
\] where $\Gamma_{0} , \Gamma_{1}$ and $\Gamma_{2}$ are the open hypotheses.
Hence, the term $ \Ecrom{a}{uw}{vw}$ corresponds to
\[\infer[^{1}]
  {
    \Ecrom{a}{uw}{vw} : D
  }
  {
    \infer{uw:D}
    {
      \infer*{u:C \IMPL
        D}
      {
        \Gamma_{0} , [a:A \IMPL B]^{1}
      }
      &&
      \infer*{w: C}
      {\Gamma_{2}}
    }
    &&
    \infer{vw:D}
    {
      \infer*{v:C \IMPL D}{\Gamma_{1} , [a:B \IMPL A]^{1} }      
      &&
      \infer*{w: C}{\Gamma_{2}}
    }
  }
\] 
Notice that the formulas corresponding to the terms $a$ are discharged
above the appropriate premise of the $\COM$ rule application. All open hypotheses
of the second derivation also occur in the first derivation, hence the
free variables of  $(\Ecrom{a}{u}{v}) w$ are also free variables of $\Ecrom{a}{uw}{vw}$.

\item $w(\Ecrom{a}{u}{v}) \mapsto \Ecrom{a}{wu}{wv}$, if $a$ does not
occur free in $w$.

A term $w(\Ecrom{a}{u}{v})$ corresponds to a derivation of the
form
\[\infer{w(\Ecrom{a}{u}{v}): D }{\infer*{w: C \IMPL D}{\Gamma_{0}} && \infer[^{1}]{
\Ecrom{a}{u}{v}:C }{\infer*{u:C }{\Gamma_{1} , [a:A \IMPL B]^{1} } &&
\infer*{v:C}{\Gamma_{2} , [a:B \IMPL A]^{1} }}}
\]

 where $\Gamma_{0} , \Gamma_{1}$ and $\Gamma_{2}$ are the open hypotheses.
Hence, the term $ \Ecrom{a}{wu}{wv}$ corresponds to
\[\infer[^{1}]
  {
    \Ecrom{a}{wu}{wv} : D
  }
  {
    \infer{wu:D}
    {
      \infer*{w: C\IMPL D}{\Gamma_{0}}
&
      \infer*{u:C}{\deduce{[a:A \IMPL B]^{1}}{\Gamma_{1} ,}  }
    }
&
    \infer{wv:D}
    {
      \infer*{w: C  \IMPL
        D}{\Gamma_{0}}
&
      \infer*{v:C}{\deduce{[a:B \IMPL A]^{1}}{\Gamma_{2} ,}  }
    }
  }
\]
In which the formulas corresponding to the terms $a$ are discharged above the appropriate premise of the $\COM$ rule application. All open
hypotheses of the second derivation also occur in the first
derivation, and thus the free variables of
$\Ecrom{a}{wu}{wv}$ are also free variables of  $w(\Ecrom{a}{u}{v})$.

\item $\efq{P}{\Ecrom{a}{w_{1}}{w_{2}}} \mapsto
\Ecrom{a}{\efq{P}{w_{1}}}{\efq{P}{w_{2}}}$, where $P$ is atomic.

A term $\efq{P}{\Ecrom{a}{w_{1}}{w_{2}}} $ corresponds to a derivation of the
form
\[\vcenter{
\infer{\efq{P}{\Ecrom{a}{w_{1}}{w_{2}}}: P}{
\infer{\Ecrom{a}{w_{1}}{w_{2}} : \FAL}{\infer*{w_{1}: \FAL}{\Gamma
    _{1} , [a:A \IMPL B]^{1}} && \infer*{w_{2}: \FAL}{\Gamma _{2} , [a:B \IMPL A]^{1}}}}
}\]
 where $\Gamma_{1}$ and $\Gamma_{2}$ are the open hypotheses.
Hence, the term $\Ecrom{a}{\efq{P}{w_{1}}}{\efq{P}{w_{2}}}$ corresponds to
\[
\vcenter{\infer[^{1}]{\Ecrom{a}{\efq{P}{w_{1}}}{\efq{P}{w_{2}}} :
    P}{\infer{\efq{P}{w_{1}} : P}{\infer*{\FAL}{\Gamma _{1} , [a:A \IMPL B]^{1}}}
&&& \infer{\efq{P}{w_{1}}:P}{\infer*{\FAL}{\Gamma _{2} , [a:B \IMPL A]^{1}}}}}\] 
In which the formulas corresponding to the terms $a$ are discharged
above the appropriate premise of the $\COM$ rule application. All open
hypotheses of the second derivation also occur in the first
derivation, and thus
$\Ecrom{a}{\efq{P}{w_{1}}}{\efq{P}{w_{2}}}$ has the same free
variables as $\efq{P}{\Ecrom{a}{w_{1}}{w_{2}}} $.

\item 
$(\Ecrom{a}{u}{v})\pi_{i} \mapsto \Ecrom{a}{u\pi_{i}}{v\pi_{i}}$, for $i=0,1$.

A term $(\Ecrom{a}{u}{v})\pi_{i}$ corresponds to a derivation of the
form
\[\infer{ (\Ecrom{a}{u}{v})\pi_{i} : A_{i}}{
\infer[^{1}]{\Ecrom{a}{u}{v} : C_{0} \ET C_{1}}{\infer*{u:C_{0} \ET
C_{1}}{\Gamma _{1} , [a:A \IMPL B]^{1}} & \infer*{v:C_{0} \ET
C_{1}}{\Gamma _{2} , [a:B \IMPL A]^{1}} }}
\] where $\Gamma_{1}$ and $\Gamma_{2}$ are the open hypotheses.
Hence, the term $\Ecrom{a}{u\pi_{i}}{v\pi_{i}}$ corresponds to
\[ \infer{\Ecrom{a}{u\pi_{i}}{v\pi_{i}}: C_{i}}{
\infer{u\pi _{i}:C_{i}}{\infer*{u:C_{0} \ET C_{1}}{\Gamma _{1}
, [a:A \IMPL B]^{1}}} & \infer{v\pi
_{i}:C_{i}}{\infer*{v:C_{0} \ET C_{1}}{\Gamma _{2} , [a:B \IMPL
A]^{1}}}}
\] In which the formulas corresponding to the terms $a$ are discharged
above the appropriate premise of the $\COM$ rule application. All open
hypotheses of the second derivation also occur in the first
derivation, and thus the free variables of $ (\Ecrom{a}{u}{v})\pi_{i}
$ and of $ \Ecrom{a}{u\pi_{i}}{v\pi_{i}} $ are the same.

\item $\lambda x\,(\Ecrom{a}{u}{v}) \mapsto \Ecrom{a}{\lambda x\,u}{\lambda
x\, v} $.

A term $\lambda x\,(\Ecrom{a}{u}{v})$
corresponds to a derivation 
\[
\vcenter{\infer[^{2}]{\lambda x\,(\Ecrom{a}{u}{v}): C\IMPL
D}{\infer[^{1}]{\Ecrom{a}{u}{v}:D}{\infer*{u:D}{\Gamma _{1}
, [x:C]^{2}, [a:A \IMPL B]^{1} } &&& \infer*{v:D}{\Gamma _{2} , [x:C]^{2},  [a:B \IMPL
A]^{1}}}} }\]
where 
$\Gamma_{1}$ and $\Gamma_{2}$ 
are the open hypotheses.
Hence, the term
 $\Ecrom{a}{\lambda x\,u}{\lambda
x\, v} $
corresponds to
\[\infer[^{3}]{\Ecrom{a}{\lambda x\,u}{\lambda
x\, v} : C\IMPL D}{ \infer[^{1}] {\lambda x \,
u: C\IMPL D} {\infer*{u:D}{\Gamma _{1}
, [x:C]^{1}, [a:A \IMPL B]^{3} }} & \infer[^{2}]{\lambda x \,
v: C \IMPL D}{\infer*{v:D}{\Gamma _{2} , [x:C]^{2},  [a:B \IMPL
A]^{3}}}}
\]
 In which the formulas corresponding to the terms $a$ are discharged
above the appropriate premise of the $\COM$ rule application. All open
hypotheses of the second derivation also occur in the first
derivation, and thus $\lambda
x\,(\Ecrom{a}{u}{v})$ and $ \Ecrom{a}{\lambda x\,u}{\lambda
x\, v} $ have the same free variables.

\item $\langle u \parallel_{a} v,\, w\rangle \mapsto \langle u,
w\rangle \parallel_{a} \langle v, w\rangle$, if $a$ does not
occur free in $w$.

A term 
$\langle u \parallel_{a} v,\, w\rangle$
corresponds to a derivation 
\[ \infer{\langle u \parallel_{a} v,\, w\rangle : C \ET D } {
    \infer[^{1}]{ u \parallel_{a} v : C}{
\infer*{u:C}{\Gamma _{0}, [a:A \IMPL B]^{1}} & \infer*{v:C}{\Gamma _{1},
[a:B \IMPL A]^{1}}}  && \infer*{w:D}{\Gamma _{2}}}\] where
$\Gamma_{0}, \Gamma_{1}$ and $\Gamma_{2}$ 
are the open hypotheses.
Hence, the term
$\langle u,
w\rangle \parallel_{a} \langle v, w\rangle$
corresponds to
\[ \infer[^{1}]{
\langle u,
w\rangle \parallel_{a} \langle v, w\rangle : C \ET D} { \infer{\langle u,
w\rangle:C \ET D} {
\infer*{u:C}{\Gamma _{0}, [a:A \IMPL B]^{1}} & \infer*{w:D}{\Gamma
_{2}} } & \infer{\langle v, w\rangle: C \ET D} { \infer*{v:C}{\Gamma _{1}, [a:B
\IMPL A]^{1}} & \infer*{w:D}{\Gamma _{2}}} }
\]
 In which the formulas corresponding to the terms $a$ are discharged
above the appropriate premise of the $\COM$ rule applications. All open
hypotheses of the second derivation also occur in the first
derivation, and thus the free variables of the terms 
 $\langle u \parallel_{a} v,\, w\rangle$ and $ \langle u,
w\rangle \parallel_{a} \langle v, w\rangle$
are the same.

\item $\langle w, \,u \parallel_{a} v\rangle \mapsto \langle w,
u\rangle \parallel_{a} \langle w, v\rangle, \mbox{ if $a$ does not
occur free in $w$}$.

The case is symmetric to the previous one.

% \[ \infer{C \ET D } { \infer*{C}{\Gamma _{1}} && \infer[^{1}]{D}{
% \infer*{D}{\Gamma _{2}, [a:A \IMPL B]^{1}} & \infer*{D}{\Gamma _{3},
% [a:B \IMPL A]^{1}}} }\] 

\item $(u\parallel_{a} v)\parallel_{b} w \mapsto (u\parallel_{b}
w)\parallel_{a} (v\parallel_{b} w),$ if the communication
complexity of $b$ is greater than $0$.

A term 
$(u\parallel_{a} v)\parallel_{b} w$
corresponds to a derivation
  \[ \vcenter{\infer[^{2}]{ (u \parallel_{a} v) \parallel_{b} w : E }
    {\deduce{u \parallel_{a} v : E}{{\cal D }_{1}} &&&
      \deduce{w:E}{{\cal P}_{2}}} } \qquad \text{where}\]
\[{\cal P}_{1}  \equiv \qquad \vcenter{\infer[^{1}]{ u \parallel_{a} v : E}{ \infer*{u:E}{\deduce{ [b:C \IMPL D]^{2}}{\Gamma _{0},
          [a:A \IMPL B]^{1},}} &&& \infer*{v:E} {\deduce{[b:C \IMPL D]^{2}}{\Gamma
          _{1}, [a:B \IMPL A]^{1},} }}} 
\]

\[{\cal P}_{2} \quad \equiv \qquad \quad  \vcenter{\infer*{w:E}{\Gamma _{2}, [b:D \IMPL C]^{2}}}\]

where
$\Gamma_{0}, \Gamma_{1}$ and $\Gamma_{2}$ are the open hypotheses.
Hence, the term
$ (u\parallel_{b}
w)\parallel_{a} (v\parallel_{b} w)$
corresponds to
  \[ \infer[^{3}]{ (u\parallel_{b} w)\parallel_{a} (v\parallel_{b} w)
      : E } {{\cal P}_{1} &&&& {\cal P}_{2}}\]
where ${\cal P}_{1} \equiv $
\[  \infer[^{1}]{ u\parallel_{b} w : E}{ \infer*{u:E}{\Gamma
          _{0}, [a:A \IMPL B]^{3}, [b:C \IMPL D]^{1}} &
        \infer*{w:E}{\Gamma _{2}, [b:D \IMPL C]^{1}}}  \] 
and ${\cal P}_{2} \equiv$ 
 \[   \infer[^{2}]{v\parallel_{b} w:E}{ \infer*{v:E}{\Gamma _{1}, [a:B \IMPL
          A]^{3}, [b:C \IMPL D]^{2}} & \infer*{w:E}{\Gamma _{2}, [b:D
          \IMPL C]^{2}} }\]
  % \[ \infer[^{3}]{ (u\parallel_{b} w)\parallel_{a} (v\parallel_{b} w)
  %     : E } { \infer[^{1}]{ u\parallel_{b} w : E}{ \infer*{u:E}{\Gamma
  %         _{0}, [a:A \IMPL B]^{3}, [b:C \IMPL D]^{1}} &
  %       \infer*{w:E}{\Gamma _{2}, [b:D \IMPL C]^{1}}} &
  %     \infer[^{2}]{v\parallel_{b} w:E}{ \infer*{v:E}{\Gamma _{1}, [a:B \IMPL
  %         A]^{3}, [b:C \IMPL D]^{2}} & \infer*{w:E}{\Gamma _{2}, [b:D
  %         \IMPL C]^{2}} } }\]
 In which the formulas corresponding to the terms $a$ and $b$ are discharged
above the appropriate premise of the $\COM$ rule applications. All open
hypotheses of the second derivation also occur in the first
derivation, and thus the free variables of
$(u\parallel_{a} v)\parallel_{b} w$
 and 
$ (u\parallel_{b}
w)\parallel_{a} (v\parallel_{b} w)$
are the same.

\item $w \parallel_{b} (u\parallel_{a} v) \mapsto (w\parallel_{b}
u)\parallel_{a} (w\parallel_{b} v)$, if the communication
complexity of $b$ is greater than $0$.

A term 
$w \parallel_{b} (u\parallel_{a} v)$
corresponds to a derivation 
  \[ \vcenter{\infer[^{2}]{ w \parallel_{b} (u\parallel_{a} v) : E }
    {\deduce{w:E}{{\cal P}_{1}} &&&&& \deduce{u \parallel_{a} v :
        E}{{\cal P}_{2}}}} \qquad \text{where}\]
\[ {\cal P}_{1} \quad \equiv \quad\vcenter{ \infer*{w:E}{\Gamma _{2},
      [b:D \IMPL C]^{2}}} \]
\[{\cal P}_{2} \equiv \qquad \vcenter{\infer[^{1}]{u \parallel_{a} v : E}{ \infer*{u:E}{\deduce{
[b:C \IMPL D]^{2}}{\Gamma _{0}, [a:A \IMPL B]^{1},}} &&& \infer*{v:E}{\deduce{[b:C
\IMPL D]^{2}}{\Gamma _{1}, [a:B \IMPL A]^{1},}}}}\]
Here we represent by
$\Gamma_{0}, \Gamma_{1}$ and $\Gamma_{2}$ the open hypotheses of the derivation.
Hence, the term  $(w\parallel_{b}
u)\parallel_{a} (w\parallel_{b} v)$
corresponds to
  \[ \vcenter{\infer[^{3}]{ (w\parallel_{b} u)\parallel_{a} (w\parallel_{b} v)
      : E } {{\cal P}_{1} && {\cal P}_{2}}} \qquad \text{where}\]
 
\[{\cal P}_{1} \; \equiv \;   \vcenter{\infer[^{1}]{ w\parallel_{b} u : E}{
        \infer*{w:E}{\Gamma _{2}, [b:D \IMPL C]^{1}} & \infer*{u:E}{\Gamma
          _{0}, [a:A \IMPL B]^{3}, [b:C \IMPL D]^{1}}}}  \] 
 \[{\cal P}_{2} \; \equiv \;   \vcenter{   \infer[^{2}]{w\parallel_{b} v:E}{ \infer*{w:E}{\Gamma _{2}, [b:D
          \IMPL C]^{2}} & \infer*{v:E}{\Gamma _{1}, [a:B \IMPL
          A]^{3}, [b:C \IMPL D]^{2}}} }\]
 In which the formulas corresponding to the terms $a$ and $b$ are discharged
above the appropriate premise of the $\COM$ rule application. The open
hypotheses of the second derivation also occur in the first
derivation, and thus the free variables of
the terms $w \parallel_{b} (u\parallel_{a} v)$ and $ (w\parallel_{b}
u)\parallel_{a} (w\parallel_{b} v)$
are the same.

\item $u\parallel_{a}v \mapsto u$, if $a$ does not occur in $u$.

A term 
$u\parallel_{a}v$
corresponds to a derivation 
\[\infer[^{1}]{ u\parallel_{a} v : C}{ \infer*{u:C}{\Gamma _{1}} &
\infer*{v:C}{\Gamma _{2}, [a:B \IMPL A]^{1}}} \] where $\Gamma _{1}$
and $\Gamma_{2}$ are all open hypotheses and no hypothesis $a: A \IMPL
B$ is discharged above $u: C$ by the considered $\COM$ application
because $a$ does not occur in $u$.  The term $u$ corresponds to the
 branch rooted at $u:C$, the open hypotheses of which are clearly
a subset of the open hypotheses of the whole derivation. Indeed the
$\COM$ application in the derivation corresponding to
$u\parallel_{a}v$ is redundant because $C$ can be derived from the
hypotheses $\Gamma _{1}$ alone. Thus, the free variables of $u$ are a
subset of the free variables of $u\parallel_{a}v$.

% \[\vcenter{\infer*{u:C}{\Gamma_{1}}}\]which is a

\item $u\parallel_{a}v \mapsto v$, if $a$ does not occur in $v$.

This case is symmetric to the previous one.

\item $\mathcal{C}[a\, u]\parallel_{a} \mathcal{D}[a\, v]\ \mapsto\
(\mathcal{D}[u^{b\langle
\sq{z}\rangle / \sq{y}}] \parallel_{a} \mathcal{C}[a\, u] )\, \parallel_{b}\, (\mathcal{C}[v^{b\langle
\sq{y}\rangle / \sq{z}}]\parallel_{a} \mathcal{D}[a\, v])$ where
$\mathcal{C}, \mathcal{D}$ are simple contexts; the displayed occurrences of $a$ are the rightmost both in $\mathcal{C}[a\,u]$ and in $\mathcal{D}[a\,v ]$ and
$b$ is fresh;
 $\sq{y}$ the sequence
of the free variables of $u$ which are bound in $\mathcal{C}[au]$;
$\sq{z}$ the sequence of the free variables of $v$ which are bound in
$\mathcal{D}[av]$; the communication complexity of $a$ is greater than $0$.

We introduce some notation in order to compactly represent the
handling of conjunctions of hypotheses corresponding to $\sq{y}$ and
$\sq{z}$.

\noindent \emph{Notation}.  For any finite set of formulas $\Theta =
\{\theta_{1} , \dots , \theta_{k}\}$ we define
  \[ \vcenter{ \infer=[\ICONG]{\SET\Theta}{\Theta} } \qquad \qquad \text{as} \qquad \qquad
\vcenter{ \infer{ \SET\Theta } { \infer* { } { \infer
{ } { \dots & \infer { \theta_{3} \ET \theta_{2} \ET
\theta_{1} } { \theta_{3} & \infer { \theta_{2} \ET \theta_{1}
} { \theta_{2} & \theta_{1} } } } } } }
  \] 
and $\vcenter{\infer=[\ECONG]{{\cal
      P}}{\SET\Theta}}$
as the derivation ${\cal P}$ in which all hypotheses contained in
$\Theta$ are derived as follows
  \[ \infer{\theta_{1}}{\SET\Theta} \qquad \cdots \qquad
\infer{\theta_{k}}{\infer*{\theta_{k-1} \ET
\theta_{k}}{\infer{\theta_{2} \ET \dots \ET
\theta_{k}}{\SET\Theta}}}
  \]

A term 
$\mathcal{C}[a\, u]\parallel_{a} \mathcal{D}[a\, v]$
corresponds to a derivation of the form 
\[
  \infer[^{1}]{\mathcal{C}[a\, u]\parallel_{a} \mathcal{D}[a\, v]: E}{
    \deduce{\mathcal{C}[a\, u] : E}{
      \deduce{
        {\cal P}_{3}
      }
      {
        \infer{
          au:B
        }
        {
           [a: A \IMPL B]^{1} & \deduce{
            u:A
          }
          {
            \deduce{
              {\cal P}_{1}
            } 
            {
              \sq{y} :\Gamma
            }
          }
        }
      }
    }
&&&&
\deduce{
  \mathcal{D}[a\, v]:E
}
{
  \deduce{
    {\cal P}_{4}
  }
  {
    \infer{
      av:A
    }
    {
      [a: B \IMPL A] ^{1} & \deduce{
        v:B
      }
      {
        \deduce{
          {\cal P}_{2}
        }
        {
          \sq{z} : \Delta
        }
      }
    }
  }
}
}
\] 
% \[
% \infer[A,B]{C}{\deduce{C}{\deduce{{\cal P}_{3}}{\infer[u.i.\ ]{B}{\deduce{A}{\deduce{{\cal P}_{1}}{\Gamma}}}}}
%   & \deduce{C}{\deduce{{\cal P}_{4}}{\infer[u.i.\ ]{A}{\deduce{B}{\deduce{{\cal P}_{2}}{\Delta}}}}}}
% \] 
where $\Gamma$ and $\Delta$ are the open hypotheses of ${\cal P}_{1}$
discharged in ${\cal P}_{3}$, respectively, the open hypotheses of
${\cal P}_{2}$ discharged in ${\cal P}_{4}$; thus $\Gamma$ corresponds to
the sequence $\sq{y}$ that contains the free variables of $u$ which
are bound in $\mathcal{C}[au]$, while $\Delta$ corresponds to the
sequence $\sq{z}$ of all the free variables of $v$ which are bound in
$\mathcal{D}[av]$.  No hypotheses $a: A \IMPL B$ and $a:
B \IMPL A$ are discharged neither in ${\cal P}_{1}$ nor in ${\cal P}_{2}$ by the
considered $\COM$ application because $a$ does not occur neither in
$u$ nor in $v$.

The term $(\mathcal{D}[u^{b\langle
\sq{z}\rangle / \sq{y}}] \parallel_{a} \mathcal{C}[a\, u])\, \parallel_{b}\, (\mathcal{C}[v^{b\langle
\sq{y}\rangle / \sq{z}}]\parallel_{a} \mathcal{D}[a\, v])$ then corresponds
to the derivation
  \[ \infer[^{3}] {( \mathcal{D}[u^{b\langle
\sq{z}\rangle / \sq{y}}] \parallel_{a} \mathcal{C}[a\, u] )\, \parallel_{b}\, (\mathcal{C}[v^{b\langle
\sq{y}\rangle / \sq{z}}]\parallel_{a} \mathcal{D}[a\, v]) : E}
{\deduce{\mathcal{D}[u^{b\langle
\sq{z}\rangle / \sq{y}}] \parallel_{a} \mathcal{C}[a\, u] : E}{{\cal P}_{1}} &&& \deduce{\mathcal{C}[v^{b\langle \sq{y}\rangle /
        \sq{z}}]\parallel_{a} \mathcal{D}[a\, v] : E}{{\cal P}_{2}}}
  \]
where  ${\cal P}_{1} \equiv$
  \[\infer[^{1}] {\mathcal{C}[a\, u]\parallel_{a}
      \mathcal{D}[u^{b\langle \sq{z}\rangle / \sq{y}}] : E}
    {
      \deduce{\mathcal{D}[u^{b\langle \sq{z}\rangle / \sq{y}}] :
        E}{\deduce{{\cal P}_{4}}{\deduce{u^{b\langle \sq{z}\rangle /
              \sq{y}}:A}
          {\infer=[\ECONG]{{\cal P}_{1}^{b\langle \sq{z} \rangle , \sq{y}}}
            {\infer{b\langle \sq{z} \rangle:\SET\Gamma}{[b:
                  \SET\Delta \IMPL \SET\Gamma]^{3} &
                  \infer=[\ICONG]{\langle \sq{z}
                    \rangle:\SET\Delta}{\sq{z}: \Delta} }}}}}
&
\deduce{\mathcal{C}[a\, u] : E} { \deduce{ {\cal P}_{3} } {
          \infer{ au:B } { [a: A \IMPL B]^{1} & \deduce{ u:A } {
              \deduce{ {\cal P}_{1} } { \sq{y}:\Gamma } } } } } 
}
  \]
and ${\cal P}_{2} \equiv$
  \[ \infer[^{2}] {\mathcal{C}[v^{b\langle \sq{y}\rangle /
        \sq{z}}]\parallel_{a} \mathcal{D}[a\, v] : E} {
      \deduce{\mathcal{C}[v^{b\langle \sq{y}\rangle / \sq{z}}]:E}
      {\deduce{{\cal P}_{3}}{\deduce{v^{b\langle \sq{y}\rangle /
        \sq{z}}:B} {\infer=[\ECONG]{ {\cal P}_{2}^{b\langle \sq{y} \rangle , \sq{z}} } {\infer{b \langle \sq{y} \rangle : \SET\Delta}{ [b:
                  \SET\Gamma \IMPL \SET\Delta]^{3} &
                  \infer=[\ICONG]{\langle \sq{y} \rangle
                    :\SET\Gamma}{\sq{y}:\Gamma} } } } } } &
      \deduce{ \mathcal{D}[a\, v]:E } { \deduce{ {\cal P}_{4} } {
          \infer{av:A } { [a: B \IMPL A] ^{2} & \deduce{ v:B } {
              \deduce{ {\cal P}_{2} } { \sq{z}:\Delta } } } } }}
  \]

% \textbf{\color{red}NEED WE ADD A CONDITION?}
% Given that in the terms $\mathcal{C}[a\, u]$ and
% $\mathcal{D}[a\, v]$  no $\parallel$ operator occurs, no hypothesis is
% discharged by a $\COM$ application inside the derivation. Hence no
% element of $\Gamma$ or $\Delta$ is a formula discharged by a $\COM$ application.  

%  $\mathcal{C}[a\, u]\parallel_{a}
% \mathcal{D}[u^{b\langle \sq{z}\rangle / \sq{y}}]$ and $\mathcal{C}[v^{b\langle \sq{y}\rangle / \sq{z}}]\parallel_{a}
% \mathcal{D}[a\, v]$ 

The resulting derivation is well defined and only needs the open
hypotheses of the derivation corresponding to the term
$\mathcal{C}[a\, u]\parallel_{a} \mathcal{D}[a\, v]$.  First, the
derivations corresponding to $\mathcal{C}[a\, u]$ and $\mathcal{D}[a\,
v]$ are the same as before the reduction. Consider then the
derivations corresponding to $\mathcal{D}[u^{b\langle \sq{z}\rangle /
\sq{y}}]$ and $\mathcal{C}[v^{b\langle \sq{y}\rangle / \sq{z}}]$. The
hypothesis needed in $ {\cal P}_{1}$ and ${\cal P}_{2}$ can be derived
by $(e\ET)$ from $\SET\Gamma$ and $\SET\Delta$, respectively. The
formulas $\SET\Gamma$ and $\SET\Delta$ can in turn be derived using
$b:\SET\Delta \IMPL \SET\Gamma $ and $b: \SET\Gamma \IMPL \SET\Delta$
along with the formulas $\SET\Delta$ and $\SET\Gamma$, in the
respective order. Thus it is possible to discharge the elements of
$\Gamma$ and $\Delta$ in the derivations ${\cal P}_{3}$ and ${\cal
P}_{4}$, respectively, exactly as in the redex derivation
corresponding to $\mathcal{C}[a\, u]\parallel_{a} \mathcal{D}[a\, v]$.
\end{enumerate}
\end{proof}

%%%% FINE  2

%%%%% INIZIO 3

\noindent \textbf{Proposition~\ref{proposition-boundhyp}}~(Bound Hypothesis Property)\textbf{.}
% \begin{proposition}[Bound Hypothesis Property]
%   \ref{proposition-boundhyp}
Suppose
$$x_{1}^{A_{1}}, \ldots, x_{n}^{A_{n}}\vdash t: A$$
$t\in\nf$ is a simply typed $\lambda$-term and  $z: B$ a variable occurring bound in $t$. Then one of the following holds:
\begin{enumerate}
\item $B$ is a proper subformula of a prime factor of $A$.
\item $B$ is a strong subformula of one among $A_{1},\ldots, A_{n}$.
%There is a number $k$, with $1\leq k\leq n$, such that $B$ is a proper subformula of $A_{k}$; moreover, %{\color{red}{if $A_{k}$ is a conjunction, then $B$ is a proper subformula of a prime factor of $A_{k}$, and}} if $A_{k}=C\rightarrow D$, then $B$ is a proper subformula of a prime factor of $C$ or $D$. 
\end{enumerate}
% \end{proposition}
\begin{proof}
By induction on $t$.  
\begin{itemize} 
\item $t=x_{i}^{A_{i}}$, with $1\leq i\leq n$. Since by hypothesis $z$ must occur bound in $t$,  this case is impossible.

\item $t=\lambda x^{T} u$, with $A=T\rightarrow U$. If $z=x^{T}$,
since $A$ is a prime factor of itself, we are done. If $z\neq x^{T}$,
then $z$ occurs bound in $u$ and by induction hypothesis applied to
$u: U$, we have two possibilities: i) $B$ is a proper subformula of
a prime factor of $U$ and thus a proper subformula of a prime factor
-- $A$ itself -- of $A$; ii) $B$ already satisfies 2., and we are
done, or $B$ is a strong subformula of $T$, and thus it satisfies 1.

\item $t=\langle u_{1}, u_{2}\rangle$, with $A=T_{1}\land T_{2}$. Then
$z$ occurs bound in $u$ or $v$ and, by induction hypothesis applied to
$u_{1}: T_{1}$ and $u_{2}: T_{2}$, we have two possibilities: i) $B$
is a proper subformula of a prime factor of $T_{1}$ or $T_{2}$, and
thus $B$ is a proper subformula of a prime factor of $A$ as well; ii)
$B$ satisfies 2. and we are done.

\item $t=\efq{P}{u}$, with $A=P$. Then $z$ occurs bound in $u$. Since
$\bot$ has no proper subformula, by induction hypothesis applied to
$u: \bot$, we have that $B$ satisfies 2. and we are done.

\item $t=x_{i}^{A_{i}}\, \xi_{1}\ldots \xi_{m}$, where $m>0$ and each
$\xi_{j}$ is either a term or a projection $\pi_{k}$. Since $z$ occurs
bound in $t$, it occurs bound in some term $\xi_{j}: T$, where $T$ is
a proper subformula of $A_{i}$. By induction hypothesis applied to
$\xi_{j}$, we have two possibilities: i) $B$ is a proper subformula of
a prime factor of $T$ and, by Definition \ref{definition-strongsubf},
$B$ is a strong subformula of $A_{i}$.
%Suppose moveover $A_{i}=C\rightarrow D$. Then either $i=1$, so $T=C$, and 1. holds, or $i>1$, thus $T$ is a subformula of a prime factor of $D$, and 1. holds again.
ii) $B$ satisfies 2. and we are done.
\end{itemize}
\end{proof}

\noindent \textbf{Proposition~\ref{proposition-app}}\textbf{.}
% \begin{proposition}
% \ref{proposition-app}
  Suppose that $t\in \nf$ is a simply typed $\lambda$-term and
$$x_{1}^{A_{1}}, \ldots, x_{n}^{A_{n}}, z^{B}\vdash t: A$$
Then one of the following holds:
% is true:
\begin{enumerate}
\item \emph{Every occurrence of $z^{B}$ in $t$ is of the form
    $z^{B}\, \xi$ for some proof term or projection $\xi$.}
\item \emph{$B=\bot$ or $B$ is a subformula of $A$ or a proper
    subformula of one among the formulas $A_{1}, \ldots, A_{n}$.}
\end{enumerate}
% \end{proposition}

\begin{proof}
By induction on $t$. 
%We reason by cases on
% , according to 
%the shape of $t$. 
\begin{itemize} 
\item $t=x_{i}^{A_{i}}$, with $1\leq i\leq n$. Trivial.
\item $t=z^{B}$. This means that $B=A$, and we are done.
\item $t=\lambda x^{T} u$, with $A=T\rightarrow U$. By induction hypothesis applied to $u: U$, we have two possibilities:  i)  every occurrence of $z^{B}$ in $u$ is of the form $z^{B}\, \xi$, and we are done; ii) $B=\bot$ or $B$ is a  subformula of  $U$,  and hence of $A$, or a  proper subformula of one among the formulas $A_{1}, \ldots, A_{n}$, and we are done again.
\item $t=\langle u_{1}, u_{2}\rangle$, with $A=T_{1}\land T_{2}$. By
induction hypothesis applied to $u_{1}: T_{1}$ and $u_{2}: T_{2}$, we
have two possibilities: i) every occurrence of $z^{B}$ in $u_{1}$ and
$u_{2}$ is of the form $z^{B}\, \xi$, and we are done; ii) $B=\bot$ or
$B$ is a subformula of $T_{1}$ or $T_{2}$, and hence of $A$, or a
proper subformula of one among the formulas $A_{1}, \ldots, A_{n}$,
and we are done again.
\item $t=\efq{P}{u}$, with $A=P$.  By induction hypothesis applied to $u: \bot$, we have two possibilities: i)  every occurrence of $z^{B}$ in $u$ is of the form $z^{B}\, \xi$, and we are done; ii) $B=\bot$  or a  proper subformula of one among 
% the formulas
  $A_{1}, \ldots, A_{n}$, and we are done again.
\item $t=x_{i}^{A_{i}}\, \xi_{1}\ldots \xi_{m}$, where $m>0$ and each
$\xi_{j}$ is either a term or a projection $\pi_{k}$. Suppose there is
an $i$ such that in the term $\xi_{j}: T_{j}$ not every occurrence of
$z^{B}$ in $u$ is of the form $z^{B}\, \xi$. If $B=\bot$, we are
done. If not, then by induction hypothesis $B$ is a subformula of
$T_{j}$ or a proper subformula of one among $A_{1}, \ldots,
A_{n}$. Since $T_{j}$ is a proper subformula of $A_{i}$, in both cases
$B$ is a proper subformula of one among $A_{1}, \ldots, A_{n}$.
\item $t=z^{B}\, \xi_{1}\ldots \xi_{m}$, where $m>0$ and each
$\xi_{i}$ is either a term or a projection $\pi_{j}$. Suppose there is
an $i$ such that in the term $\xi_{i}: T_{i}$ not every occurrence of
$z^{B}$ in $u$ is of the form $z^{B}\, \xi$. If $B=\bot$, we are
done. If not, then by induction hypothesis $B$ is a subformula of
$T_{i}$ or a proper subformula of one among $A_{1}, \ldots,
A_{n}$. But the former case is not possible, since $T_{i}$ is a proper
subformula of $B$, hence the latter holds.
\end{itemize}
\end{proof}

%%%% FINE  4

%%%% INIZIO  5

\noindent \textbf{Proposition~\ref{proposition-parallelform}}~(Parallel Normal Form Property)\textbf{.}

% \begin{proposition}[Parallel Normal Form Property]
%   \ref{proposition-parallelform}
  Suppose $t\in \nf$, then it is parallel form.
  \begin{enumerate}
  \item \emph{Every occurrence of $z^{B}$ in $t$ is of the form
      $z^{B}\, \xi$ for some proof term or projection $\xi$.}
  \item \emph{$B=\bot$ or $B$ is a subformula of $A$ or a proper
      subformula of one among the formulas $A_{1}, \ldots, A_{n}$.}
  \end{enumerate}
% \end{proposition}
\begin{proof}
By induction on $t$. %We reason by cases on
% , according to
%the shape of $t$. 
\begin{itemize}

\item $t$ is a variable $x$. Trivial. 

\item $t=\lambda x\, u$. Since $t$ is normal, $u$ cannot be of the form $v\parallel_{a} w$, otherwise one could apply the permutation
$$t=\lambda x\,(\Ecrom{a}{v}{w})  \mapsto \Ecrom{a}{\lambda x\,v}{\lambda x\, w} $$
and $t$ would not be in normal form. Hence, by induction hypothesis $u$ must be a simply typed $\lambda$-term, QED. \\  

\item $t=\langle u_{1}, u_{2}\rangle$. Since $t$ is normal, neither $u_{1}$ nor $u_{2}$  can be of the form $v\parallel_{a} w$, otherwise one could apply one of  the permutations
$$\langle v \parallel_{a} w,\, u_{2}\rangle \mapsto \langle v, u_{2}\rangle \parallel_{a} \langle w, u_{2}\rangle$$
$$\langle u_{1}, \,v \parallel_{a} w\rangle \mapsto \langle u_{1}, v\rangle \parallel_{a} \langle u_{1}, w\rangle$$
and $t$ would not be in normal form. Hence, by induction hypothesis $u_{1}$ and $u_{2}$ must 
% both 
be simply typed $\lambda$-terms, QED.

\item $t=u\, v$. Since $t$ is normal, neither $u$ nor $v$  can be of the form $w_{1}\parallel_{a} w_{2}$, otherwise one could apply one of  the permutations
$$(\Ecrom{a}{w_{1}}{w_{2}})\, v \mapsto \Ecrom{a}{w_{1}\, v}{w_{2}\, v}$$
$$u\, (\Ecrom{a}{w_{1}}{w_{2}})  \mapsto \Ecrom{a}{u\, w_{1}}{u\, w_{2}}$$
and $t$ would not be in normal form. Hence, by induction hypothesis $u_{1}$ and $u_{2}$ must 
% both
be simply typed $\lambda$-terms, QED.

\item $t=\efq{P}{u}$. Since $t$ is normal, $u$  cannot be of the form $w_{1}\parallel_{a} w_{2}$, otherwise one could apply  the permutation
$$\efq{P}{\Ecrom{a}{w_{1}}{w_{2}}}  \mapsto \Ecrom{a}{\efq{P}{w_{1}}}{\efq{P}{w_{2}}}$$
and $t$ would not be in normal form. Hence, by induction hypothesis $u_{1}$ and $u_{2}$ must 
% both
be simply typed $\lambda$-terms, QED. 

\item $t=u\, \pi_{i}$. Since $t$ is normal,  $u$  cannot be of the form $v\parallel_{a} w$, otherwise one could apply  the permutation
$$(\Ecrom{a}{v}{w})\pi_{i}  \mapsto \Ecrom{a}{v\pi_{i}}{w\pi_{i}} $$
and $t$ would not be in normal form. Hence, by induction hypothesis $u$ must be a simply typed $\lambda$-term, which is the thesis.
\item $t=u\parallel_{a} v$. By induction hypothesis the thesis holds for $u$ and $v$ and hence trivially for $t$.  
\end{itemize}
\end{proof}

%%%% FINE 5

%%%%% INIZIO 6

\noindent \textbf{Theorem~\ref{theorem-subformula}}~(Subformula Property)\textbf{.}
% \begin{theorem}[Subformula Property]\ref{theorem-subformula}
Suppose
$$x_{1}^{A_{1}}, \ldots, x_{n}^{A_{n}}\vdash t: A$$
and $t\in \nf$. %and  for  and each $x_{i}^{A_{i}}$, for $1\leq i\leq n$, is a $\lambda$-variable. 
Then: 
\begin{enumerate}
\item
For each communication variable $a$ occurring bound in  $t$ and with communication kind $B, C$, the prime factors of $B$ and $C$ are proper subformulas of the formulas $A_{1}, \ldots, A_{n}, A$. 
\item The type of any subterm of $t$ which is not a bound communication variable is either a subformula or a conjunction of subformulas of the formulas $A_{1}, \ldots, A_{n}, A$. 
\end{enumerate}
% \end{theorem}
\begin{proof} By Proposition \ref{proposition-parallelform}
$t = t_{1}\parallel_{a_{1}} t_{2}\parallel_{a_{2}}\ldots \parallel_{a_{n}} t_{n+1}$
and each term $t_{i}$, for $1\leq i\leq n+1$, is a simply typed $\lambda$-term.
By induction on $t$. 
%We reason by cases, according to the shape of $t$. 
\begin{itemize} 
\item $t=x_{i}^{A_{i}}$, with $1\leq i\leq n$. Trivial. 

%\item $t=\lambda x^{T} u$, with $A=T\rightarrow U$. By induction hypothesis applied to $u: U$, for each communication variable $a^{B\rightarrow C}$ of $u$, the prime factors of $B$ and $C$ are  proper subformulas of the formulas $T, A_{1},  \ldots, A_{n},  U$ and thus of the formulas $A_{1},  \ldots, A_{n},  A$; moreover, the type of any subterm of $u$ which is not a communication variable is either a subformula or a conjunction of subformulas of the formulas $T, A_{1}, \ldots, A_{n},  U$ and hence of the formulas $A_{1},  \ldots, A_{n},  A$.\\
\item $t=\lambda x^{T} u$, with $A=T\rightarrow U$. By Proposition \ref{proposition-parallelform}, $t$ is a simply typed $\lambda$-term, so $t$ contains no bound communication variable. Moreover, by induction hypothesis applied to $u: U$, the type of any subterm of $u$ which is not a bound communication variable is either a subformula or a conjunction of subformulas of the formulas $T, A_{1}, \ldots, A_{n},  U$ and therefore of the formulas $A_{1}, \ldots, A_{n},  A$.

%\item $t=\langle u_{1}, u_{2}\rangle$, with $A=T_{1}\land T_{2}$. By induction hypothesis applied to $u_{1}: T_{1}$ and $u_{2}: T_{2}$, for each communication variable $a^{B\rightarrow C}$ of $t$, the prime factors of $B$ and $C$ are proper  subformulas of the formulas $ A_{1},  \ldots, A_{n}, T_{1}, T_{2}$ and therefore of the formulas $A_{1},  \ldots, A_{n},  A$; moreover, the type of any subterm of $u$ which is not a communication variable is either a subformula or a conjunction of subformulas of the formulas $ A_{1}, \ldots, A_{n}, T_{1}, T_{2}$ and therefore of the formulas $A_{1},  \ldots, A_{n},  A$.\\

\item $t=\langle u_{1}, u_{2}\rangle$, with $A=T_{1}\land T_{2}$. By Proposition \ref{proposition-parallelform}, $t$ is a simply typed $\lambda$-term, so $t$ contains no bound communication variable. Moreover, by induction hypothesis applied to $u_{1}: T_{1}$ and $u_{2}: T_{2}$, the type of any subterm of $u$ which is not a bound communication variable is either a subformula or a conjunction of subformulas of the formulas $ A_{1}, \ldots, A_{n}, T_{1}, T_{2}$ and hence of the formulas $A_{1}, \ldots, A_{n},  A$.

\item $t= u_{1}\parallel_{b} u_{2}$.  Let $C, D$ be the communication
  kind of $b$, we first show that the communication
    complexity of $b$ is % greater than
  $0$.
 %if we manage to derive a contradiction,  by Definition \ref{definition-comcomplexity}, we obtain the thesis. 
 We reason by contradiction and assume that it is greater than $0$. Since $t\in\nf$ by hypothesis, it follows
 from Prop.~\ref{proposition-parallelform} that $u_{1}$ and $u_{2}$ are either
simply typed $\lambda$-terms or of the form $v\parallel_{c} w$. The
second case is not possible, otherwise a permutation reduction could
be applied to $t\in \nf$. Thus $u_{1}$ and $u_{2}$ are simply typed
$\lambda$-terms. Since the communication complexity of $b$ is greater
than $0$, the types $C\rightarrow D$ and $D\rightarrow C$ are not
subformulas of $A_{1}, \ldots, A_{n}, A$. By Prop.~\ref{proposition-app}, every occurrence of $b^{C\rightarrow D}$ in
$u_{1}$ is of the form $b^{C\rightarrow D} v$ and every occurrence of
$b^{D\rightarrow C}$ in $u_{2}$ is of the form $b^{D\rightarrow C} w$.
Hence, we can write
$$u_{1}=\mathcal{C}[b^{C\rightarrow D} v] \qquad u_{2}=\mathcal{D}[b^{D\rightarrow C} w]$$
where $\mathcal{C}, \mathcal{D}$ are simple contexts and $b$ is rightmost. Hence a cross reduction of $t$ can be performed,
which contradicts the fact that $t\in\nf$.  Since we have established
that the communication complexity of $b$ is $0$, the prime factors of
$C$ and $D$ must be proper subformulas of $A_{1}, \ldots, A_{n},
A$. Now, by induction hypothesis applied to $u_{1}: A$ and $u_{2}: A$,
for each communication variable $a^{F\rightarrow G}$ occurring bound
in $t$, the prime factors of $F$ and $G$ are proper subformulas of the
formulas $ A_{1},  \ldots, A_{n}, A, C\rightarrow D, D\rightarrow C$
and thus of the formulas $A_{1}, \ldots, A_{n}, A$; moreover, the
type of any subterm of $u_{1}$ or $u_{2}$ which is not a communication
variable is either a subformula or a conjunction of subformulas of the
formulas $ A_{1}, \ldots, A_{n}, C\rightarrow D, D\rightarrow C$ and
thus of the formulas $A_{1},  \ldots, A_{n}, A$.

%\item  $t=x_{i}^{A_{i}}\, \xi_{1}\ldots \xi_{m}$, where $m>0$ and each $\xi_{j}: T_{j}$ is either a term or a projection $\pi_{k}$ and $T_{j}$ is a subformula of $A_{i}$. By induction hypothesis applied to each $\xi_{j}: T_{j}$, for every communication variable $a^{B\rightarrow C}$ of $t$, the prime factors of $B$ and $C$ are proper subformulas of the formulas $ A_{1},  \ldots, A_{n}, T_{1}, \ldots, T_{m}$ and therefore of the formulas $A_{1},  \ldots, A_{n},  A$; moreover, the type of any subterm of $t$ which is not a communication variable is either a subformula or a conjunction of subformulas of the formulas $ A_{1}, \ldots, A_{n}, T_{1}, \ldots, T_{m}$ and therefore of the formulas $A_{1},  \ldots, A_{n},  A$.
\item $t=x_{i}^{A_{i}}\, \xi_{1}\ldots \xi_{m}$, where $m>0$ and each
$\xi_{j}: T_{j}$ is either a term or a projection $\pi_{k}$ and
$T_{j}$ is a subformula of $A_{i}$. By Proposition
\ref{proposition-parallelform}, $t$ is a simply typed $\lambda$-term,
so $t$ contains no bound communication variable. By induction
hypothesis applied to each $\xi_{j}: T_{j}$, the type of any subterm
of $t$ which is not a bound communication variable is either a
subformula or a conjunction of subformulas of the formulas $ A_{1},
\ldots, A_{n}, T_{1}, \ldots, T_{m}$ and thus of the formulas
$A_{1},  \ldots, A_{n}, A$.

\item $t=\efq{P}{u}$, with $A=P$. Then $u=x_{i}^{A_{i}}\, \xi_{1}\ldots \xi_{m}$, where $m>0$ and each $\xi_{j}$ is either a term or a projection $\pi_{k}$. Hence, $\bot$ is a subformula of $A_{i}$. Finally, by the proof of the previous case, we obtain the thesis for $t$.
\end{itemize}
\end{proof}

%%%%% FINE 6

%%%%% INIZIO 7

\noindent \textbf{Proposition~\ref{proposition-normpar}}\textbf{.}
% \begin{proposition}\ref{proposition-normpar}
Let $t: A$ be any term. Then $t\mapsto^{*} t'$, where $t'$ is a parallel form. 
% \end{proposition}
\begin{proof}
By induction on $t$. 
%We reason by cases on
% , according to 
%the shape of $t$.
\begin{itemize}
\item  $t$ is a variable $x$. Trivial. 
\item $t=\lambda x\, u$. By induction hypothesis, 
$$u\mapsto^{*} u_{1}\parallel_{a_{1}} u_{2}\parallel_{a_{2}}\ldots \parallel_{a_{n}} u_{n+1}$$
and each term $u_{i}$, for $1\leq i\leq n+1$, is a simply typed $\lambda$-term. Applying $n$ times the permutations
 % reductions, 
we obtain
$$t\mapsto^{*} \lambda x\, u_{1}\parallel_{a_{1}}\lambda x\, u_{2}\parallel_{a_{2}}\ldots \parallel_{a_{n}}\lambda x\, u_{n+1}$$
which is the thesis.
\item $t=u\, v$. By induction hypothesis, 
$$u\mapsto^{*} u_{1}\parallel_{a_{1}} u_{2}\parallel_{a_{2}}\ldots \parallel_{a_{n}} u_{n+1}$$
$$v\mapsto^{*} v_{1}\parallel_{b_{1}} v_{2}\parallel_{b_{2}}\ldots \parallel_{b_{m}} v_{m+1}$$
and each term $u_{i}$ and $v_{i}$, for $1\leq i\leq n+1, m+1$, is a simply typed $\lambda$-term. Applying $n+m$ times the permutations
 % reductions, 
we obtain
\[
\begin{aligned}
t &\mapsto^{*} (u_{1}\parallel_{a_{1}} u_{2}\parallel_{a_{2}}\ldots \parallel_{a_{n}} u_{n+1})\, v \\
&\mapsto^{*}  u_{1}\, v \parallel_{a_{1}} u_{2}\, v \parallel_{a_{2}}\ldots \parallel_{a_{n}} u_{n+1}\, v\\
&\mapsto^{*} u_{1}\, v_{1} \parallel_{b_{1}}  u_{1}\, v_{2}\parallel_{b_{2}}\ldots \parallel_{b_{m}} u_{1}\, v_{m+1} \parallel_{a_{1}} \ldots
\\
& \qquad  \, \ldots \parallel_{a_{n}}  u_{n+1}\, v_{1} \parallel_{b_{1}}  u_{n+1}\, v_{2} \parallel_{b_{2}}\ldots
 \parallel_{b_{m}}  u_{n+1}\, v_{m+1}
\end{aligned}
\]

\item $t=\langle u, v\rangle$. By induction hypothesis, 
$$u\mapsto^{*} u_{1}\parallel_{a_{1}} u_{2}\parallel_{a_{2}}\ldots \parallel_{a_{n}} u_{n+1}$$
$$v\mapsto^{*} v_{1}\parallel_{b_{1}} v_{2}\parallel_{b_{2}}\ldots \parallel_{b_{m}} v_{m+1}$$
and each term $u_{i}$ and $v_{i}$, for $1\leq i\leq n+1, m+1$, is a simply typed $\lambda$-term. Applying $n+m$ times the permutations
 % reductions, 
we obtain
  \[
    \begin{aligned}
      t &\mapsto^{*}\langle u_{1}\parallel_{a_{1}} u_{2}\parallel_{a_{2}}\ldots \parallel_{a_{n}} u_{n+1},\, v \rangle\\
      &\mapsto^{*} \langle u_{1}, v\rangle \parallel_{a_{1}} \langle u_{2}, v \rangle\parallel_{a_{2}}\ldots \parallel_{a_{n}} \langle u_{n+1}, v\rangle\\
      &\mapsto^{*} \langle u_{1}, v_{1}\rangle \parallel_{b_{1}}
      \langle u_{1}, v_{2}
      \rangle\parallel_{b_{2}}\ldots \parallel_{b_{m}} \langle u_{1},
      v_{m+1}\rangle \parallel_{a_{1}} \ldots
      \\
      & \qquad \, \ldots
      \parallel_{a_{n}} \langle u_{n+1},
      v_{1}\rangle \parallel_{b_{1}} \langle u_{n+1}, v_{2}
      \rangle\parallel_{b_{2}}\ldots
      \\
      & \qquad \, \ldots \parallel_{b_{m}} \langle u_{n+1},
      v_{m+1}\rangle
    \end{aligned}
  \]

\item $t=u\, \pi_{i}$. By induction hypothesis, 
$$u\mapsto^{*} u_{1}\parallel_{a_{1}} u_{2}\parallel_{a_{2}}\ldots \parallel_{a_{n}} u_{n+1}$$
and each term $u_{i}$, for $1\leq i\leq n+1$, is a simply typed $\lambda$-term. Applying $n$ times the permutations
% reductions, 
we obtain
$$t\mapsto^{*}  u_{1}\, \pi_{i}\parallel_{a_{1}} u_{2}\,
\pi_{i}\parallel_{a_{2}}\ldots \parallel_{a_{n}} u_{n+1} \, \pi_{i}.$$

\item $t=\efq{P}{u}$. By induction hypothesis, 
$$u\mapsto^{*} u_{1}\parallel_{a_{1}} u_{2}\parallel_{a_{2}}\ldots \parallel_{a_{n}} u_{n+1}$$
and each term $u_{i}$, for $1\leq i\leq n+1$, is a simply typed $\lambda$-term. Applying $n$ times the permutations 
% reductions, 
we obtain
$$t\mapsto^{*}  \efq{P}{u_{1}}\parallel_{a_{1}} \efq{P}{u_{2}}\parallel_{a_{2}}\ldots \parallel_{a_{n}} \efq{P}{u_{n+1}}$$
QED
\end{itemize}
\end{proof}

%%%%%% FINE 7

% that's all folks
\end{document}